\definecolor{Red}{rgb}{0.7,0,0.1}
\definecolor{Green}{rgb}{0,0.7,0}
\def\url@leostyle{%
 \@ifundefined{selectfont}{\def\UrlFont{\sf}}{\def\UrlFont{\scriptsize\ttfamily}}} \makeatother\urlstyle{leo}
\newtheorem{theorem}{Theorem}[section]
\newtheorem{proposition}[theorem]{Proposition}
\newtheorem{lemma}[theorem]{Lemma}
\theoremstyle{definition}
\newtheorem{definition}[theorem]{Definition}
\newtheorem{example}{Example}[section]
\theoremstyle{remark}
\newtheorem{remark}[theorem]{Remark}
\newtheoremstyle{dotless}{}{}{\itshape}{}{\bfseries}{}{ }{}
\theoremstyle{dotless}
\def\cC{\mathcal{C}}
\def\cD{\mathcal{D}}
\def\cE{\mathcal{E}}
\def\cF{\mathcal{F}}
\def\cH{\mathcal{H}}
\def\cI{\mathcal{I}}
\def\cL{\mathcal{L}}
\def\cP{\mathcal{P}}
\def\cQ{\mathcal{Q}}
\def\cR{\mathcal{R}}
\def\cS{\mathcal{S}}
\def\cT{\mathcal{T}}
\def\cU{\mathcal{U}}
\def\cX{\mathcal{X}}
\def\cZ{\mathcal{Z}}
\def\bE{\mathbb{E}}
\def\bF{\mathbb{F}}
\def\bN{\mathbb{N}}
\def\bP{\mathbb{P}}
\def\bQ{\mathbb{Q}}
\def\bR{\mathbb{R}}
\def\1{\mathbbm{1}}
\newcommand{\set}[1]{\left\{#1\right\}} 
\def\d{\mathrm{d}}
\title{Dynamic Conic Finance: Pricing and Hedging in Market Models with Transaction Costs via Dynamic Coherent Acceptability Indices
\\[0.5ex] }
\author{Tomasz R. Bielecki\footnote{Tomasz R. Bielecki and Igor Cialenco acknowledge support from the NSF grant DMS-0908099.}\\
\small{Department of Applied Mathematics,}\\[-0.3ex]
\small{Illinois Institute of Technology,}\\[-0.3ex]
\small{Chicago, 60616 IL, USA}\\[-0.3ex]
\url{bielecki@iit.edu}\\
\and
Igor Cialenco\footnotemark[\value{footnote}]\\[-0.3ex]
\small{Department of Applied Mathematics,}\\[-0.3ex]
\small{Illinois Institute of Technology,}\\[-0.3ex]
\small{Chicago, 60616 IL, USA}\\[-0.3ex]
\url{igor@math.iit.edu} \\
\and
Ismail Iyigunler \\
\small{Department of Applied Mathematics,}\\[-0.3ex]
\small{Illinois Institute of Technology,}\\[-0.3ex]
\small{Chicago, 60616 IL, USA}\\[-0.3ex]
\url{iiyigunl@hawk.iit.edu} \\
\and
Rodrigo Rodriguez \\
\small{Department of Applied Mathematics,}\\[-0.3ex]
\small{Illinois Institute of Technology,}\\[-0.3ex]
\small{Chicago, 60616 IL, USA}\\[-0.3ex]
\url{rrodrig8@hawk.iit.edu}}
\date{First Circulated: May 12, 2012\\
This Version: February 10, 2013\\
Forthcoming in IJTAF}
\begin{document}

\maketitle

\begin{abstract}
In this paper we present a theoretical framework for determining dynamic ask and bid prices of derivatives using the theory of dynamic coherent acceptability indices in discrete time.
We prove a version of the First Fundamental Theorem of Asset Pricing using the dynamic coherent risk measures.
We introduce the dynamic ask and bid prices of a derivative contract in markets with transaction costs.
Based on these results, we derive a representation theorem for the dynamic bid and ask prices in terms of dynamically consistent sequence of sets of probability measures and risk-neutral measures.
To illustrate our results, we compute the ask and bid prices of some path-dependent options using the dynamic Gain-Loss Ratio.
\end{abstract}

{\noindent \small
{\it \bf Keywords:}
dynamic coherent acceptability index, conic finance, dynamic coherent risk measures, transaction costs, dividend paying securities, swap contracts, no-good-deal bounds, fundamental theorems of asset pricing, dynamic bid and ask, dynamic gain-loss ratio, arbitrage pricing, illiquid market \\[.5ex]
{\it \bf MSC2010:} 91B30, 60G30, 91B06, 62P05}


\section{Introduction}
We develop a  framework for narrowing the theoretical spread between ask prices and bid prices of derivative securities in discrete-time market models with transaction costs, using dynamic coherent acceptability indices (DCAIs) that are studied in Bielecki, Cialenco, and Zhang~\cite{BCZ2010}.
Aside from the use of acceptability indices as a tool, our approach is related to the literature studying no-good-deal pricing as a vehicle to narrow the no-arbitrage interval.

We first formulate and prove a no-good-deal version of the fundamental theorem of asset pricing (FTAP) using a \emph{family} of dynamic coherent risk measures associated with a DCAI.
The classic form of FTAP, i.e. the no-arbitrage form of FTAP in frictionless markets, has been established by numerous authors in varying degrees of generality (Harrison and Pliska \cite{HarrisonPliska1981}, Dalang, Morton, and Willinger \cite{DalangMortonWillinger1990}, Schachermayer \cite{Schachermayer1992}, Rogers \cite{Rogers1994}, Kabanov and Kramkov \cite{KabanovKramkov1994}, Jacod and Shiryaev \cite{JacodShiryaev1998}, Kabanov and Stricker \cite{KabanovStricker2001}); for continuous time see Delbaen and Schachermayer \cite{DelbaenSchachermayer1994,DelbaenSchachermayer1996}, Cherny~\cite{Cherny2007e}).
For markets with transaction costs, no-arbitrage versions of the FTAP are proved in Jouini and Kallal~\cite{JouiniKallal1995},  Kabanov and Stricker~\cite{KabanovStricker}, Kabanov, R\'{a}sonyi, and Striker~\cite{KabanovRasonyiStricker2002}, Schachermayer~\cite{Schachermayer2004}, and Bielecki, Cialenco, and Rodriguez~\cite{BieleckiCialencoRodriguez2012}.
In Carr, Geman, and Madan~\cite{CarGemanMadan2001}, the FTAP was formulated and proved in terms of the no strictly acceptable opportunities condition for frictionless markets, and subsequently  Pinar, Salih, and Camci~\cite{PinarSalihCamcl2010} proved a version of the FTAP in the context of the Gain-Loss ratio in markets with proportional transaction costs.
The no-good-deal version of FTAP has been obtained for markets with transaction costs in the context of static coherent risk measures, and for frictionless markets using discrete-time coherent risk measures by Cherny~\cite{Cherny2007} and \cite{Cherny2007c}, respectively.

There is an extensive literature for methods that narrow the theoretical no-arbitrage interval.
One of the widely studied approaches is indifference pricing, which is based on utility maximization.
Specifically, an indifference price is a price at which an agent receives the same expected utility between trading and not trading.
A comprehensive collection of articles related to indifference pricing can be found in Carmona~\cite{CarmonaIndifferencePricing2009}.
However, it is known that the indifference pricing approach has limitations: numerical implementations and explicit calculations for indifference pricing may not be robust, and the resulting bid and ask prices are not necessarily risk-neutral in practice (see for instance Staum~\cite{Staum2007}).
Alternatively, Cochrane and Sa\'{a}-Requejo~\cite{Cochrane2000} introduced the no-good-deal pricing methodology.
In this approach, the arbitrage bounds are narrowed by ruling out deals that are too good---cash flows that have high Sharpe ratios.
This strengthens the no-arbitrage argument by assuming that any investor is willing to accept a good-deal.
In a subsequent papers by Bernardo and Ledoit~\cite{Bernardo2000} and Pinar, Salih, and Camci~\cite{PinarSalihCamcl2010} cash flows are considered good deals if their corresponding  Gain-Loss ratio is high.
The no-good-deal pricing approach has been used in other applications and settings by Carr, Geman, and Madan~\cite{CarGemanMadan2001}, Jaschke and Kuchler~\cite{JaschkeKuchler2001}, Staum~\cite{Staum2004}, Roorda, Schumacher, and Engwerda~\cite{Roorda2005a}, Bj\"{o}rk and Slinko~\cite{Bjork2006}, Kloppel and Schweitzer~\cite{KloppelSchweitzer2007}, Arai and Fukasawa \cite{AraiFukasawa2011}.
The no-good-deal pricing has also been approached via coherent risk measures in Cherny and Madan~\cite{Cherny2006c} and Cherny~\cite{Cherny2007c}.

Several authors studied no-good-deal pricing with either discrete-time or continuous time risk measures.
In Madan, Pistorius, and Schoutens~\cite{MadanPisSch2011}, dynamically consistent bid and ask prices for structured products are derived using nonlinear expectations, and in Bion-Nadal~\cite{BionNadal2009a} and Cherny~\cite{Cherny2007} dynamic bid and ask prices are found via dynamic risk measures.

Cherny and Madan~\cite{MadanCherny2010} proposed the conic finance framework for pricing in incomplete, frictionless markets using static acceptability indices, which are introduced in Cherny and Madan~\cite{ChernyMadan2009}.
The framework is called conic finance because the derivative prices they introduce depend on the direction of trade---the resulting set of cash flows generated by the prices of the derivative is longer a linear space, it is instead a \emph{convex cone}.
However, as with any static pricing technique, their prices may lack a dynamic consistency property.
This drawback renders the static approach inadequate for pricing exotic derivatives such as path-dependent derivatives.
In a  recent study,  Rosazza-Gianin and Sgarra ~\cite{RosazzaGianinSgarra2012} apply  the concepts of dynamic acceptability indices and of $g$-expectation to investigate liquidity risk.

Compared to the papers above, our contributions amount to the following:
\begin{itemize}
\item
Our framework allows for (hedging) cash flows to pay dividends, and be subjected to transaction costs.
In particular, we can apply our no-good-deal pricing approach to the pricing of interest rate swaps and credit default swaps in markets with transaction costs.
\item
We prove a version of the FTAP formulated in terms of a no-good-deal condition.
It is important to  stress that our no-good-deal condition is dynamically consistent in time.
\item
We construct the good-deal ask and bid prices of a derivative which are dynamically consistent, in the sense that they are defined in terms of dynamic coherent acceptability indices.
This allows us to narrow the no-arbitrage pricing interval.
\item
We exemplify the proposed general theory with the dynamic Gain-Loss ratio, which is a particular dynamic coherent acceptability index.
\end{itemize}

This paper is organized as follows.
In Section~\ref{sec:arbANDgooddeals}, we define the no-arbitrage condition and the no-good-deal condition, and then prove the Fundamental Theorem of Good-Deal Pricing.
Next, in Section~\ref{sec:DynamicBidAskviaDCAI}, we define the no-good-deal ask and bid prices, and proceed by proving a representation theorem for them.
Finally, in Section~\ref{sec:PricingDGLR}, we use dynamic Gain-Loss Ratio  to compute the good-deal ask and bid prices for some path-dependent, European-style options.

\section{Arbitrage and good-deals}\label{sec:arbANDgooddeals}

We extensively use the results on dynamic acceptability indices that were obtained in \cite{BCZ2010}.
Thus, we adopt  the mathematical set-up that was used therein.
In particular, we assume that the underlying probability space is finite, an assumption that indeed is made so to simplify the presentation.

Let $T$ be a fixed time horizon, and let $\cT:=\{0, 1, \dots, T\}$. Next, let $(\Omega, \cF_T, \mathbb{F}=(\cF_t)_{t\in \cT}, \bP)$ be the underlying  filtered probability space, and assume that $\Omega=\{\omega_1, \dots, \omega_N\}$, and $\bP$ is of full support.
In what follows, we will denote by $L^0:=L^0(\Omega, \cF_T, \bF, \bP)$ the set of all $\bF$-adapted processes.

On this probability space, we consider a market consisting of a savings account $B$ and of $N$ traded securities satisfying the following properties:
\begin{itemize}
\item
The savings account can be purchased and sold according to the process \\ $B:=\big((\prod_{s=0}^t[1+r_s])\big)_{t=0}^T$, where $(r_t)_{t=0}^T$ is  a nonnegative process specifying the risk-free rate.
\item
 The $N$ securities can be purchased according to the ex-dividend price process \\
 $P^{ask}:=\big((P^{ask,1}_t, \dots, P^{ask, N}_t)\big)_{t=0}^T$; the associated (cumulative) dividend process is denoted by $A^{ask}:=\big((A^{ask, 1}_t, \dots, A^{ask, N}_t)\big)_{t=1}^T$.
\item
The $N$ securities  can be sold  according to the ex-dividend price process \\
$P^{bid}:=\big((P^{bid,1}_t, \dots, P^{bid, N}_t)\big)_{t =0}^T$; the associated (cumulative) dividend process is denoted by $A^{bid}:=\big((A^{bid, 1}_t, \dots, A^{bid, N}_t)\big)_{t=1}^T$.
\end{itemize}
\noindent
We assume that the processes introduced above are adapted.
Unless stated otherwise, all inequalities and equalities involving vector-valued processes are understood coordinate-wise.
In what follows, we shall denote by $\Delta $ the backward difference operator: $\Delta X_t:=X_t-X_{t-1}$, and we take the convention that $A^{ask}_0=A^{bid}_0=0$.

\begin{remark}
For any $t =1, 2, \dots, T$ and $j=1, 2, \dots, N$, the random variable $\Delta A^{ask, j}_t$ is interpreted as amount of dividend associated with holding a \emph{long} position in security $j$ from time $t-1$ to time $t$. Respectively,  the random variable $\Delta A^{bid, j}_t$ is interpreted as amount of dividend associated with holding a \emph{short} position in security $j$ from time $t-1$ to time $t$.
\end{remark}

Let us illustrate the processes introduced above in the context of a Credit Default Swap  (CDS) contract.
\begin{example} A CDS contract is a contract between two parties, a \emph{protection buyer} and a \emph{protection seller}, in which the protection buyer pays periodic fees to the protection seller in exchange for some payment made by the protection seller to the protection buyer if a pre-specified credit event of a reference entity occurs.
Let $\tau$ be the nonnegative random variable specifying the time of the credit event of the reference entity.
Suppose the CDS contract admits the following specifications: initiation date $t=0$, expiration date $t=T$, nominal value \$1, and the loss-given-default is given by a nonnegative scalar $\delta$ and is paid at default.
Typically, CDS contracts are traded on over-the-counter markets in which dealers quote CDS spreads to investors.
Suppose that the CDS spread quoted by the dealer to sell a CDS contract is $\kappa^{bid}$, and the CDS spread quoted by the dealer to buy a CDS contract is $\kappa^{ask}$.
For the CDS contract specified above, the cumulative dividend processes $A^{ask}$ and $A^{bid}$  are defined as follows
\begin{align*}
 A^{ask}_t&:=1_{\{\tau \leq t\}} \delta -\kappa^{ask} \sum_{u=1}^t 1_{\{u <\tau\}} \quad \text{and} \quad  A^{bid}_t:=1_{\{\tau \leq t\}} \delta -\kappa^{bid} \sum_{u=1}^t 1_{\{u <\tau\}}
\end{align*}
for $t \in \cT$.
In this case, the ex-dividend ask and bid price processes $P^{bid}$ and $P^{ask}$ specify the mark-to-market values of the CDS for the protection seller and protection buyer, respectively, from the perspective of the protection buyer.
\end{example}

From now on, we make the following natural standing assumption.

\medskip

\noindent {\bf Assumption (A):}
$\ P^{ask} \geq P^{bid}$ \textup{and} $\Delta  A^{ask} \leq \Delta  A^{bid}$.

\medskip

\subsection{Self-financing trading strategies}\label{subsec:Self-financing}

A \emph{trading strategy} is a predictable process $\phi:=\big((\phi^0_t, \phi^1_t, \dots, \phi^N_t)\big)_{t=1}^T$, where $\phi^j_t$ is interpreted as the number of units of security $j$ held from time $t-1$ to time $t$.
We take the convention that $\phi^0$ corresponds to the holdings in the savings account $B$, and  $\phi_0=(0,\ldots,0)$.

\begin{definition}\label{Def:WealthProcess}
The \emph{wealth process} $V(\phi)$ associated with a trading strategy $\phi$ is defined as
\[V_t(\phi) = \left\{
\begin{array}{l l}
\phi^{0}_{1}+\sum_{j=1}^N \1_{\{\phi^j_{1} \geq 0\}}\phi^{j}_{1}P^{ask, j}_0+\sum_{j=1}^N  \1_{\{\phi^j_{1} < 0\}}\phi^{ j}_{1}P^{bid, j}_0, &\mbox{if $t=0$},\\[.05in]
\phi^{0}_{t}B_{t}+\sum_{j=1}^N \1_{\{\phi^j_{t} \geq 0\}}\phi^{j}_{t}(P^{bid, j}_t+\Delta A^{ask, j}_t)\\
 \qquad+\sum_{j=1}^N  \1_{\{\phi^j_{t} < 0\}}\phi^{ j}_{t}(P^{ask, j}_t+\Delta A^{bid, j}_t), &\mbox{if $1\leq t\leq T$}.\\[.05in]
  \end{array} \right. \]
\end{definition}

\begin{remark}
(i)  It is important to note the difference in the use of bid and ask prices, in the above definition, between the time $t=0$ and the time $t\in\set{1, \ldots, T}$.
At time $t=0$, $V_0(\phi)$ is interpreted as the cost of setting up the portfolio associated with $\phi$.
   For $t=1, \ldots, T$, the wealth process $V_t(\phi)$ equals the sum of the liquidation value of the portfolio associated with trading strategy $\phi$ before any time $t$ transactions and the dividends associated with $\phi$ from time $t-1$ to $t$.\\
   (ii) Also note that, due to the presence of transaction costs, the wealth process $V$ may not be linear in its argument, i.e. $V(\phi)+V(\psi)\neq V(\phi+\psi)$, and $V(\alpha\phi)\neq \alpha V(\phi)$ for $\alpha \in\bR$,  and some trading strategies $\phi, \psi$.
This is the major difference from the frictionless setting.
\end{remark}

We proceed by introducing the self-financing condition, which is appropriate in the context of this paper.

\begin{definition}\label{Def:SelfFinancing}
A trading strategy $\phi$ is self-financing if
\begin{align}\label{Eq: Phi}
B_{t}\Delta  \phi^{0}_{t+1}+\sum_{j=1}^NP^{ask, j}_{t}\1_{\{\Delta  \phi^j_{t+1} \geq 0\}} \Delta  \phi^{ j}_{t+1}+\sum_{j=1}^NP^{bid, j}_{t}\1_{\{\Delta  \phi^j_{t+1} < 0\}} \Delta  \phi^{ j}_{t+1}\\ \notag
  \qquad \qquad =  \sum_{j=1}^N \phi^{ j}_{t}\1_{\{ \phi^j_{t} \geq 0\}} \Delta  A^{ask, j}_{t}+\sum_{j=1}^N \phi^{j}_{t}\1_{\{ \phi^j_{t} < 0\}} \Delta  A^{bid, j}_{t}\notag
  \end{align}
  for all $t=1, 2,  \ldots, T-1$.
\end{definition}
The self-financing condition guarantees that no money can flow in or out of the portfolio.

In what follows, we shall work with the discounted processes:  $V^*(\phi):=B^{-1}V(\phi)$ for all trading strategies $\phi$.
The next result gives a useful characterization of the self-financing condition in terms of the wealth process.
For the proof we refer to Bielecki, Cialenco, and Rodriguez~\cite{BieleckiCialencoRodriguez2012}.

\begin{lemma}\label{Lemma:DiscountedSelfFinancing}
A trading strategy $\phi$ is self-financing if and only if the wealth process $V(\phi)$ satisfies the following equality
\begin{align*}
&V^*_t(\phi)=V_0(\phi)+\sum_{j=1}^N \1_{\{\phi^j_t \geq 0\}}\phi^{j}_t B^{-1}_tP^{bid, j}_t
+\sum_{j=1}^N \1_{\{\phi^j_t < 0\}}\phi^{ j}_tB^{-1}_tP^{ask, j}_t\\
& \quad  -\sum_{j=1}^N\sum_{u=1}^t \1_{\{\Delta \phi^j_u \geq 0\}}\Delta \phi^j_{u}B^{-1}_{u-1}P^{ask, j}_{u-1}
-\sum_{j=1}^N\sum_{u=1}^t\1_{\{\Delta \phi^j_u < 0\}}\Delta \phi^j_u B^{-1}_{u-1}P^{bid, j}_{u-1}\\
& \quad +\sum_{j=1}^N \sum_{u=1}^t\1_{\{\phi^j_u \geq 0\}}\phi^{ j}_uB^{-1}_u \Delta  A^{ask,j}_u
+\sum_{j=1}^N\sum_{u=1}^t\1_{\{\phi^j_u < 0\}}\phi^{ j}_uB^{-1}_u \Delta A^{bid,j}_u
\end{align*}
for $t =1, 2, \dots, T$.
\end{lemma}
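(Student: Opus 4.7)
The statement is an equivalence, so I would prove both directions simultaneously by letting $W_t(\phi)$ denote the right-hand side of the displayed identity and showing that $W_t(\phi)=V^*_t(\phi)$ holds for every $t\in\{1,\ldots,T\}$ if and only if the self-financing condition~\eqref{Eq: Phi} holds at every time $t=1,\ldots,T-1$. The natural route is induction on $t$: compute the one-step increment $W_{t+1}(\phi)-W_t(\phi)$ from the telescoping form of $W$, compute $V^*_{t+1}(\phi)-V^*_t(\phi)$ directly from Definition~\ref{Def:WealthProcess}, and show that the two increments coincide precisely when \eqref{Eq: Phi} holds at time $t$.

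For the base case $t=1$, the cumulative sums in $W_1(\phi)$ collapse to their $u=1$ summands, and since $\phi_0=(0,\ldots,0)$ forces $\Delta\phi^j_1=\phi^j_1$, the indicators $\1_{\{\Delta\phi^j_1\geq 0\}}$ and $\1_{\{\Delta\phi^j_1<0\}}$ coincide with $\1_{\{\phi^j_1\geq 0\}}$ and $\1_{\{\phi^j_1<0\}}$ respectively. Substituting the expression for $V_0(\phi)$ from Definition~\ref{Def:WealthProcess}, the acquisition-price terms cancel against the ask and bid legs of $V_0(\phi)$, and the remaining expression simplifies to $B_1^{-1}V_1(\phi)=V^*_1(\phi)$; crucially, no self-financing hypothesis is invoked at this stage. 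For the induction step, fix $t\geq 1$, assume $W_t(\phi)=V^*_t(\phi)$, and examine the difference $W_{t+1}(\phi)-W_t(\phi)$: only the $u=t+1$ contributions to the three cumulative sums survive, together with the change from $\1_{\{\phi^j_t\geq 0\}}\phi^j_t B_t^{-1}P^{bid,j}_t$ to $\1_{\{\phi^j_{t+1}\geq 0\}}\phi^j_{t+1}B_{t+1}^{-1}P^{bid,j}_{t+1}$ and its short-side analogue. Computing $V^*_{t+1}(\phi)-V^*_t(\phi)$ directly from Definition~\ref{Def:WealthProcess}, the terms involving $P^{bid,j}_{t+1}$, $P^{ask,j}_{t+1}$, $\Delta A^{ask,j}_{t+1}$, and $\Delta A^{bid,j}_{t+1}$ (weighted by indicators on the sign of $\phi^j_{t+1}$) cancel on both sides. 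After multiplying through by $B_{t+1}$ and using the trivial identity $\1_{\{\Delta\phi^j_{t+1}\geq 0\}}\Delta\phi^j_{t+1}+\1_{\{\Delta\phi^j_{t+1}<0\}}\Delta\phi^j_{t+1}=\Delta\phi^j_{t+1}$ to regroup the $P^{ask,j}_t$ and $P^{bid,j}_t$ increments, what remains is exactly equation~\eqref{Eq: Phi} evaluated at time $t$.

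The main obstacle is purely bookkeeping: one must separate the long and short legs of each position, keep track of which price (ask or bid) applies to acquisitions versus liquidations, and align the sign indicators $\1_{\{\phi^j_t\geq 0\}}$ and $\1_{\{\phi^j_t<0\}}$ governing dividend accrual with those appearing in the liquidation formula for $V_t(\phi)$. Once this accounting is set up correctly, every manipulation in the induction step is reversible, so the identity $W_{t+1}(\phi)=V^*_{t+1}(\phi)$ and the self-financing equation~\eqref{Eq: Phi} at time $t$ are equivalent, which closes both directions of the induction.
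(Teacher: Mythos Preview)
The paper does not actually prove this lemma; it simply states ``For the proof we refer to Bielecki, Cialenco, and Rodriguez~\cite{BieleckiCialencoRodriguez2012}.'' Your inductive argument---verifying $W_1(\phi)=V^*_1(\phi)$ unconditionally from $\phi_0=0$, and then showing that the one-step increments $W_{t+1}(\phi)-W_t(\phi)$ and $V^*_{t+1}(\phi)-V^*_t(\phi)$ agree precisely when \eqref{Eq: Phi} holds at time $t$---is the standard and correct approach, and the cancellations you describe work out. One small slip: after cancelling the $(t{+}1)$-dated terms you should multiply through by $B_t$ (not $B_{t+1}$), since the surviving rebalancing and dividend terms all carry the discount factor $B_t^{-1}$; once you do, what remains is exactly \eqref{Eq: Phi} and the regrouping identity $\1_{\{\Delta\phi^j_{t+1}\geq 0\}}\Delta\phi^j_{t+1}+\1_{\{\Delta\phi^j_{t+1}<0\}}\Delta\phi^j_{t+1}=\Delta\phi^j_{t+1}$ is not even needed.
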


\noindent Thus, the wealth process at time $t$, associated with a self-financing trading strategy $\phi$, is equal to the sum of setting up the portfolio associated with $\phi$ at time $t=0$, the liquidation value at time $t$ of the portfolio associated with $\phi$, all purchases and sales before time $t$, and all dividends associated with $\phi$ up to time $t$.

\begin{remark}
Naturally, if there are no transactions costs, we recover classic definitions of the wealth process and self-financing condition.
In the case when the market is frictionless and there are no dividend-paying securities, that is $P^{ask}=P^{bid}$ and $A^{ask}=A^{bid}=0$,
see for instance Pliska~\cite{Pliska}.
If  the market is frictionless and there are dividend-paying securities, that is $P^{ask}=P^{bid}$ and $A^{ask}=A^{bid}$, see for example Kijima~\cite{Kijima2003}.
\end{remark}

\subsection{Arbitrage}\label{subsec:Arbitrage}
We start with defining the following sets of self-financing trading strategies.
\begin{align*}
\cS(t) & :=
\begin{cases}
\{ \phi : \phi \; \text{ is s.f.}, V_0(\phi)=0 \} , & t=0 \nonumber \\
\{ \phi : \phi \; \textrm{ is s.f., } \phi_s=\1_{\{s \geq t+1\}}\phi_s \textrm{ for all }  s=1, 2, \dots, T\}, & t\in \{1, \dots, T-1\} \nonumber
\end{cases}
\end{align*}
Note that in particular $V_t(\phi)=0$ for any $\phi \in \cS(t).$ \\
Also, we define
\begin{align}
  \cH^0(t) &:= \Big\{ \Big(0,\dots, 0, \Delta V^*_{t+1}(\phi), \ldots, \Delta V^*_{T}(\phi) \Big) \,:\, \phi\in \cS(t)\Big\} \label{Eq:DefUnderlyingMarket0}
\end{align}
for $t\in \{0, \dots, T-1\}$.
We  call  $\cH^0(t)$ the set of \emph{hedging cash flows initiated at time $t$}.

Due to the presence of transaction costs, the sets $\cH^0(t)$, generally speaking, are not convex, and for this reason we introduce the following auxiliary sets.
\begin{align}
  \cL_+(t) & := \Big\{ (Z_s)_{s=0}^T  :    Z_s\in L_+(\Omega,\cF_s,\bP), \  Z_s=\1_{\{s \geq t+1\}}Z_s, s=0,\ldots, T\Big\},   \\
\cH(t) & : = \Big\{ \Big(0,\dots, 0, \Delta (V^*_{t+1}(\phi)-Z_{t+1}), \ldots, \Delta ( V^*_{T}(\phi) - Z_{T}) \Big) \,:\,
\phi\in \cS(t), \ Z\in\cL_+(t)\Big\}, \label{Eq:DefUnderlyingMarket}
\end{align}
for $t\in \{0, \dots, T-1\}$.
We  will also refer to $\cH(t)$ as the \emph{set of hedging cash flows initiated at time $t$}.
Moreover, using the fact that the set
$
   \{V^*_{s}(\phi)-X: \phi \; \text{is s.f.},  \; X \textrm{ is }  \cF_s-\textrm{measurable, and } X\geq0 \}
$   is a convex cone (see \cite{BieleckiCialencoRodriguez2012}), it is easy to show that the set $\cH(t)$ is also a convex cone.

Let us proceed by defining an arbitrage opportunity in our setting.

\begin{definition}\label{Def:ArbOpp}
An \emph{arbitrage opportunity} at time $t\in \{0, \dots, T-1\}$ for $\cH^0(t)$ is a cash flow $H \in \cH^0(t)$ such that $\sum_{s=t+1}^TH_s(\omega) \geq 0$ for all $\omega \in \Omega$, and $\bE^{\bP}_t [\sum_{s=t+1}^T H_s ](\omega)>0$ for some $\omega \in \Omega$.
\end{definition}
We say that the \emph{no-arbitrage condition} holds true at time $t \in \{0, \dots, T-1\}$ for $\cH^0(t)$ if there does not exist an arbitrage opportunity at time $t \in \{0, \dots, T-1\}$ for $\cH^0(t)$.
\begin{remark}
Typically, arbitrage is defined as a trading strategy rather than a cash flow.
However, in our setting, it is more convenient to work with cash flows, and since each hedging cash flow corresponds to a trading strategy, we take the liberty to define an arbitrage opportunity as a cash flow.
\end{remark}

\begin{definition}\label{Def:RiskNeutralMeasure0}
For any fixed $t \in \{0, \dots, T-1\}$, we say that a probability measure $\bQ$ is \emph{risk-neutral for} $\cH^0(t)$ if $\bQ \sim \bP$, and if $\bE^{\bQ}_t[\sum_{s=t+1}^TH_s](\omega)\leq 0$ for all $\omega \in \Omega$ and all $H \in \cH^0(t)$.
The set of all risk-neutral measures for $\cH^0(t)$ will be denoted by  $\cR(\cH^0(t)).$
\end{definition}

Similarly to the above, we define the set $\cR(\cH(t))$ of risk-neutral probabilities, the arbitrage opportunity for set $\cH(t)$, and no-arbitrage conditions for set $\cH(t), \ t \in \{0, \dots, T-1\}$.
The following two lemmas show that we may formally replace $\cH^0(t)$ by $\cH(t)$ in Definitions~\ref{Def:ArbOpp} and~\ref{Def:RiskNeutralMeasure0}.

\begin{lemma}\label{lemma: RHandRH0}
For any $t\in \{0, \dots, T-1\}$,  we have that $\bQ \in \cR(\cH^0(t))$ if and only if $\bQ \sim \bP$, and $\bE^{\bQ}_t[\sum_{s=t+1}^TH_s]\leq 0$ for all $H \in \cH(t)$.

\end{lemma}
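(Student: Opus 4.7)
The plan is to exploit the simple observation that $\cH^0(t) \subseteq \cH(t)$ (obtained by choosing $Z \equiv 0$ in the definition of $\cH(t)$), together with a telescoping identity that allows us to replace an arbitrary element of $\cH(t)$ with an element of $\cH^0(t)$ plus a nonnegative error term that only helps the inequality.

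First, I would dispatch the ``if'' direction in one line: since $0 \in \cL_+(t)$, every $H \in \cH^0(t)$ also lies in $\cH(t)$, so if $\bQ \sim \bP$ satisfies $\bE^{\bQ}_t[\sum_{s=t+1}^T H_s] \leq 0$ for all $H \in \cH(t)$, the same holds a fortiori for all $H \in \cH^0(t)$, i.e.\ $\bQ \in \cR(\cH^0(t))$.

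For the ``only if'' direction, fix $\bQ \in \cR(\cH^0(t))$ and an arbitrary $H \in \cH(t)$. Pick the corresponding $\phi \in \cS(t)$ and $Z \in \cL_+(t)$ so that $H_s = \Delta(V^*_s(\phi) - Z_s)$ for $s = t+1, \ldots, T$. Telescoping,
\[
\sum_{s=t+1}^T H_s \;=\; \bigl(V^*_T(\phi) - Z_T\bigr) - \bigl(V^*_t(\phi) - Z_t\bigr).
\]
By definition of $\cS(t)$ we have $V_t(\phi) = 0$ (and hence $V^*_t(\phi) = 0$), and by definition of $\cL_+(t)$ we have $Z_t = 0$, so the sum reduces to $V^*_T(\phi) - Z_T$. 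Now set
\[
\widetilde{H} := \bigl(0, \ldots, 0, \Delta V^*_{t+1}(\phi), \ldots, \Delta V^*_T(\phi)\bigr),
\]
which lies in $\cH^0(t)$ and satisfies $\sum_{s=t+1}^T \widetilde{H}_s = V^*_T(\phi)$. Therefore
\[
\bE^{\bQ}_t\!\Bigl[\sum_{s=t+1}^T H_s\Bigr] \;=\; \bE^{\bQ}_t\!\Bigl[\sum_{s=t+1}^T \widetilde{H}_s\Bigr] - \bE^{\bQ}_t[Z_T] \;\leq\; 0 - 0 \;=\; 0,
\]
since $\bQ \in \cR(\cH^0(t))$ gives $\bE^{\bQ}_t[\sum \widetilde{H}_s] \leq 0$ and $Z_T \geq 0$ gives $\bE^{\bQ}_t[Z_T] \geq 0$.

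There is no real obstacle here: everything reduces to checking that the ``slack'' processes $Z \in \cL_+(t)$ contribute only a nonnegative term to the conditional expectation, so enlarging $\cH^0(t)$ to $\cH(t)$ does not enlarge the corresponding set of risk-neutral measures. The only point to be careful about is verifying that $V^*_t(\phi) = 0$ holds uniformly for $\phi \in \cS(t)$ (including the case $t=0$, where $V_0(\phi) = 0$ is imposed directly in the definition of $\cS(0)$), so that the telescoping boundary term at $s=t$ vanishes.
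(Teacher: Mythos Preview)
Your proof is correct and follows essentially the same approach as the paper. Both arguments rest on the identity $\sum_{s=t+1}^T H_s = \sum_{s=t+1}^T H^0_s - Z_T$ for $H^0 \in \cH^0(t)$ corresponding to the same $\phi$; the paper asserts this directly, while you verify it explicitly via the telescoping sum and the boundary conditions $V^*_t(\phi)=0$, $Z_t=0$, which is a nice bit of extra care.
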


\begin{proof}
Fix $t \in \{0, \dots, T-1\}$.

\noindent
  $(\Longrightarrow) $     If $\bQ \in \cR(\cH^0(t))$, then $\bE^{\bQ}_t\big[\sum_{s=t+1}^TH^0_s\big]\leq 0$ for all $H^0 \in \cH^0(t)$.
    Hence, \\ $\bE^{\bQ}_t\big[\sum_{s=t+1}^TH^0_s-Z_T\big]\leq 0$
    for all $H^0 \in \cH^0(t)$ and $Z \in \cL_+(t)$.
  Therefore,
  $\bE^{\bQ}_t[\sum_{s=t+1}^T H_s]\leq 0$ for all $H \in \cH(t)$.

 \vspace{10pt}

\noindent
  $(\Longleftarrow)$    Suppose that $\bQ \sim \bP$, and that $\bE^{\bQ}_t[\sum_{s=t+1}^TH_s]\leq 0$ for  all $H \in \cH(t)$.\\
  Then,  $\bE^{\bQ}_t\big[\sum_{s=t+1}^TH^0_s-Z_T\big]\leq 0$ for all $H^0 \in  \cH^0(t)$ and $Z \in \cL_+(t)$.
  Letting $Z_T=0$ proves that $\bQ \in \cR(\cH^0(t))$.
  \end{proof}

\begin{lemma}\label{lemma: HandH0}
For each $t \in \{0, \dots, T-1\}$, the no-arbitrage condition holds true at time $t$ for $\cH^0(t)$ if and only if for each  $H \in \cH(t)$ such that $\sum_{s=t+1}^T H_s \geq0$, we have $\sum_{s=t+1}^T H_s =0$.
\end{lemma}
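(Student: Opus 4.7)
My plan is to reduce both conditions to statements about the total terminal wealth $\sum_{s=t+1}^T H_s$, which for strategies in $\cS(t)$ telescopes to $V^*_T(\phi)$ because $V_t(\phi)=0$ (since $\phi_s = 0$ for $s\leq t$, so in particular $V_0(\phi)=V_t(\phi)=0$ by the very definition of $\cS(t)$). Explicitly, for $H^0 \in \cH^0(t)$ arising from $\phi\in\cS(t)$, the terminal sum equals $V^*_T(\phi)$, while for $H \in \cH(t)$ arising from $(\phi, Z)$ with $Z\in\cL_+(t)$, the terminal sum equals $V^*_T(\phi) - Z_T$ with $Z_T \geq 0$.

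The second preliminary step is to rewrite the no-arbitrage condition for $\cH^0(t)$ in a more convenient form. Since $\bP$ is of full support and $\Omega$ is finite, any nonnegative random variable with zero conditional expectation at time $t$ must itself be zero. Hence the non-existence of $H^0\in\cH^0(t)$ with $\sum_{s=t+1}^T H^0_s \geq 0$ and $\bE^{\bP}_t[\sum_{s=t+1}^T H^0_s](\omega) > 0$ on some $\omega$ is equivalent to: for every $H^0\in\cH^0(t)$ with $\sum_{s=t+1}^T H^0_s \geq 0$, one actually has $\sum_{s=t+1}^T H^0_s = 0$.

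For the direction $(\Rightarrow)$, I take $H\in\cH(t)$ with $\sum_{s=t+1}^T H_s \geq 0$ generated by $(\phi,Z)$. Then $V^*_T(\phi) \geq Z_T \geq 0$, and the cash flow $H^0 \in \cH^0(t)$ generated by the same $\phi$ (equivalently, by $(\phi,0)$) has terminal sum $V^*_T(\phi)\geq 0$. The reformulated no-arbitrage condition forces $V^*_T(\phi) = 0$, which together with $0 \leq Z_T \leq V^*_T(\phi) = 0$ yields $Z_T=0$, and therefore $\sum_{s=t+1}^T H_s = V^*_T(\phi) - Z_T = 0$.

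For $(\Leftarrow)$, I simply note that $\cH^0(t) \subseteq \cH(t)$: for any $H^0\in\cH^0(t)$ generated by $\phi\in\cS(t)$, taking $Z \equiv 0 \in \cL_+(t)$ produces exactly $H^0$ in $\cH(t)$. Applying the hypothesis to any $H^0\in\cH^0(t)$ with $\sum_{s=t+1}^T H^0_s\geq 0$ gives $\sum_{s=t+1}^T H^0_s = 0$, which by the reformulation in the second paragraph is exactly the no-arbitrage condition at time $t$ for $\cH^0(t)$. No real obstacle arises; the only point that requires a moment's care is the use of the full support of $\bP$ on a finite $\Omega$ to pass between the conditional-expectation formulation of arbitrage and the pathwise equality $\sum_{s=t+1}^T H_s = 0$.
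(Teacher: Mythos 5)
Your proof is correct and follows essentially the same route as the paper's: both directions rest on the decomposition $\sum_{s=t+1}^T H_s = \sum_{s=t+1}^T H^0_s - Z_T$ (your $V^*_T(\phi)-Z_T$), with $Z=0$ giving the converse inclusion $\cH^0(t)\subseteq\cH(t)$. The only difference is that you spell out explicitly the reformulation of no-arbitrage as ``$\sum_{s=t+1}^T H^0_s\geq 0$ implies $\sum_{s=t+1}^T H^0_s=0$'' via full support of $\bP$ on the finite $\Omega$, a step the paper uses implicitly.
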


\begin{proof}
Let us fix $t \in \{0, \dots, T-1\}$.

\noindent
$(\Longrightarrow)$
Assume that $H \in \cH(t)$ is such that $\sum_{s=t+1}^T H_s \geq0$.
Then, by definition of $\cH(t)$, there exists $H^0 \in \cH^0(t)$ and $Z \in \cL_+(t)$ so that $\sum_{s=t+1}^T H_s=\sum_{s=t+1}^T H^0_s-Z_T$.
This give us $\sum_{s=t+1}^T H^0_s\geq Z_T$.
The no-arbitrage condition holds true at time $t$ for $\cH^0(t)$, so $\sum_{s=t+1}^T H^0_s= 0$.
Therefore, $Z_T=0$, which implies $\sum_{s=t+1}^T H_s =0$.

\noindent
$(\Longleftarrow) $
Suppose that $H^0 \in \cH^0(t)$ is such that $\sum_{s=t+1}^T H^0_s \geq0$.
By assumption, for each  $H \in \cH(t)$ such that $\sum_{s=t+1}^T H_s \geq0$, we have $\sum_{s=t+1}^T H_s =0$.
From the definition of $\cH(t)$, this implies that for each  $\hat H^0 \in \cH^0(t)$, $Z \in \cL_+(t)$ such that $\sum_{s=t+1}^T \hat H_s-Z_T \geq0$, we have $\sum_{s=t+1}^T\hat H^0_s-Z =0$.
Taking $Z=0$ and $\hat H^0:=H^0$ gives us $\sum_{s=t+1}^T H^0_s =0$.
\end{proof}

\smallskip
In what follows we shall make use of the following result.

\begin{proposition}\label{theorem:FFTAP}
If $\cR(\cH(t)) \neq \emptyset$, then the no-arbitrage condition holds at time $t \in \{0, \dots, T-1\}$ for $\cH(t)$.
\end{proposition}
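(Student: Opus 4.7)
The plan is to prove the contrapositive: assuming the existence of some $\bQ\in\cR(\cH(t))$, I would show that no $H\in\cH(t)$ can be an arbitrage opportunity at time $t$. The argument is the ``easy direction'' of the FTAP and relies only on the definition of a risk-neutral measure together with the equivalence $\bQ\sim\bP$, so no separation argument is required.

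First I would fix $\bQ\in\cR(\cH(t))$ and suppose, toward a contradiction, that there exists $H\in\cH(t)$ with $\sum_{s=t+1}^{T}H_s\ge 0$ on all of $\Omega$ and $\bE^{\bP}_t\!\left[\sum_{s=t+1}^{T}H_s\right]\!(\omega_\ast)>0$ for some $\omega_\ast\in\Omega$. Because $\Omega$ is finite and $\bP$ has full support, the condition $\bE^{\bP}_t\!\left[\sum_{s=t+1}^{T}H_s\right]\!(\omega_\ast)>0$ is equivalent, at the $\cF_t$-atom $A_\ast$ containing $\omega_\ast$, to the existence of some $\omega'\in A_\ast$ with $\sum_{s=t+1}^{T}H_s(\omega')>0$. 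Since $\bQ\sim\bP$, we have $\bQ(\{\omega'\})>0$ as well, and every other point in $A_\ast$ carries nonnegative values of $\sum_{s=t+1}^{T}H_s$. Hence the conditional $\bQ$-expectation on that atom satisfies
\[
\bE^{\bQ}_t\!\left[\sum_{s=t+1}^{T}H_s\right]\!(\omega_\ast)
= \frac{1}{\bQ(A_\ast)}\sum_{\omega\in A_\ast}\bQ(\{\omega\})\sum_{s=t+1}^{T}H_s(\omega)>0.
\]
This directly contradicts the defining inequality $\bE^{\bQ}_t\!\left[\sum_{s=t+1}^{T}H_s\right]\le 0$ for all $\omega$, which is part of $\bQ$ being risk-neutral for $\cH(t)$.

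I do not anticipate any serious obstacle here: the proposition is a one-direction implication that is structurally the classical ``existence of EMM implies no-arbitrage'' statement, and it remains true verbatim in our setting because the definitions of arbitrage and of risk-neutral measure for $\cH(t)$ are stated ``similarly to the above'' for $\cH^0(t)$. The only small point to be careful about is translating the strict positivity of the $\bP$-conditional expectation into strict positivity of the $\bQ$-conditional expectation, which is immediate from $\bQ\sim\bP$ on a finite probability space and the fact that $\sum_{s=t+1}^{T}H_s\ge 0$ pointwise. Note that one could alternatively invoke Lemma~\ref{lemma: RHandRH0} and Lemma~\ref{lemma: HandH0} to reduce to $\cH^0(t)$, but the direct argument above is shorter and avoids passing through the auxiliary $Z\in\cL_+(t)$.
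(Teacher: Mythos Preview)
Your proof is correct and follows essentially the same contradiction argument as the paper: assume $\bQ\in\cR(\cH(t))$ and an arbitrage opportunity $H$, then use $\bQ\sim\bP$ together with $\sum_{s=t+1}^T H_s\ge 0$ to deduce $\bE^{\bQ}_t\big[\sum_{s=t+1}^T H_s\big](\omega)>0$ for some $\omega$, contradicting risk-neutrality. The only difference is that you spell out explicitly, via the atomic structure of $\cF_t$ on a finite space, why the strict positivity of the $\bP$-conditional expectation transfers to the $\bQ$-conditional expectation, whereas the paper states this step in one line.
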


\begin{proof}
We prove by contradiction.
Assume that $\bQ \in \cR(\cH(t))$, and that there exists an arbitrage opportunity $H$ at time $t \in \{0, \dots, T-1\}$.
By the definition of an arbitrage opportunity, $H \in \cH(t)$, $\sum_{s=t+1}^T H_s \geq0$, and $\bE^{\bP}_t\big[\sum_{s=t+1}^T H_s](\omega) > 0$ for some $\omega \in \Omega$.
  Since $\bQ \sim \bP$ and $\sum_{s=t+1}^T H_s\geq0$, we have that $\bE^{\bQ}_t\big[\sum_{s=t+1}^T H_s](\omega) > 0$ for some $\omega \in \Omega$.
However, this contradicts that $\bQ \in \cR(\cH(t))$.
Hence, the no-arbitrage condition holds true at time $t \in \{0, \dots, T-1\}$ for $\cH(t)$.
\end{proof}

Next, we introduce some notions that are related to derivatives pricing, and which will be used in Section~\ref{sec:PricesRep}.
In what follows, for any cash flow $D \in L^0$ we will denote by $D^*:=B^{-1}D$ the discounted cash flow.

\begin{definition}
Let $t \in \{0, \dots, T-1\}$.
\begin{itemize}
\item
\emph{A set of extended cash flows associated with an $\cF_t$-measurable random variable $S_t$ and a process $D \in L^0$} is defined as
\begin{align*}
\widetilde \cH(t, S_t) :=&\Big\{ \Big(0, \dots,0,\xi_t S_t, H_{t+1}-\xi_t D^*_{t+1}, \dots, H_T-\xi_t D^*_T\Big)\nonumber \\
& \qquad : H \in \cH(t),\; \xi_t \;\text{is an}\; \cF_t\text{-measurable r.v.}\Big\},
\end{align*}
\item
\emph{The pricing interval associated with  a process $D \in L^0$ and a set of probability measures $\cX$} is defined as
\begin{align*}
\cI(t, D; \cX)&:=  \Big\{\bE^{\mathbb{Q}}_t \Big [ \sum _{s=t+1}^T D^*_s \Big]: \mathbb{Q} \in \cX\Big\}.
\end{align*}
\end{itemize}
\end{definition}
A cash flow in $\widetilde\cH(t, S_t)$ is interpreted as the sum of a position in $\cH(t)$ and a \emph{static} position of $\xi_t$ units in the discounted cash flow $(0, \dots, 0,  S_t, - D^*_{t+1}, \dots, -D^*_T)$.
In Section~\ref{sec:PricesRep}, $S_t$ will have the interpretation of a discounted price of the cash flow $D$.

We will say that $\cI(t, D; \cX)$ is a \emph{risk-neutral pricing interval} if it is nonempty, and if for each $S_t \in \cI(t, D;  \cX)$ the no-arbitrage condition is satisfied for $\widetilde \cH(t, S_t)$.
That is, $\cI(t, D; \cX)$ is a risk-neutral pricing interval if it is nonempty, and if  for each $S_t \in \cI(t, D; \cX)$ and each $\widetilde H \in \widetilde \cH(t, S_t)$ such that $\sum_{s=t+1}^T \widetilde H_s\geq0$, we have $\sum_{s=t+1}^T\widetilde H_s=0$.
If $\cI(t, D; \cX)$ is a risk-neutral pricing interval, we call any $S_t \in \cI(t, D; \cX)$ a \emph{risk-neutral price},
$\sup_{ \bQ \in \cR(\cH(t))} \bE^\bQ_t\big[\sum_{s=t+1}^TD^*_s\big]$ \emph{the upper no-arbitrage bound}, and $\inf_{ \bQ \in \cR(\cH(t))} \bE^\bQ_t\big[\sum_{s=t+1}^TD^*_s\big]$ \emph{the lower no-arbitrage bound}.

The following lemma gives a necessary condition for $\cI(t, D, \cX)$ to be a risk-neutral pricing interval.

\begin{lemma}\label{lemma:NAinterval}
Let $t \in \{0, \dots, T-1\}$ and $D \in L^0$.
If $\cR(\cH(t)) \neq \emptyset$, then $ \cI(t, D; \cR(\cH(t)))$ is a risk-neutral pricing interval.
\end{lemma}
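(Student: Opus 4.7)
The claim has two parts: (i) nonemptiness of $\cI(t,D;\cR(\cH(t)))$, and (ii) the no-arbitrage property for every $S_t$ in that interval. Part (i) is immediate: since $\cR(\cH(t))\neq\emptyset$ by hypothesis, pick any $\bQ\in\cR(\cH(t))$ and set $S_t:=\bE^{\bQ}_t[\sum_{s=t+1}^T D^*_s]$; by the very definition of $\cI(t,D;\cR(\cH(t)))$, this $S_t$ lies in the interval. For (ii) my plan is to reduce the claim to Proposition~\ref{theorem:FFTAP}, applied to the enlarged cash-flow set $\widetilde\cH(t,S_t)$ in place of $\cH(t)$, by exhibiting a risk-neutral measure for this enlarged set.

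Concretely, fix any $S_t\in\cI(t,D;\cR(\cH(t)))$ and pick $\bQ\in\cR(\cH(t))$ realizing it via $S_t=\bE^{\bQ}_t[\sum_{s=t+1}^T D^*_s]$. I claim that this same $\bQ$ serves as a risk-neutral measure for $\widetilde\cH(t,S_t)$. For a generic element $\widetilde H=(0,\dots,0,\xi_t S_t,H_{t+1}-\xi_t D^*_{t+1},\dots,H_T-\xi_t D^*_T)$ of $\widetilde\cH(t,S_t)$, taking $\bE^{\bQ}_t$ of the aggregate cash flow and using linearity causes the two $\xi_t$ contributions to telescope: the initial $\xi_t S_t$ is cancelled by $-\xi_t\bE^{\bQ}_t[\sum_{s=t+1}^T D^*_s]=-\xi_t S_t$ coming from the subsequent entries, leaving $\bE^{\bQ}_t[\sum_{s=t+1}^T H_s]\leq 0$ because $H\in\cH(t)$ and $\bQ\in\cR(\cH(t))$. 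Replaying the contradiction argument from the proof of Proposition~\ref{theorem:FFTAP}, with $\widetilde\cH(t,S_t)$ in place of $\cH(t)$ and using $\bQ\sim\bP$, then rules out arbitrage opportunities for $\widetilde\cH(t,S_t)$, which is precisely the required no-arbitrage property.

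The argument is essentially a short conditional-expectation computation followed by an invocation of Proposition~\ref{theorem:FFTAP}, so there is no genuine obstacle. The only subtle point is the bookkeeping between the time-$t$ entry $\xi_t S_t$ of $\widetilde H$ and the later entries $H_s-\xi_t D^*_s$: these must be combined correctly so that the defining identity $S_t=\bE^{\bQ}_t[\sum_{s=t+1}^T D^*_s]$ produces the clean cancellation that collapses the problem back to the known no-arbitrage bound for $\cH(t)$.
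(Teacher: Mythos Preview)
Your proposal is correct and follows essentially the same approach as the paper. Both arguments fix $\bQ\in\cR(\cH(t))$ with $S_t=\bE^{\bQ}_t[\sum_{s=t+1}^T D^*_s]$, use the resulting cancellation $\xi_t S_t-\xi_t\bE^{\bQ}_t[\sum_{s=t+1}^T D^*_s]=0$ to reduce to $\bE^{\bQ}_t[\sum_{s=t+1}^T H_s]\leq 0$, and then conclude from $\bQ\sim\bP$ that a nonnegative aggregate cash flow must vanish; the only cosmetic difference is that you package the last step as ``$\bQ$ is risk-neutral for $\widetilde\cH(t,S_t)$, now replay Proposition~\ref{theorem:FFTAP},'' whereas the paper carries out that short step directly.
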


\begin{proof}
  Fix  $t \in \{0, \dots, T-1\}$,  $D \in L^0$, and $S_t \in \cI(t, D;\cR(\cH(t)))$.
Let $\widetilde H \in \widetilde \cH(t, S_t)$ be a cash flow such that $\sum_{s=t+1}^T\widetilde H_s \geq0$.
By definition of $\widetilde \cH(t, S_t)$, we have that
\begin{equation}\label{eq:NAInterval}
\xi _t S_t+\sum_{s=t+1}^T (H_s -\xi_tD^*_s) \geq 0
\end{equation}
 for some $H \in \cH(t)$ and some $\cF_t$-measurable random variable $\xi_t$.

Now, since $\cR(\cH(t)) \neq \emptyset$ and $S_t \in \cI(t, D;\cR(\cH(t)))$, there exists $\bQ \in \cR(\cH(t))$ such that $S_t=\bE^\bQ_t\big[\sum_{s=t+1}^TD^*_s\big]$.
It follows that $\xi_t \bE^\bQ_t\big[\sum_{s=t+1}^TD^*_s\big]-\xi_tS_t=0$.
From \eqref{eq:NAInterval} we see that $\bE^\bQ_t\big[\sum_{s=t+1}^T H_s\big] \geq 0$ holds.
Since $\bQ \in \cR(\cH(t))$, we have   that $\bE^\bQ_t\big[\sum_{s=t+1}^T H_s\big] = 0$, which gives us that
\begin{equation}\label{eq:NAInterval2}
 \xi _t S_t+\bE^\bQ_t\Big[\sum_{s=t+1}^T (H_s -\xi_tD^*_s)\Big] = 0.
\end{equation}
From \eqref{eq:NAInterval} and \eqref{eq:NAInterval2}  we conclude that $\xi _t S_t+\sum_{s=t+1}^T (H_s -\xi_tD^*_s)=0$, which consequently implies that the no-arbitrage condition holds for $\widetilde \cH(t, S_t)$.
\end{proof}

\subsection{Good-deals}\label{subsec:GoodDeals}
The no-good-deal bound pricing approach was introduced in \cite{Cochrane2000}.
This approach assumes that all investors are willing to invest in good deals -- trades with high Sharpe ratios -- as well as in the arbitrage opportunities, if any.
In \cite{Cherny2007, Cherny2007c}, alternative approach to no-good-deal bounds was proposed: these authors suggested using coherent risk measures instead of the Sharpe Ratio.
Recently, the notion of conic finance was introduced in \cite{MadanCherny2010}, where a good deal was defined in terms of a family of static coherent risk measures.
In the present paper, we extend conic finance to a dynamic setting by defining a good deal in terms of a family of \emph{dynamic} coherent risk measures (DCAI).

The main tool for building up the theory of Dynamic Conic Finance will be the Dynamic Coherent Acceptability Indices (DCAIs) developed in~\cite{BCZ2010}. As it was shown in~\cite{BCZ2010} that any DCAI $\alpha$ can be associated with a left-continuous,  increasing  family of Dynamic Coherent Risk Measures (DCRMs) $(\rho^{\gamma})_{\{\gamma\in (0, \infty)\}}$, and consequently to a family of dynamically consistent sequences of sets of probability measures (see Appendix~\ref{sec:PricingDCAI} for definitions and related results.)
In what follows, we fix such a normalized$\, $\footnote{A DCAI is said to be normalized if for every $t \in \cT$ and $\omega \in \Omega$, there exist two portfolios $D, D' \in \cD$ so that $\alpha_t(D)(\omega)=+\infty$ and $\alpha_t(D')(\omega)=0$.}  DCAI $\alpha$, and denote by $(\rho^{\gamma})_{\{\gamma\in (0, \infty)\}}$ the corresponding family of DCRMs, and  by ${\cQ}=\left (\big(\cQ^{\gamma}_t\big)_{t \in \cT}\right)_{\gamma \in (0, \infty)}$ the corresponding family of dynamically consistent sequences of sets of probability measures.

\begin{definition}\label{def:GoodDeal}
 A \emph{good-deal} for $\cH(t)$ at time $t \in \{0, \dots, T-1\}$ and level $\gamma>0$ is a cash flow $H \in \cH(t)$ such that $\rho^{\gamma}_t(H)(\omega)<0$ for some $\omega \in \Omega$.
\end{definition}

Note that a good-deal depends on the family of DCRMs and the level $\gamma$.
A cash flow that is a good-deal with respect to a family of DCRMs might not be a good-deal with respect to another family of DCRMs. Also, note that, for a fixed family of DCRMs, a cash flow  that is a good-deal at level $\gamma_0$ might not be a good-deal at some other level $\gamma' >\gamma_0$. Although, since $\rho^\gamma$ is monotone increasing in $\gamma$, if a cash flow is a good-deal for $\gamma_0$, then it will also be a good deal for any level $\gamma'\leq \gamma_0$. We will also show later that good-deals can be described in terms of the acceptability index associated to family $\rho^\gamma$.

\begin{definition}\label{def:NGDBcond}
We say that the \emph{no-good-deal condition} (NGD) holds  for $\cH(t)$ at time $t \in \{0, \dots, T-1\}$ and level $\gamma>0$ if $\rho^{\gamma}_t(H)(\omega)\geq0$ for all $H \in \cH(t)$ and $\omega \in \Omega$.
\end{definition}

We will make the following technical assumption on  $\cQ$.\\
\noindent {\bf Assumption (B):}\label{Assumption:Density}
For each $\gamma>0$ and $t \in \cT$, any probability measure $\bQ\in\cQ_t^\gamma$ is equivalent to $\bP$, and the set
\begin{equation*}
  \cE^{\gamma}_t:= \left\{ \frac{ \mathrm{d} \bQ}{\mathrm{d}\bP} \,:  \, \bQ \in \cQ^{\gamma}_t\right \}
\end{equation*}
is closed and convex.

\smallskip
Since $\Omega$ is finite and $\bP$ is of full support, the set $\cE^{\gamma}_t$ is bounded.
Hence, $\cE^{\gamma}_t$ is compact for all $\gamma>0$ and $t \in \cT$.
In Section~\ref{sec:PricingDGLR}, we show that a family of densities $\cE$ corresponding to the dynamic Gain-Loss Ratio satisfies this assumption.

Next, we will prove one of the main results of this paper, which is analogous to FTAP.
\begin{theorem}\label{theorem:NGDFTAP}
 The no-good-deal condition \emph{(NGD)} holds true for $\cH(t)$ at time \\ $t \in \{0, \dots, T-1\}$ and level $\gamma>0$ if and only if $\cR(\cH(t)) \cap \cQ^{\gamma}_t \neq \emptyset$.
\end{theorem}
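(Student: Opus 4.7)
My plan is to prove the two implications separately, exploiting the dual representation of $\rho^\gamma_t$ in terms of the compact convex set of densities $\cE^\gamma_t$.

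For the easy direction $(\Leftarrow)$, suppose $\bQ \in \cR(\cH(t)) \cap \cQ^\gamma_t$. By Lemma~\ref{lemma: RHandRH0}, $\bE^\bQ_t[\sum_{s=t+1}^T H_s] \le 0$ pointwise for every $H \in \cH(t)$. Using the robust representation $\rho^\gamma_t(H) = \esssup_{\bQ' \in \cQ^\gamma_t} \bE^{\bQ'}_t[-\sum_{s>t} H_s]$ of the DCRM, one immediately gets $\rho^\gamma_t(H)(\omega) \ge -\bE^\bQ_t[\sum_{s>t} H_s](\omega) \ge 0$, which is exactly NGD.

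The core of the argument is the converse. The first preparatory step is to observe a stability property of $\cH(t)$: for every $A\in\cF_t$ and $H\in\cH(t)$, the truncated cash flow $H\1_A$ again lies in $\cH(t)$. This follows because the operation $\phi \mapsto \phi\1_A$ preserves the self-financing condition (both sides of the SF-equation get multiplied by $\1_A$, since for any $\cF_t$-measurable $B$-indicator one has $\1_{\Delta\phi^j_{s+1}\1_A\ge 0}\,\Delta\phi^j_{s+1}\1_A = \1_A\1_{\Delta\phi^j_{s+1}\ge 0}\,\Delta\phi^j_{s+1}$, and similarly for the bid side and the dividend terms), and from Lemma~\ref{Lemma:DiscountedSelfFinancing} one deduces $V^*_s(\phi\1_A)=\1_A V^*_s(\phi)$ for $\phi\in\cS(t)$, $s\ge t$. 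Setting $\cY := \{\sum_{s=t+1}^T H_s : H \in \cH(t)\}$, which is a convex cone, this stability translates into $\cY$ being closed under multiplication by $\1_A$ for every $A\in\cF_t$. As a consequence, the $\omega$-pointwise risk-neutrality condition $\bE^\bQ_t[\sum_{s>t} H_s](\omega)\le 0$ (equivalently, $\bE^\bP[Z\,Y_H\,\1_A]\le 0$ for every atom $A$, with $Z=\mathrm{d}\bQ/\mathrm{d}\bP$) collapses to the single scalar condition $\bE^\bP[Z Y] \le 0$ for every $Y\in\cY$. In the same way, NGD becomes: for every $Y\in\cY$, $\min_{Z \in \cE^\gamma_t} \bE^\bP[Z Y] \le 0$, where the minimum is attained by compactness of $\cE^\gamma_t$.

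The decisive step is then the quantifier exchange from ``for each $Y\in\cY$ some $Z\in\cE^\gamma_t$ works'' to ``some $Z\in\cE^\gamma_t$ works for every $Y\in\cY$''. I would apply Sion's minimax theorem to the bilinear, continuous map $(Z,Y)\mapsto \bE^\bP[ZY]$ on the product $\cE^\gamma_t\times\cY$: the first factor is compact and convex, the second is convex, and linearity in each argument gives both the required semicontinuity and the quasi-convexity/quasi-concavity. Sion yields
\[
\min_{Z \in \cE^\gamma_t} \sup_{Y \in \cY} \bE^\bP[Z Y] \;=\; \sup_{Y \in \cY} \min_{Z \in \cE^\gamma_t} \bE^\bP[Z Y].
\]
By NGD the right-hand side is $\le 0$, and since $0\in\cY$ it is exactly $0$. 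Since $\cY$ is a cone, $\sup_{Y\in\cY}\bE^\bP[ZY]\in\{0,+\infty\}$, so the left-hand side attaining $0$ at some $Z^*\in\cE^\gamma_t$ means $\bE^\bP[Z^* Y]\le 0$ for all $Y\in\cY$. The associated measure $\bQ^*\in\cQ^\gamma_t$ is equivalent to $\bP$ by Assumption~(B), and Lemma~\ref{lemma: RHandRH0} then places $\bQ^*$ in $\cR(\cH(t))\cap\cQ^\gamma_t$.

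The main obstacle is precisely this quantifier exchange; the essential work is the preceding reductions, namely the $\1_A$-stability of $\cY$ (which removes the $\cF_t$-conditioning and turns the desired statement into a standard bilinear separation problem) and invoking compactness of $\cE^\gamma_t$ provided by Assumption~(B).
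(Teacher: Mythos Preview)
Your proof of the direction $(\Leftarrow)$ is correct and essentially identical to the paper's.

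For $(\Rightarrow)$, the $\1_A$--stability of $\cH(t)$ and the reduction of the \emph{risk-neutrality} condition to the scalar form $\bE^\bP[ZY]\le 0$ for all $Y\in\cY$ are fine. The gap is in the sentence ``In the same way, NGD becomes: for every $Y\in\cY$, $\min_{Z \in \cE^\gamma_t} \bE^\bP[Z Y] \le 0$.'' The risk-neutrality reduction worked because $Z$ was \emph{fixed}: the atom-wise inequalities $\bE^\bP[Z\,Y_H\1_A]\le 0$ can be summed over atoms to give $\bE^\bP[Z\,Y_H]\le 0$, and conversely split by $\1_A$--stability. In NGD the density sits inside the infimum: what NGD actually yields is that for every $H$ and every atom $A$ there is \emph{some} $Z_A\in\cE^\gamma_t$ (possibly depending on $A$) with $\bE^\bP[Z_A\,Y_H\1_A]\le 0$. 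To conclude $\min_Z\bE^\bP[Z\,Y_H]\le 0$ you need a single $Z$ that works for all atoms simultaneously, i.e.\ an $\cF_t$--pasting of the $Z_A$'s inside $\cE^\gamma_t$; compactness and convexity of $\cE^\gamma_t$ alone do not provide this. A two-atom example with $\cE^\gamma_t$ a segment between two densities and $\cY$ the cone generated by a single $Y_H$ and its $\1_A$--restrictions already shows that the atom-wise NGD condition need not imply your scalar one. Since Sion is applied to the scalar form, the minimax step is currently ungrounded.

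By way of comparison, the paper does \emph{not} collapse to a single unconditional expectation. It keeps the $\omega$--coordinates explicit: for any finite tuple $\mathbf H=(H^1,\dots,H^M)$ it looks at the image of $\cE^\gamma_t$ in $\bR^{N\times M}$ under $\eta\mapsto\big[\bE^\bP_t[\eta\sum_{s>t}H^i_s](\omega_j)\big]_{j,i}$, separates it from the nonpositive orthant by Hahn--Banach, derives a contradiction with NGD at a specific $\omega_j$, and then passes from finite tuples to all of $\cH(t)$ by the finite intersection property on the compact set $\cE^\gamma_t$. The point is that the atom index $j$ is carried along throughout, which is exactly the structure your scalar reduction discards. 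If you want to retain the Sion route, you must either establish an $\cF_t$--stability (pasting) property for $\cE^\gamma_t$, or run the minimax atom by atom and then justify that the resulting optimisers $Z^*_A$ can be glued into a single element of $\cE^\gamma_t$.
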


\begin{proof}
Throughout the proof we fix $t \in \{0, \dots, T-1\}$ and $\gamma>0$.

\noindent
($\Longleftarrow$)
Suppose that $\bQ^* \in\cR(\cH(t)) \cap \cQ^{\gamma}_t $.
Since $\bQ^*$ is risk-neutral, it follows that\\ $\bE^{\bQ^*}_t[\sum_{s=t+1}^T H_s]\leq0$ for all $H \in \cH(t)$.
Due to Theorem~\ref{theorem:DCRMrep} (Robust Representation  of DCRM), we have
 \begin{align*}
  -\rho^{\gamma}_t(H) &=\underset{\bQ \in \cQ^{\gamma}_{t}}{\inf} \bE^{\bQ}_t \Big[ \sum_{s=t+1}^TH_s \Big] \leq  \bE^{\bQ^*}_t \Big[ \sum_{s=t+1}^TH_s \Big]\leq 0\, ,
  \end{align*}
for all $\;H \in \cH(t)$.
Thus, $\rho^{\gamma}_t(H) \geq 0$ for any $H \in \cH(t)$, and hence NGD holds true for $\cH(t)$ at time $t $ and level $\gamma$.

\noindent
($\Longrightarrow$)
Fix $\ M\in\bN$ and $\textbf{H}:=(H^1, H^2, \dots, H^M) \in \cH(t)\times \cH(t)\times \cdots \times \cH(t)$.
Let $\cE^{\gamma}_t$ be the set defined in Assumption~(B), and let us consider the following set of matrices

\begin{equation*}
 \cZ_t(\textbf{H}):=\left\{\Bigg[\bE^{\bP}_t \Big[\eta\sum_{s=t+1}^T H^i_s \Big](\omega_j) \Bigg]_{j=1, \dots, N;i=1, \dots, M}:  \eta \in \cE^{\gamma}_t \right\}\subset \bR ^{N\times M}.
 \end{equation*}

Since $\cE^{\gamma}_t$ is compact, by continuity of  the mapping
\begin{equation*}
 \cE^{\gamma}_t \ni \eta \mapsto \bE^{\bP}_t\Big[\eta\sum_{s=t+1}^T H_s^i\Big](\omega_j), \quad \; i=1,2, \dots, M; \;j=1, 2, \dots, N,
\end{equation*}
we conclude that $\cZ_t(\textbf{H})$ is compact in $\bR^{N\times M}$.
Also note that, by convexity of $\cE^{\gamma}_t$ and linearity of conditional expectations above w.r.t. $\eta$, the set $ \cZ_t(\textbf{H})$ is convex.

Let us now define a closed and convex set $\cC:=(-\infty, 0]^{N \times M}\subseteq \bR^{N \times M}$.
We will prove by contradiction that $\cZ_t(\textbf{H}) \cap \cC \neq \emptyset$. Towards this end let us
assume that $\cZ_t(\textbf{H}) \cap \cC = \emptyset$.
By a version of Hahn-Banach theorem  (see Theorem~\ref{th:AppendHahnBanach}), there exists a linear functional $\varphi^{t, \textbf{H}}: \bR^{N \times M} \rightarrow \bR$, and $\epsilon_{t, \textbf{H}}>0$ such that
\begin{align}
   \epsilon_{t, \textbf{H}} & \leq \varphi^{ t, \textbf{H}}(x), \label{eq:Th1_1}\\
  \varphi^{ t, \textbf{H}}(z)  &  \leq 0, \label{eq:Th1_2}
\end{align}
for all $x\in  \cZ_t(\textbf{H}),  \ z\in \cC$.
From the Riesz representation theorem, there exists $h^{ t, \textbf{H}} \in \bR^{N\times M}$ such that $\varphi^{t, \textbf{H}} (x) = \langle h^{ t, \textbf{H}}, x \rangle$ for all $x \in \bR^{N \times M}$, where $\langle x,y\rangle:=\sum_{j=1}^N\sum_{i=1}^M x_{i j}y_{i j}$ for all $x \in \bR^{N \times M}, y \in \bR^{N \times M}$ denotes the Frobenius inner product in $\bR^{N \times M}$.
From \eqref{eq:Th1_2}, we have that $\langle h^{ t, \textbf{H}}, z \rangle \leq 0$ for all $z \in \cC$, and therefore, $h^{t, \textbf{H}}_{ij} \geq 0$ for $i =1, \dots, M$ and $j =1, \dots, N$.

Since, in view of  \eqref{eq:Th1_1} we have that $h^{t, \textbf{H}}\neq 0$, we may assume without loss of generality that $\sum_{i=1}^M h^{ t, \textbf{H}}_{i j} =1$.

Also in view of \eqref{eq:Th1_1}, we deduce that
\begin{align*}
 0&< \epsilon_{t, \textbf{H}}  \leq  \sum_{j=1}^N \sum_{i=1}^M\, h^{ t, \textbf{H}}_{i j}\bE^{\bP}_t\Big[\eta\sum_{s=t+1}^T  H_s^i\Big](\omega_j) =\sum_{j=1}^N\bE^{\bP}_t\Big[\eta\sum_{s=t+1}^T \widetilde{H}_s(j)\Big](\omega_j)
\end{align*}
for all $\eta \in \cE^{\gamma}_t$, where $\widetilde{H}(j):=\sum_{i=1}^Mh^{t, \textbf{H}}_{ij}H^i$ for $j =1, \dots, N$.
Therefore, there exists  $j\in\{1, \dots, N\}$ and an $\epsilon>0$ so that
\begin{equation*}
  0< \epsilon<\bE^{\bP}_t\Big[\eta\sum_{s=t+1}^T \widetilde{H}_s(j)\Big](\omega_j) \, .
\end{equation*}
Let us define
\begin{equation*}
\epsilon':=\inf_{\eta\in\cE^\gamma_t}\frac{\epsilon}{\bE^{\bP}_t[\eta](\omega_j)}.
\end{equation*}
Since any $\eta\in \cE^{\gamma}_t$ is strictly positive and $\sup_{\eta \in \cE^{\gamma}_t}\bE^{\bP}_t[\eta](\omega_j)<\infty$, it follows that
 \begin{align*}
   0< \epsilon' &\leq \frac{\bE^{\bP}_t\Big[\eta\sum_{s=t+1}^T \widetilde{H}_s(j)\Big](\omega_j)}{\bE^{\bP}_t[\eta](\omega_j)   } =\bE^{\bQ}_t\Big[\sum_{s=t+1}^T \widetilde{H}_s(j)\Big](\omega_j)
 \end{align*}
 for all $\bQ \in \cQ^{\gamma}_t$.
 Consequently, taking infimum with respect to $\bQ \in \cQ^{\gamma}_t$ and applying Theorem~\ref{theorem:DCRMrep},  we get
\begin{equation}\label{eq:tmii}
0<  \epsilon'  \leq -\rho^{\gamma}_t(\widetilde{H}(j))(\omega_j) \, .
\end{equation}
The set $\cH(t)$ is a convex cone, hence $\widetilde{H}(j) \in \cH(t)$.
Thus, in view of \eqref{eq:tmii}, the cash flow $\widetilde{H}(j)\in \cH(t)$ violates the NGD condition for $\cH(t)$ at time $t$ and level $\gamma$, which is a contradiction.

Hence, $\cZ_t(\textbf{H}) \cap \cC\neq \emptyset$ for all $t \in \{0, \dots, T-1\}$ and $\textbf{H}\in \cH(t)\times \cdots \times \cH(t)$.
Consequently, for each $t\in \cT$,  $\textbf{H}\in \cH(t)\times \cdots \times \cH(t)$, the set
\begin{align*}
F_t(\textbf{H} ):&=  \left\{\eta \in \cE^{\gamma}_t : \; \bE^{\bP}_t \Big[\eta\sum_{s=t+1}^T H^i_s \Big](\omega_j)\leq 0, \; i =1,2, \dots, M,\;j=1, 2, \dots, N\right\}
\end{align*}
is nonempty.

Let us define the following mapping
\begin{equation*}
 \Psi_{t, \textbf{H}}(\zeta):= \Bigg[\bE^{\bP}_t \Big[\zeta\sum_{s=t+1}^T H^i_s \Big](\omega_j) \Bigg]_{j=1, \dots, N; i=1, \dots, M},
 \end{equation*}
 for any random variable $\zeta\, : \, \Omega \to \bR.$

Since,
\begin{align*}
\cZ_t(\textbf{H})\cap \cC &=  \left\{ \Bigg[\bE^{\bP}_t \Big[\eta\sum_{s=t+1}^T H^i_s \Big](\omega_j) \Bigg]_{j=1, \dots, N ; i=1, \dots, M} \; \right. \\
& \left. : \; \eta \in \cE^{\gamma}_t, \; \bE^{\bP}_t\Big[\eta\sum_{s=t+1}^T H^i_s \Big](\omega_j) \leq 0, \; i=1,2,\dots,M, \; j =1,2, \dots, N\right\},
\end{align*}
we have that $\Psi_{t, \textbf{H}}^{-1}(\cZ_t(\textbf{H})\cap\cC)=F_t( \textbf{H} )$.
Recall that $\cZ_t(\textbf{H})$ is compact and hence $\cZ_t(\textbf{H})\cap \cC$ is closed, and since  $\Psi_{t, \textbf{H}}$ is continuous, we conclude that
$F_t(\textbf{H} )$ is closed.

Now, note that
\begin{equation*}
F_t( \textbf{H} ) = \bigcap_{i =1}^M \Big\{\eta \in \cE^{\gamma}_t: \bE^{\bP}_t \Big[ \eta\sum_{s=t+1}^T H^i_s\Big](\omega_j) \leq 0 \; \textrm{for all} \; j=1, 2, \dots, N\Big\}\neq \emptyset.
\end{equation*}
Therefore, the family of subsets
\begin{equation*}
\Big\{\eta \in \cE^{\gamma}_t: \bE^{\bP}_t \Big[ \eta\sum_{s=t+1}^T H_s\Big](\omega_j) \leq 0  \; \textrm{for all} \; j=1, 2, \dots, N\Big\}_{H \in \cH(t)} \subseteq \cE^\gamma_t
 \end{equation*}
 satisfies the finite intersection property\footnote{The family of sets   $\{Y_{i}\}_{i \in \cI}$ has finite intersection property if $\bigcap_{i \in \cI'}Y_{i}$ is non-empty for any finite $\cI'\subset\cI$.}.
Since  $\cE^{\gamma}_t$ is compact, we have by Lemma~\ref{thm:FiniteIntersection} that the set
\begin{equation}\label{eq:Th1_3}
 U_{ t}:=\bigcap_{H \in \cH(t)}\Big\{\eta \in \cE^{\gamma}_t: \bE^{\bP}_t \Big[ \eta\sum_{s=t+1}^T H_s\Big]\leq 0 \Big\}
\end{equation}
is nonempty.
Hence, there exists an $\hat \eta \in \cE^{\gamma}_t$ so that $\bE^{\bP}_t [\hat \eta\sum_{s=t+1}^T H_s](\omega) \leq 0$ for all $\omega \in \Omega$ and $H \in \cH(t)$.
Now, let $\hat \bQ$ be a measure corresponding to $\hat \eta$, so that $\hat \bQ\in \cQ^\gamma_t.$
Using the abstract version of Bayes rule applied to $\hat \bQ$  we get
\begin{equation*}
  \bE^{\hat \bQ}_t \Big[ \sum_{s=t+1}^T H_s\Big]=\frac{\bE^{\bP}_t [\hat \eta\sum_{s=t+1}^T H_s]}{\bE^{\bP}_t[\hat \eta]}\leq 0
\end{equation*}
for all  $H \in \cH(t)$. So, in view of Definition~\ref{Def:RiskNeutralMeasure0} and Lemma~\ref{lemma: RHandRH0}, we see that $\hat \bQ\in \cR(\cH(t)).$
Thus,   $\cR(\cH(t)) \cap \cQ^{\gamma}_t \neq \emptyset$. The theorem is proved.
\end{proof}

Since $\cR(\cH(t)) \cap \cQ^{\gamma}_t\neq\emptyset$ implies $\cR(\cH(t))\neq \emptyset$, it is immediate from Proposition~\ref{theorem:FFTAP} and Theorem~\ref{theorem:NGDFTAP} that if NGD  holds, then the no-arbitrage condition also holds true.

\section{Dynamic ask and bid prices via DCAI}\label{sec:DynamicBidAskviaDCAI}
In this section, we derive the dynamic bid and ask prices for a derivative contract via DCAIs.
We start by constructing the set of extended cash flows that will be used to derive the good-deal ask and bid prices.
Let $D\in L^0$ be  a cash flow associated to a \emph{derivative contract}.
For a fixed $t \in \{0, \dots, T-1\}$, $D \in  L^0$, and an $\cF_t$-measurable random variable $X_t$, we define the following sets
\begin{align}
  \widehat{\cH}(t): =&\Big\{ \Big(0, \dots,0,\xi_t X^*_t, H_{t+1}-\xi_t D^*_{t+1}, \dots, H_T-\xi_t D^*_T\Big)\nonumber \\
& \qquad \qquad \qquad : H \in \cH(t),\; \xi_t \;\text{is}\; \cF_t\text{-measurable}, \;\xi_t \geq 0\Big\},\label{Eq:DefExtendedMarket} \\
  \overline{\cH}(t): =&\Big\{ \Big(0, \dots,0,-\xi_t X^*_t,H_{t+1}+ \xi_t D^*_{t+1}, \dots,H_T+\xi_t D^*_T\Big) \nonumber\\
& \qquad \qquad \qquad : H \in \cH(t), \;
\xi_t \;\text{is}\; \cF_t\text{-measurable}, \;\xi_t \geq 0\Big\}\label{Eq:DefExtendedMarket2},
\end{align}
where $X^*_t:=B^{-1}_tX_t$.
The pair $(\widehat{\cH}(t), \overline{\cH}(t))$ is interpreted as the \emph{set of extended cash flows}.

In particular, a cash flow  in $\widehat{\cH}(t)$ equals to the sum of a position in the underlying market $\cH(t)$ and a nonnegative \emph{static} position of $\xi_t$ units in the discounted cash flow \\ $(0, \dots, 0, X^*_t, -D^*_{t+1}, \dots, -D^*_{T})$.
Similarly, a cash flow  in $\overline{\cH}(t)$ equals to the sum of a position in the underlying market\footnote{Recall that $\cH(t)$ denotes the set of hedging cash flows initiated at time $t$.} $\cH(t)$ and a nonnegative \emph{static} position of
$\xi_t$ units in the discounted cash flow $(0, \dots, 0, -X^*_t, D^*_{t+1}, \dots, D^*_{T})$.

Notice that $ \cH(t) \subset \widehat{\cH}(t)\cap \overline{\cH}(t)$.
Indeed, taking any $H\in\cH(t)$ and $\xi_t=0$ in \eqref{Eq:DefExtendedMarket} and \eqref{Eq:DefExtendedMarket2}, we get that $H\in \widehat{\cH}(t)$ and $H\in\overline{\cH}(t)$.

Similarly to Definition~\ref{Def:RiskNeutralMeasure0}, we say that a probability measure $\bQ$ is \emph{risk-neutral for} $\widehat{\cH}(t)$, respectively $\overline{\cH}(t)$,  if $\bQ \sim \bP$, and $\bE^{\bQ}_t\big[\sum_{s=t}^TH_s\big]\leq 0$ for all $H \in \widehat{\cH}(t)$, respectively for all $H\in\overline{\cH}(t)$.
Also, we say that the \emph{no-good-deal condition holds for} $\widehat{\cH}(t)$, respectively $\overline{\cH}(t)$, at time $t\in \cT$ and level $\gamma>0$, if $\rho^{\gamma}_t(H)\geq0$ for all $H \in \widehat{\cH}(t)$, respectively  $H\in\overline{\cH}(t)$.
We denote by $\cR(\widehat{\cH}(t))$, respectively, $\cR(\overline{\cH}(t))$, the set of all risk-neutral measures for $\widehat{\cH}(t)$, respectively $\overline{\cH}(t)$.

\begin{remark}\label{remark:NGDextended}
Note that $\widehat{\cH}(t)$ and $\overline{\cH}(t)$ are convex cones. Thus, we may replace $\cH(t)$ with $\widehat{\cH}(t)$ or $\overline{\cH}(t)$ in Theorem~\ref{theorem:NGDFTAP} to prove that NGD holds for $\widehat{\cH}(t)$, respectively $\overline{\cH}(t)$, at time $t \in \cT$ and level $\gamma >0$  if and only if $\cR(\widehat{\cH}(t)) \cap \cQ^{\gamma}_t =\emptyset$, respectively $\cR(\overline{\cH}(t)) \cap \cQ^{\gamma}_t =\emptyset$.

\end{remark}

For the sake of brevity, we define the mappings $\delta^+_t, \delta_t: L^0 \rightarrow L^0$ as follows
\[
  \begin{array}{rccccccccc}
  \delta^{+}_t(D) :=\big(0, & \dots, & 0, & 0, & D_{t+1}, & \dots & D_T\big), &\quad &t \in \{0, \dots, T-1\},\\
  \delta_t(D)     :=\big(0, & \dots, & 0, & D_t, & 0, & \dots, & 0 \big), &\quad &t \in \cT.
  \end{array}
\]

Next we introduce the main objects of this study -- \emph{the good-deal ask and bid prices corresponding to a given DCAI $\alpha$}:
\begin{definition}\label{def:dynamicBidAsk}
The discounted \emph{good-deal ask and bid prices}  of a derivative contract $D \in  L^0$, at level $\gamma>0$, at time $t\in \{1, \dots, T-1\}$ are defined as
\begin{align*}
\Pi^{ask, \gamma}_t(D)(\omega):&= \inf\{v\in \bR: \;\text{there exists}\; H \in \cH(t)\; \text{s.t.} \;\alpha_t(\delta_t(\textbf{1}v)+H-\delta^{+}_t(D^*))(\omega) \geq \gamma\},\\
\Pi^{bid, \gamma}_t(D)(\omega):&= \sup\{v\in \bR: \;\text{there exists}\; H \in \cH(t)\; \text{s.t.}\; \alpha_t(\delta^{+}_t(D^*)+H-\delta_t(\textbf{1}v))(\omega) \geq \gamma\},
\end{align*}
for all $\omega \in \Omega$.
\end{definition}

\begin{remark}
We stress that the good-deal prices depend on the choice of DCAI $\alpha$, level $\gamma$, and the set of hedging cash flows $\cH(t)$.
 First, we see  that, from the monotonicity property of DCAIs (D3), the good-deal ask (bid) price is non-decreasing (non-increasing) in $\gamma$.
Secondly, the good-deal ask (bid) price is non-increasing (non-decreasing) in  $\cH(t)$. This is because, as is easily seen, $\Pi^{ask, \gamma}_t(D)$ and $\Pi^{bid, \gamma}_t(D)$ satisfy
\begin{align*}
\Pi^{ask, \gamma}_t(D)(\omega)&=\inf \bigcup_{H \in \cH(t)}\{v\in \bR:  \;\alpha_t(\delta_t(\textbf{1}v)+H-\delta^{+}_t(D^*))(\omega) \geq \gamma\},\\
\Pi^{bid, \gamma}_t(D)(\omega)&= \sup \bigcup_{H \in \cH(t)}\{v\in \bR:  \;\alpha_t(\delta^{+}_t(D^*)+H-\delta_t(\textbf{1}v))(\omega) \geq \gamma\}
\end{align*}
for all $\omega \in \Omega$.
  \end{remark}

\begin{remark}
A natural question is: how should $\gamma$ be chosen to find the good-deal prices of an derivative contract?
As in Cherny and Madan \cite{MadanCherny2010} and Madan and Schoutens \cite{MadanCoconut2011,MadanSchoutensEq2011}, for a given $\alpha$, the level $\gamma$ can be calibrated from quoted prices of similar contracts. 
\end{remark}

\begin{remark}
The discounted good-deal ask price $\Pi^{ask, \gamma}_t(D)$ can be interpreted as the minimum amount of cash $v$ such that $v$ plus the  resulting hedging error is acceptable (in the sense of acceptability index $\alpha$) at least at level $\gamma$.
Similarly, the discounted good-deal bid price $\Pi^{bid, \gamma}_t(D)$ can be viewed as the maximum amount of cash $v$ such that $-v$ plus the  resulting hedging error is $\alpha$-acceptable at least at level $\gamma$.
\end{remark}

\begin{remark}
By Theorem~\ref{theorem:DCAIrep}, we have that
\begin{equation*}
\alpha_t(\delta_t(\textbf{1}v)+H-\delta^{+}_t(D^*))(\omega)=\sup\Big\{\gamma \in(0,+\infty):v+ \inf_{\mathbb{Q}\in\cQ^{\gamma}_t}\bE^{\bQ}_t\big[\sum_{s=t+1}^T H_s-D^*_s\big](\omega)\geq 0\Big\}
\end{equation*}
for all $\omega\in \Omega$, $t\in\{1, \dots, T-1\}$, and $D\in L^0$.
Since the cash flows $D^*$ and $H \in \cH(t)$ are discounted, the prices $\Pi^{ask, \gamma}(D)$ and $\Pi^{bid, \gamma}(D)$ are also discounted.
We took the liberty to denote them by $\Pi^{ask, \gamma}(D)$ and $\Pi^{bid, \gamma}(D)$ rather than $\Pi^{ask, \gamma, *}(D)$ and $\Pi^{bid, \gamma, *}(D)$ (which would agree with earlier notation) to ease exposition.
\end{remark}

\begin{proposition}
For any fixed $t \in \{1, \dots, T-1\}$, $D \in  L^0$, and $\gamma>0$, the sets
\begin{align*}
   &\{v\in \bR: \;\text{there exists}\; H \in \cH(t)\; \text{s.t.} \;\alpha_t(\delta_t(\textbf{1}v)+H-\delta^{+}_t(D^*))(\omega) \geq \gamma\},\\
   & \{v\in \bR: \;\text{there exists}\; H \in \cH(t)\; \text{s.t.}\; \alpha_t(\delta^{+}_t(D^*)+H-\delta_t(\textbf{1}v))(\omega) \geq \gamma\}
\end{align*}
are nonempty for all $\omega \in \Omega$.
\end{proposition}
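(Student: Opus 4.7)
The plan is to exhibit, for each $\omega \in \Omega$ and each $\gamma > 0$, an explicit $v \in \bR$ (paired with the trivial hedge $H = 0$) that lies in the respective set. First I observe that $0 \in \cH(t)$: the zero trading strategy $\phi \equiv 0$ is self-financing and satisfies $\phi_s = \mathbf{1}_{\{s\geq t+1\}}\phi_s$, hence $\phi \in \cS(t)$ with $V^*(\phi) \equiv 0$; taking $Z \equiv 0 \in \cL_+(t)$ in \eqref{Eq:DefUnderlyingMarket} then shows the identically zero cash flow is in $\cH(t)$. Consequently it suffices to choose $v$ so that $H = 0$ already achieves the required acceptability level.

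For the first set I would apply the representation formula stated in the remark preceding this proposition with $H = 0$, and simplify using $\inf_\bQ \bE^\bQ_t[-X] = -\sup_\bQ \bE^\bQ_t[X]$, obtaining
\[
\alpha_t\bigl(\delta_t(\mathbf{1}v) - \delta^+_t(D^*)\bigr)(\omega) = \sup\Bigl\{\gamma' > 0 : v \geq \sup_{\bQ \in \cQ^{\gamma'}_t} \bE^{\bQ}_t\Bigl[\sum_{s=t+1}^T D^*_s\Bigr](\omega)\Bigr\}.
\]
By Assumption~(B) the set $\cE^\gamma_t$ is closed and convex, and since $\Omega$ is finite and $\bP$ has full support it is also bounded, hence compact. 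Via Bayes' rule and the continuity of $\eta \mapsto \bE^\bP_t[\eta \sum_{s=t+1}^T D^*_s](\omega)$ on $\cE^\gamma_t$, the number $M := \sup_{\bQ \in \cQ^\gamma_t} \bE^{\bQ}_t[\sum_{s=t+1}^T D^*_s](\omega)$ is finite. Any $v \geq M$ then places $\gamma$ itself in the defining set, so $\alpha_t(\cdot)(\omega) \geq \gamma$ and $v$ lies in the first set.

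The argument for the second set is symmetric. Rewriting $\delta^+_t(D^*) - \delta_t(\mathbf{1}v) = \delta_t(\mathbf{1}(-v)) - \delta^+_t(-D^*)$ and applying the same representation with $v \mapsto -v$, $D^* \mapsto -D^*$, $H=0$ yields
\[
\alpha_t\bigl(\delta^+_t(D^*) - \delta_t(\mathbf{1}v)\bigr)(\omega) = \sup\Bigl\{\gamma' > 0 : v \leq \inf_{\bQ \in \cQ^{\gamma'}_t} \bE^{\bQ}_t\Bigl[\sum_{s=t+1}^T D^*_s\Bigr](\omega)\Bigr\}.
\]
Again by compactness of $\cE^\gamma_t$ the infimum $m := \inf_{\bQ \in \cQ^\gamma_t} \bE^{\bQ}_t[\sum_{s=t+1}^T D^*_s](\omega)$ is finite, and any $v \leq m$ forces $\alpha_t \geq \gamma$. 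No step is genuinely delicate: the proof is essentially a translation argument combined with the compactness of $\cE^\gamma_t$ granted by Assumption~(B), and the only fact needing a brief justification is the finiteness of the dual sup/inf over $\cQ^\gamma_t$, which is immediate in this finite-dimensional setting.
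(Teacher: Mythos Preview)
Your proof is correct and takes a more direct route than the paper. The paper argues by contradiction: assuming the first set is empty, it invokes the normalization of $\alpha$ to pick some $D'$ with $\alpha_t(D')=+\infty$, then builds a finite $v^*$ large enough that monotonicity of $\alpha$ forces $\alpha_t(\delta_t(\mathbf{1}v^*)+H-\delta^+_t(D^*))\geq\alpha_t(D')=+\infty$, contradicting the assumption. You instead exhibit $v$ explicitly by taking the trivial hedge $H=0$ and reading off from the robust representation exactly how large $v$ must be. Your approach is cleaner and avoids the contradiction machinery and the appeal to normalization; the paper's approach, in turn, works with fewer structural assumptions since it never needs to analyze $\cQ^\gamma_t$ directly.

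One small remark: your invocation of Assumption~(B) and compactness of $\cE^\gamma_t$ to obtain finiteness of $M$ and $m$ is correct but heavier than necessary. Since $\Omega$ is finite, $\sum_{s=t+1}^T D^*_s$ is bounded, so $\bE^\bQ_t\bigl[\sum_{s=t+1}^T D^*_s\bigr](\omega)$ is uniformly bounded over \emph{all} probability measures $\bQ$ and all $\omega$, and thus $M$ and $m$ are finite without any hypothesis on $\cQ^\gamma_t$. Also, when you pass through Bayes' rule, the relevant continuous map on $\cE^\gamma_t$ is $\eta\mapsto \bE^\bP_t[\eta X](\omega)/\bE^\bP_t[\eta](\omega)$; you mention only the numerator, but the denominator is continuous and strictly positive (each $\bQ\sim\bP$), hence bounded away from zero on the compact $\cE^\gamma_t$, so the quotient is indeed continuous and bounded.
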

\begin{proof} The proof will be done by contradiction. Towards this end let us fix $t \in \{1, \dots, T-1\}$, $D \in  L^0$, and $\gamma>0$.

Suppose that
\begin{equation*}
\alpha_t(\delta_t(\textbf{1}v)+H-\delta^{+}_t(D^*))< \gamma
\end{equation*}
for all $v \in \bR$ and $H \in \cH(t)$.
By Theorem~\ref{theorem:DCAIrep}, we have that
\begin{equation*}
\alpha_t(\delta_t(\textbf{1}v)+H-\delta^{+}_t(D^*))(\omega)=\sup\Big\{\beta\in(0,+\infty): v+\inf_{\bQ\in\cQ^{\beta}_t}\bE^{\bQ}_t\Big[\sum_{s=t+1}^T H_s-D_s\Big](\omega)\geq 0\Big\} <\gamma
\end{equation*}
for all $v \in \bR$ and $H \in \cH(t)$.
Since $\alpha$ is normalized, there exists $D' \in  L^0$ such that $\alpha_t(D')=+\infty$.
Let us define $v^*$ as
\begin{equation*}
 v^*:=\sup_{ \omega \in \Omega} \sup_{H \in \cH(t)}\Bigg\{\sup_{\bQ\in\cQ^{\gamma}_t}\bE^{\bQ}_t\Big[\sum_{s=t+1}^T D'_s\Big](\omega)-\inf_{\bQ\in\cQ^{\gamma}_t}\bE^{\bQ}_t\Big[\sum_{s=t+1}^T H_s-D_s\Big](\omega)\Bigg\}.
\end{equation*}

Note that,
$ v^*  \geq \sup_{H \in \cH(t)}\Bigg\{\bE^{\bQ}_t\Big[\sum_{s=t+1}^T D'_s-H_s+D_s\Big](\omega)\Bigg\},$
$\omega \in \Omega, \ \bQ \in \cQ^{\gamma}_t,$ and since $H=(0,0,\ldots,0)\in \cH(t)$, we have that
$$
v^* \geq \bE^{\bQ}_t\Big[\sum_{s=t+1}^TD'_s+D_s\Big](\omega) >-\infty, \qquad \omega \in \Omega, \; \bQ \in \cQ^{\gamma}_t.
$$
Next, we see that
\begin{equation*}
 v^*+\bE^{\bQ}_t\Big[\sum_{s=t+1}^TH_s-D_s\Big](\omega)\geq \bE^{\bQ}_t\Big[\sum_{s=t+1}^T D'_s\Big](\omega),
\end{equation*}
for all $\bQ \in \cQ^{\gamma}_t$, $\omega \in \Omega$, and $H \in \cH(t)$.
From the monotonicity property of $\alpha$, we have that
\begin{equation*}
\alpha_t(\delta_t(\textbf{1}v^*)+H-\delta^{+}_t(D^*)) \geq \alpha_t(D')=+\infty,
\end{equation*}
which contradicts $\alpha_t(\delta_t(\textbf{1}v)+H-\delta^{+}_t(D^*))(\omega) < \gamma$ for all $v \in \bR$.
\end{proof}

We close this section with a technical result, which provides a ``symmetry'' between ask and bid prices, that will be used later.
  \begin{lemma}\label{lemma:RepBidAskNegation}
For any $D\in L^0$, $\gamma >0$, and $t \in \{0, \dots, T-1\}$ we have that $$\Pi^{ask, \gamma}_t(D)=-\Pi^{bid, \gamma}_t(-D).$$
  \end{lemma}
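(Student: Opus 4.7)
My plan is to prove this by a direct change of variables in the definitions, exploiting the linearity of the operators $\delta_t^+$ and $\delta_t$ together with the substitution $v \mapsto -v$, which swaps $\inf$ and $\sup$.

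First, I would unfold the right-hand side using Definition~\ref{def:dynamicBidAsk} applied to the contract $-D$:
\begin{equation*}
\Pi^{bid, \gamma}_t(-D)(\omega) = \sup\{v\in \bR: \exists\, H \in \cH(t) \text{ s.t. } \alpha_t(\delta^{+}_t((-D)^*)+H-\delta_t(\textbf{1}v))(\omega) \geq \gamma\}.
\end{equation*}
The key observation is that the maps $D \mapsto \delta_t^+(D^*)$ and $v \mapsto \delta_t(\textbf{1}v)$ are linear in their arguments, so $\delta_t^+((-D)^*) = -\delta_t^+(D^*)$ and $\delta_t(\textbf{1}(-v)) = -\delta_t(\textbf{1}v)$. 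Substituting $w = -v$ (so $v$ ranges over $\bR$ iff $w$ does, and $\sup_v = -\inf_w$), the displayed set becomes
\begin{equation*}
-\inf\{w\in \bR: \exists\, H \in \cH(t) \text{ s.t. } \alpha_t(\delta_t(\textbf{1}w)+H-\delta^{+}_t(D^*))(\omega) \geq \gamma\}.
\end{equation*}

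Second, I would compare the expression inside $\alpha_t$ above to the one appearing in the definition of $\Pi^{ask,\gamma}_t(D)$: both equal $\delta_t(\textbf{1}w) + H - \delta_t^+(D^*)$. Since $\cH(t)$ is the same set on both sides (it depends on the market, not on $D$), the two infima are taken over identical sets. This yields exactly $-\Pi^{ask,\gamma}_t(D)(\omega)$.

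I do not anticipate a serious obstacle: the proof is a purely algebraic symmetry, and the only point requiring a bit of care is the bookkeeping of signs when pushing the $(-1)$ through $\delta_t^+$ and $\delta_t$, and verifying that the set of admissible hedges $\cH(t)$ is unaffected by the change of variables (which it is, because $\cH(t)$ is defined without reference to $D$). Rearranging gives $\Pi^{ask,\gamma}_t(D) = -\Pi^{bid,\gamma}_t(-D)$, as required.
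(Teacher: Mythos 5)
Your proposal is correct and follows essentially the same argument as the paper: unfold the definitions, substitute $v\mapsto -v$ to turn the $\sup$ into $-\inf$, and use that $\delta^{+}_t((-D)^*)=-\delta^{+}_t(D^*)$ so that the arguments of $\alpha_t$ coincide (the paper simply runs the same computation starting from $\Pi^{ask,\gamma}_t(D)$ instead of from $\Pi^{bid,\gamma}_t(-D)$). No gaps to report.
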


\begin{proof}
Using the definitions of $\Pi^{ask, \gamma}_t(D)$ and $\Pi^{bid, \gamma}_t(D)$, we have
  \begin{align*}
    \Pi^{ask, \gamma}_t(D) &= \inf\{v\in \bR: \;\text{there exists}\; H \in \cH(t)\; \text{s.t.} \;\alpha_t(\delta_t(\textbf{1}v)+H-\delta^{+}_t(D^*)) \geq \gamma\} \\
     & = - \sup\{-v\in \bR: \;\text{there exists}\; H \in \cH(t)\; \text{s.t.} \;\alpha_t(\delta_t(\textbf{1}v)+H-\delta^{+}_t(D^*)) \geq \gamma\} \\
    & = -\sup\{v\in \bR: \;\text{there exists}\; H \in \cH(t)\; \text{s.t.} \;\alpha_t(-\delta_t(\textbf{1}v)+H-\delta^{+}_t(D^*)) \geq \gamma\}\\
    & = -\Pi^{bid, \gamma}_t(-D).
  \end{align*}
\end{proof}

\subsection{Dual representation of good-deal ask and bid prices}\label{sec:PricesRep}
We are now in position to prove a representation theorem for the discounted good-deal ask and bid prices.
Let us first make the following standing technical assumption

\smallskip
\noindent {\bf Assumption (C):}\label{Assumption:ContinuityRho}
\textup{The mapping $\gamma \mapsto \rho^{\gamma}$ is continuous.}

\smallskip \noindent
In Section~\ref{sec:PricingDGLR}, we prove that the dynamic Gain-Loss Ratio satisfies this assumption.

We proceed by showing that, for any derivative contract $D\in L^0$, the prices $\Pi^{ask}_t(D)$ and $\Pi^{bid}_t(D)$ have useful representations in terms of the sets $\cR(\cH(t))$ and $\cQ^{\gamma}_t(\cH(t))$.

\begin{theorem}\label{Thm:RepSpotAskBid}
The discounted good-deal ask and bid prices of a derivative contract $D \in  L^0$, at level $\gamma>0$, at time $t\in \{1, \dots, T-1\}$ satisfy
\begin{align*}
 \Pi^{ask, \gamma}_t(D) &=\sup_{ \mathbb{Q} \in \cQ^{\gamma}_t \cap \cR(\cH(t))} \bE^{\mathbb{Q}}_t \Big [ \sum _{s=t+1}^T D^*_s \Big] \, ,\\
 \Pi^{bid, \gamma}_t(D) &=\inf_{ \mathbb{Q} \in \cQ^{\gamma}_t \cap \cR(\cH(t))} \bE^{\mathbb{Q}}_t \Big [ \sum _{s=t+1}^T D^*_s \Big] \, .
\end{align*}
\end{theorem}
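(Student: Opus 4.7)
My strategy is to reduce the bid identity to the ask identity via Lemma~\ref{lemma:RepBidAskNegation}, and prove the ask representation in three steps: first rewrite $\Pi^{ask,\gamma}_t(D)$ in terms of the family $(\rho^\gamma_t)$, second establish the easy ``$\geq$'' bound, and third use the no-good-deal FTAP (Theorem~\ref{theorem:NGDFTAP} as extended in Remark~\ref{remark:NGDextended}) applied to the convex cone $\widehat{\cH}(t)$ to obtain the reverse bound.

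\textbf{Step 1 (dual form of $\Pi^{ask,\gamma}_t(D)$).} Using the representation of $\alpha_t$ in terms of $(\rho^\gamma_t)_{\gamma>0}$ recalled in the remark after Definition~\ref{def:dynamicBidAsk} (Theorem~\ref{theorem:DCAIrep}) and continuity Assumption~(C), I would show that $\alpha_t(\delta_t(\mathbf{1}v)+H-\delta^+_t(D^*))(\omega)\ge\gamma$ is equivalent to $\rho^\gamma_t(\delta_t(\mathbf{1}v)+H-\delta^+_t(D^*))(\omega)\le 0$, which, by cash-invariance and the robust representation of $\rho^\gamma_t$ (Theorem~\ref{theorem:DCRMrep}), reduces to $v \ge \sup_{\bQ\in\cQ^\gamma_t}\bE^\bQ_t\bigl[\sum_{s=t+1}^T(D^*_s-H_s)\bigr](\omega)$. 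Taking the infimum over $v$ and then over $H\in\cH(t)$ pointwise in $\omega$ yields
\[
\Pi^{ask,\gamma}_t(D)(\omega) \;=\; \inf_{H\in\cH(t)}\sup_{\bQ\in\cQ^\gamma_t}\bE^\bQ_t\Big[\sum_{s=t+1}^T(D^*_s-H_s)\Big](\omega).
\]

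\textbf{Step 2 (``$\geq$'' inequality).} For $\bQ\in\cQ^\gamma_t\cap\cR(\cH(t))$ and any $H\in\cH(t)$, risk-neutrality gives $\bE^\bQ_t[\sum_{s=t+1}^T H_s]\le 0$, hence $\sup_{\bQ'\in\cQ^\gamma_t}\bE^{\bQ'}_t[\sum(D^*_s-H_s)] \ge \bE^\bQ_t[\sum D^*_s]$. Passing to the infimum over $H$ and then the supremum over $\bQ$ gives $\Pi^{ask,\gamma}_t(D) \ge \sup_{\bQ\in\cQ^\gamma_t\cap\cR(\cH(t))}\bE^\bQ_t[\sum D^*_s]$.

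\textbf{Step 3 (``$\leq$'' inequality via NGD-FTAP on $\widehat{\cH}(t)$).} Set $X_t:=\Pi^{ask,\gamma}_t(D)\,B_t$, so that $X^*_t=\Pi^{ask,\gamma}_t(D)$, and form $\widehat{\cH}(t)$ as in \eqref{Eq:DefExtendedMarket}. The core verification is that NGD holds for $\widehat{\cH}(t)$ at level $\gamma$. For $Y=(0,\dots,0,\xi_t X^*_t,H_{t+1}-\xi_tD^*_{t+1},\dots,H_T-\xi_tD^*_T)\in\widehat{\cH}(t)$, I would invoke cash-invariance of $\rho^\gamma_t$ in the $\cF_t$-measurable slot together with local positive homogeneity: on $\{\xi_t>0\}$, using that $\cH(t)$ is a convex cone closed under multiplication by nonnegative $\cF_t$-measurable random variables, write $H=\xi_t\tilde H$ with $\tilde H\in\cH(t)$ and compute
\[
\rho^\gamma_t(Y) \;=\; \xi_t\Big(-\Pi^{ask,\gamma}_t(D) + \sup_{\bQ\in\cQ^\gamma_t}\bE^\bQ_t\Big[\sum_{s=t+1}^T(D^*_s-\tilde H_s)\Big]\Big) \;\ge\; 0,
\]
by the identity of Step~1. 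On $\{\xi_t=0\}$, $Y$ reduces to some $H\in\cH(t)$, so the required inequality is exactly NGD for $\cH(t)$ at level $\gamma$. (If the latter fails, subadditivity and the cone property of $\cH(t)$ force $\Pi^{ask,\gamma}_t(D)=-\infty$ and simultaneously $\cR(\cH(t))\cap\cQ^\gamma_t=\emptyset$, so both sides equal $-\infty$.) Given NGD for $\widehat{\cH}(t)$, Theorem~\ref{theorem:NGDFTAP} applied to $\widehat{\cH}(t)$ (as per Remark~\ref{remark:NGDextended}) yields $\bQ^*\in\cR(\widehat{\cH}(t))\cap\cQ^\gamma_t$. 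Choosing $\xi_t=1,H=0$ in the risk-neutrality inequality gives $\Pi^{ask,\gamma}_t(D)\le\bE^{\bQ^*}_t[\sum_{s=t+1}^T D^*_s]$, while $\xi_t=0$ recovers $\bQ^*\in\cR(\cH(t))$. Taking the supremum over such $\bQ^*$ completes the ask representation. The bid identity then follows from Lemma~\ref{lemma:RepBidAskNegation} applied to the ask formula evaluated at $-D$.

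\textbf{Main obstacle.} The most delicate point is the positive-homogeneity reduction inside the NGD verification for $\widehat{\cH}(t)$: I need to justify that $\cH(t)$ is closed under multiplication by nonnegative $\cF_t$-measurable random variables (so that $H/\xi_t\in\cH(t)$ on $\{\xi_t>0\}$), that $\rho^\gamma_t$ commutes with such scaling, and that these manipulations may be carried out pointwise in $\omega$, given that $\Pi^{ask,\gamma}_t(D)$ is itself an $\cF_t$-measurable random variable rather than a deterministic constant.
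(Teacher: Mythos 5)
Your proposal is correct and follows essentially the same route as the paper's proof: the dual form $\Pi^{ask,\gamma}_t(D)=\inf_{H\in\cH(t)}\rho^\gamma_t(H-\delta^+_t(D^*))$ via Theorem~\ref{theorem:Duality} and Assumption~(C), the easy ``$\geq$'' bound from measures in $\cQ^\gamma_t\cap\cR(\cH(t))$, the verification of NGD for the extended cone $\widehat{\cH}(t)$ (using exactly the closure of $\cH(t)$ under multiplication by nonnegative $\cF_t$-measurable random variables and properties (A2), (A6) that the paper also invokes), Theorem~\ref{theorem:NGDFTAP} via Remark~\ref{remark:NGDextended} to produce $\bQ^*\in\cR(\widehat{\cH}(t))\cap\cQ^\gamma_t$, and Lemma~\ref{lemma:RepBidAskNegation} for the bid price. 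The only deviation is that you plug $X^*_t=\Pi^{ask,\gamma}_t(D)$ directly into $\widehat{\cH}(t)$ rather than working with an arbitrary $\cF_t$-measurable $X_t$ satisfying $X^*_t<\Pi^{ask,\gamma}_t(D)$ as the paper does, which is a harmless streamlining of the same argument.
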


\begin{proof}
In view of Lemma~\ref{lemma:RepBidAskNegation}, it is enough to prove that the theorem holds for $\Pi^{ask, \gamma}(D)$.
Let $D \in  L^0$, $\gamma >0$, and $t \in \{1, \dots, T-1\}$.
We first show that
\begin{equation*}
\Pi^{ask, \gamma}_t(D) \leq \sup_{ \mathbb{Q} \in \cQ^{\gamma}_t \cap \cR(\cH(t))} \bE^{\mathbb{Q}}_t \Big [ \sum _{s=t+1}^T D^*_s \Big] \, .
\end{equation*}

Using Theorem~\ref{theorem:Duality} and Lemma~\ref{lemma:ContinuityRhoSet}, as well as the continuity and monotonicity of the map $\gamma \mapsto \rho^{\gamma}$, we obtain
\begin{equation}\label{eq:ThemRep7}
   \Pi^{ask, \gamma}_t(D)(\omega) =\inf\Big\{ v \in \bR: \; \text{there exists} \;H \in \cH(t)\; \text{s.t.} \; \rho^{\gamma} _t(\delta_t(\textbf{1}v)+H-\delta^{+}_t(D))(\omega)  \leq 0 \Big\}
\end{equation}
for all $\omega \in \Omega$.

Now fix an $\cF_t$-measurable random variable $X_t$, and let $\cP^t:=\{P^t_1, P^t_2, \dots, P^t_{n_t}\}$ be the unique partition that generates $\cF_t$.
Fix $P^t_i\neq \emptyset$ and let $\omega_i \in P^t_i$.
Then $\1_{P^t_i}(\omega)X_{t}(\omega_i)=\1_{P^t_i}(\omega)X_{t}(\omega)$ for all $\omega \in \Omega$.
Using \eqref{eq:ThemRep7}, we have that $\Pi^{ask, \gamma}_t(D)(\omega_i) >X^*_{t}(\omega_i)$ if and only if
\begin{equation*}
X^*_{t}(\omega_i) \notin \Big\{ v \in \bR:\;\text{there exists} \; H \in \cH(t)\; \text{s.t.} \; \rho^{\gamma} _t(\delta_t(\textbf{\textbf{1}}v)+H-\delta^{+}_t(D^*))(\omega_i)  \leq0 \Big\}.
\end{equation*}
Equivalently,
\begin{equation*}
  \1_{P^t_i}(\omega)\rho^{\gamma}_t(\delta_t(\textbf{\textbf{1}}X^*_{t}(\omega_i))+H-\delta^{+}_t(D^*))(\omega) > 0, \qquad H \in \cH(t), \omega \in P^t_i.
\end{equation*}
By property (A2) in Definition~\ref{Def:DCRM} of $\rho^{\gamma}$, it follows that
\begin{equation*}
  \rho^{\gamma}_t(\delta_t(\textbf{1}X^*_{t})+H-\delta^{+}_t(D^*))(\omega) \geq 0,\quad \;H \in \cH(t), \omega \in P^t_i.
\end{equation*}

By Theorem~\ref{theorem:DCRMrep} and property (A6),  we deduce that
\begin{equation*}
\rho^{\gamma}_t(\delta_t(\textbf{1}\xi_tX^*_{t})+\xi_t H-\xi_t \delta^{+}_t(D^*)) \geq  0
\end{equation*}
for any $H \in \cH(t)$ and any nonnegative $\cF_t$-measurable random variable $\xi_t$.
Since $\cH(t)$ is closed under multiplication of nonnegative $\cF_t$-measurable random variables, the inequality above is equivalent to
\begin{equation*}
\rho^{\gamma}_t(\xi_t\delta_t(\textbf{1} X^*_{t})+ H-\xi_t \delta^{+}_t(D^*))\geq 0
\end{equation*}
for any $H \in \cH(t)$ and any nonnegative $\cF_t$-measurable random variable $\xi_t$.

Therefore, by the definition of $\widehat{\cH}(t)$, we see that
\begin{equation*}
\rho^{\gamma}_t(\widehat{H})\geq  0,   \quad  \;\widehat{H} \in \widehat{\cH}(t),
\end{equation*}
and hence NGD holds  for $\widehat{\cH}(t)$, at time $t$ and level $\gamma$.
It follows that $\cR(\widehat{\cH}(t)) \cap \cQ^{\gamma}_t \neq \emptyset$ (see Remark~\ref{remark:NGDextended}).
Let $\bQ^* \in \cR(\widehat{\cH}(t)) \cap \cQ^{\gamma}_t$.

From the definition of $\cR(\widehat{\cH}(t))$, we have that
\begin{equation}\label{eq:RepThm3}
\bE^{\bQ^*}_t\Big[\sum_{u=t+1}^T(H_u -\xi_t D^*_u)\Big]+\xi_t X^*_{t}\leq 0
\end{equation}
 for all $H \in \cH(t)$ and all nonnegative $\cF_t$-measurable random variables $\xi_t$.
Note that $\cR(\cH(t)) \supseteq \cR(\widehat{\cH}(t))$ since $\cH(t)\subset \widehat{\cH}(t)$.
Thus, $\bQ^* \in \cQ^{\gamma}_t\cap \cR(\cH(t))$.
Because $0 \in \cH(t)$, we may let $H=0$ in \eqref{eq:RepThm3} to conclude that, if $\Pi^{ask, \gamma}_t(D) >X^*_{t}$, then there exists $\bQ^* \in  \cQ^{\gamma}_t\cap\cR(\cH(t))$ such that

\begin{equation*}
\bE^{\bQ^*}_t\Big[\sum_{s=t+1}^TD^*_s\Big]\geq X^*_{t}.
\end{equation*}
Since $X_t$ is arbitrary,
\begin{equation}
\Pi^{ask, \gamma}_t(D) \leq \sup_{ \mathbb{Q} \in \cQ^{\gamma}_t \cap \cR(\cH(t))} \bE^{\mathbb{Q}}_t \Big [ \sum _{s=t+1}^T D^*_s \Big] \, .
\label{eq:askLeqsup}
\end{equation}

We proceed by showing that
\begin{equation*}
\Pi^{ask, \gamma}_t(D)\geq \sup_{ \mathbb{Q} \in \cQ^{\gamma}_t \cap \cR(\cH(t))} \bE^{\mathbb{Q}}_t \Big [ \sum _{s=t+1}^T D^*_s \Big] \, .
\end{equation*}
 By Theorem~\ref{theorem:DCRMrep},
 \begin{equation}
   \rho^{\gamma}_t(H-\delta^{+}_t(D^*))=\underset{ \bQ \in \cQ^{\gamma}_t}{\sup} \bE_t^{\bQ}\Big[\sum_{s=t+1}^TD^*_s-H_s\Big] \, .\label{eq:Thm2rho1}
 \end{equation}
Also, we have that
 \begin{align*}
  \underset{ \bQ \in \cQ^{\gamma}_t}{\sup} \bE_t^{\bQ}\Big[\sum_{s=t+1}^T D^*_s-H_s\Big]
  &\geq \underset{ \bQ \in \cQ^{\gamma}_t\cap \cR(\cH(t))}{\sup} \bE_t^{\bQ}\Big[\sum_{s=t+1}^T D^*_s-H_s\Big]
  \geq \bE_t^{\bQ}\Big[\sum_{s=t+1}^TD^*_s\Big] \, , \label{eq:Thm2rho2}
 \end{align*}
 for all $H \in \cH(t)$ and $\bQ \in \cQ^{\gamma}_t\cap\cR(\cH(t))$.
Therefore,
  \begin{equation}\label{Eq:InequalityRepThm1}
    \rho^{\gamma}_t(H-\delta^{+}_t(D^*)) \geq \bE_t^{\bQ}\Big[\sum_{s=t+1}^TD^*_s\Big] \, , \quad \cH \in \cH(t), \; \bQ \in \cQ^{\gamma}_t \cap \cR(\cH(t)) \, .
  \end{equation}
Note that
\begin{equation*}
\Pi^{ask, \gamma}_t(D)(\omega)= \underset{H \in \cH(t)}\inf\inf\{v\in \bR: \;\alpha_t(\delta_t(\textbf{1}v)+H-\delta^{+}_t(D^*))(\omega) \geq \gamma\}
\end{equation*}
for all $\omega \in \Omega$.
In view of  Theorem~\ref{theorem:Duality},
\begin{equation*}
\Pi^{ask, \gamma}_t(D)= \underset{H \in \cH(t)}\inf\rho^{\gamma}_t(H-\delta^{+}_t(D^*)) \, .
\end{equation*}
Hence, applying \eqref{Eq:InequalityRepThm1} we see that
\begin{equation}
\Pi^{ask, \gamma}_t(D)\geq \sup_{ \mathbb{Q} \in \cQ^{\gamma}_t \cap \cR(\cH(t))} \bE^{\mathbb{Q}}_t \Big [ \sum _{s=t+1}^T D^*_s \Big] \, . \label{eq:askGeqsup}
\end{equation}
In virtue of \eqref{eq:askLeqsup} and \eqref{eq:askGeqsup}, we conclude the  proof.
\end{proof}

Let us now make a few remarks regarding Theorem~\ref{Thm:RepSpotAskBid}.
\begin{remark}
  If NGD holds false for $\cH(t)$, at time $t \in \{1, \dots, T-1\}$, at level $\gamma$, then
  \begin{align*}
 \Pi^{ask, \gamma}_t(D)(\omega) &=-\infty ,\\
 \Pi^{bid, \gamma}_t(D)(\omega) &=\infty,
\end{align*}
for all $\omega \in \Omega$ and $D \in L^0$.
\end{remark}

\noindent
In the next remark, we treat the case in which the markets are frictionless and complete.

\begin{remark}\label{remark:complete}
  If, for $t \in\{1, \dots, T-1\}$, the set of hedging cash flows $\cH(t)$ satisfies the no-arbitrage condition, and $\cH(T-1)$ is complete (for any $D \in L^0$, there exists $H \in \cH(T-1)$ so that $H_T=D_T$), then it follows from the Fundamental Theorems of Asset Pricing  that $\cR(\cH(t)) \neq \emptyset$, for $t =1, 2, \dots, T-2$, and $\cR(\cH(T-1))=\{\bQ^*\}$.
  Since $\cR(\cH(0)) \subseteq \cdots \subseteq \cR(\cH(T-1))$, we have that $\cR(\cH(t))=\{\bQ^*\} \neq \emptyset$ for $t=0, 1, \dots, T-2$.
  By Theorems~\ref{theorem:NGDFTAP} and~\ref{Thm:RepSpotAskBid}, if NGD holds then the good-deal ask and bid prices of a derivative contract $D \in L^0$, at time $t \in \cT$ and level $\gamma>$, satisfy
  \begin{align*}
    \Pi^{ask, \gamma}_t(D)=\Pi^{bid, \gamma}_t(D)=\bE_t^{\bQ^*}\Big[\sum_{s=t+1}^TD^*_s\Big] \, .
  \end{align*}
  Notice that, naturally, the good-deal prices no longer depend on the acceptance level $\gamma$.
\end{remark}

\begin{remark}
Let us consider the sets of extended cash flows associated with good-deal prices $\Pi^{ask, \gamma}_t(D)$ and $\Pi^{bid, \gamma}_t(D)$:
\begin{align*}
  \widehat{\cH}(t) =&\Big\{ \Big(0, \dots,0,\xi_t \Pi^{ask, \gamma}_t(D), H_{t+1}-\xi_t D^*_{t+1}, \dots, H_T-\xi_t D^*_T\Big)\nonumber \\
& \qquad : H \in \cH(t),\; \xi_t \;\text{is}\; \cF_t\text{-measurable}, \;\xi_t \geq 0\Big\},\\
  \overline{\cH}(t) =&\Big\{ \Big(0, \dots,0,-\xi_t \Pi^{bid, \gamma}_t(D),H_{t+1}+ \xi_t D^*_{t+1}, \dots,H_T+\xi_t D^*_T\Big) \nonumber\\
& \qquad : H \in \cH(t), \;
\xi_t \;\text{is}\; \cF_t\text{-measurable}, \;\xi_t \geq 0\Big\}.
\end{align*}
If $\cH(t)$ is frictionless and complete (and therefore linear), and NGD holds, then as in Remark~\ref{remark:complete}, we have that $\Pi_t(D):=\Pi^{ask, \gamma}_t(D)= \Pi^{bid, \gamma}_t(D)$.
In this case, the set
\begin{align*}
\widehat{\cH}(t)+\overline{\cH}(t)&=\Big\{ \Big(0, \dots,0,\xi_t \Pi_t(D), H_{t+1}-\xi_t D^*_{t+1}, \dots, H_T-\xi_t D^*_T\Big)\nonumber \\
& \qquad : H \in \cH(t),\; \xi_t \;\text{is}\; \cF_t\text{-measurable}\Big\}
\end{align*}
 is a linear space.
 Whenever $\Pi^{ask, \gamma}_t(D)> \Pi^{bid, \gamma}_t(D)$, as in our general case, we have that
 \begin{align*}
\widehat{\cH}(t)+\overline{\cH}(t)&=\Big\{ \Big(0, \dots,0,\xi_t\Pi^{ask, \gamma}_t(D)-\phi_t\Pi^{bid, \gamma}_t(D), H_{t+1}-(\xi_t-\phi_t) D^*_{t+1}, \\
& \qquad \dots, H_T-(\xi_t-\phi_t) D^*_T\Big):\ H \in \cH(t),\; \xi_t, \phi_t \;\text{is}\; \cF_t\text{-measurable}, \;\xi_t, \phi_t \geq 0\Big\}
\end{align*}
 is only a convex cone.
This is one of the main reasons why we call this approach \emph{dynamic conic finance}.
\end{remark}

\begin{remark}\label{remark:SuperhedgingPrices}
In view of Lemma~\ref{lemma:NAinterval} and Theorem~\ref{Thm:RepSpotAskBid}, if NGD is satisfied then $\Pi^{bid, \gamma}_t(D)$ and $\Pi^{ask, \gamma}_t(D)$ are within the lower and upper no-arbitrage bounds.
Specifically, we have that
\begin{equation*}
  \inf_{ \bQ \in \cR(\cH(t))} \bE^\bQ_t\Big[\sum_{s=t+1}^T D^*_t\Big] \leq \Pi^{bid, \gamma}_t(D) \leq \Pi^{ask, \gamma}_t(D) \leq \sup_{ \bQ \in \cR(\cH(t))} \bE^\bQ_t\Big[\sum_{s=t+1}^T D^*_t\Big].
\end{equation*}
\end{remark}

\subsection{Good-deal forward ask and bid prices}

In this section, we define the good-deal forward ask and bid prices, and then prove a representation theorem for them.
In this subsection we suppose that the risk-free interest rate $r$ is deterministic.

\begin{definition}\label{def:dynamicBidAsk-Forward}
The \emph{good-deal ask and bid forward prices}, with delivery at time $T$, written at time $t\in \{1, \dots, T-1\}$,  of a derivative contract $D \in  L^0$, at level $\gamma>0$ are defined as
\begin{align}
F^{ask, \gamma, T}_t(D)(\omega):&= \inf\{f\in \bR: \;\exists\; H \in \cH(t)\; \text{so that}  \nonumber \\
 & \qquad \qquad\qquad\qquad \;\alpha_t(\delta_T(\textbf{1}B^{-1}_Tf)+H-\delta^{+}_t(D^*))(\omega) \geq \gamma\},\label{eq:forward1}\\
F^{bid, \gamma, T}_t(D)(\omega):&= \sup\{f\in \bR: \;\exists\; H \in \cH(t)\; \text{so that}\; \nonumber  \\
 & \qquad \qquad\qquad\qquad \alpha_t(-\delta_T(\textbf{1}B^{-1}_T f)+H+\delta^{+}_t(D^*))(\omega) \geq \gamma\}\label{eq:forward2}
\end{align}
for all $\omega \in \Omega$.
\end{definition}
Notice that the cash flow $\delta_T(\textbf{1}B^{-1}_Tf)+H-\delta^{+}_t(D^*)$ represents an exchange of a cash payment $f$ at time $T$ for a discounted cash flow $D$ that is hedged with $H$. The good-deal forward ask price at level $\gamma$ is the minimum amount of cash $f$ at time $T$ so that $\delta_T(\textbf{1}B^{-1}_Tf)+H-\delta^{+}_t(D^*)$ is acceptable at level $\gamma$ at time $t$.

We now give the representation theorem for the good-deal forward ask and bid prices.
\begin{theorem}\label{Thm:ForwardHedging}
The good-deal ask and bid forward prices of a derivative contract $D\in L^0$, with delivery at time $T$, written at time $t\in \{1, \dots, T-1\}$ and level $\gamma>0$,  satisfy
\begin{align*}
F^{ask, \gamma, T}_t(D)= B_T\Pi^{ask, \gamma}_t(D), \qquad
F^{bid, \gamma, T}_t(D)= B_T\Pi^{bid, \gamma}_t(D).
\end{align*}
\end{theorem}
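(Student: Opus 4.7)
The plan is to reduce the forward-price problem to the spot-price problem (Theorem~\ref{Thm:RepSpotAskBid}) by the substitution $v = B_T^{-1}f$, exploiting that, because $r$ is deterministic, $B_T$ is a positive real constant. By the robust representation of $\alpha_t$ stated in Theorem~\ref{theorem:DCAIrep},
\begin{align*}
\alpha_t\big(\delta_T(\textbf{1}B_T^{-1}f)+H-\delta^{+}_t(D^*)\big)(\omega) &= \sup\Big\{\beta>0 : \inf_{\bQ\in\cQ^{\beta}_t}\bE^{\bQ}_t\Big[B_T^{-1}f+\sum_{s=t+1}^{T}(H_s-D^*_s)\Big](\omega)\geq 0\Big\}\\
&= \sup\Big\{\beta>0 : B_T^{-1}f + \inf_{\bQ\in\cQ^{\beta}_t}\bE^{\bQ}_t\Big[\sum_{s=t+1}^{T}(H_s-D^*_s)\Big](\omega)\geq 0\Big\},
\end{align*}
the second line being justified because the deterministic constant $B_T^{-1}f$ pulls outside the conditional expectation. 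Precisely the same formula, with $B_T^{-1}f$ replaced by $v$, represents $\alpha_t(\delta_t(\textbf{1}v)+H-\delta^{+}_t(D^*))(\omega)$, since $\delta_t(\textbf{1}v)$ injects $v$ at time $t$ rather than at time $T$ but the conditional expectation treats the two insertions identically.

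Comparing these representations, the feasible set $\{f\in\bR : \exists H\in\cH(t),\ \alpha_t(\delta_T(\textbf{1}B_T^{-1}f)+H-\delta^{+}_t(D^*))(\omega)\geq\gamma\}$ equals $B_T$ times the feasible set $\{v\in\bR : \exists H\in\cH(t),\ \alpha_t(\delta_t(\textbf{1}v)+H-\delta^{+}_t(D^*))(\omega)\geq\gamma\}$ under the positive linear bijection $f\mapsto B_T^{-1}f$. Since $B_T>0$, taking infima gives $F^{ask,\gamma,T}_t(D)(\omega) = B_T\,\Pi^{ask,\gamma}_t(D)(\omega)$. The bid identity follows either by repeating the argument on \eqref{eq:forward2} (the sign change on the forward term simply toggles an infimum into a supremum) or, more efficiently, by invoking the forward analogue of Lemma~\ref{lemma:RepBidAskNegation}, namely $F^{ask,\gamma,T}_t(D) = -F^{bid,\gamma,T}_t(-D)$, which is immediate from the symmetry of \eqref{eq:forward1}--\eqref{eq:forward2}.

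The only delicate point is pulling the scalar $B_T^{-1}f$ past both the conditional expectation $\bE^{\bQ}_t[\cdot]$ and the outer supremum defining $\alpha_t$; this step is where the determinism of $r$ is used in an essential way, for otherwise $B_T^{-1}f$ would be merely $\cF_T$-measurable and the forward and spot acceptability functionals would no longer agree. Once this is settled, what remains is a routine change of variable, so I do not anticipate any further obstacle.
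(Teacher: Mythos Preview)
Your argument is correct and follows essentially the same route as the paper: substitute $v=B_T^{-1}f$, use that $B_T$ is a positive constant so the infimum scales, and observe that inserting the deterministic amount $B_T^{-1}f$ at time $T$ or at time $t$ yields the same acceptability. The only cosmetic difference is that the paper invokes the translation-invariance axiom (D6) of $\alpha$ directly to obtain $\alpha_t(\delta_T(\textbf{1}f^*)+H-\delta^{+}_t(D^*))=\alpha_t(\delta_t(\textbf{1}f^*)+H-\delta^{+}_t(D^*))$, whereas you rederive this equality through the robust representation of Theorem~\ref{theorem:DCAIrep}; both justifications are valid and the remaining change of variable is identical.
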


\begin{proof}
For any $f\in \bR$, denote by $f^*$ the term $B^{-1}_Tf$.
Since $B_T$ is deterministic, we may write \eqref{eq:forward1} and \eqref{eq:forward2} as
\begin{align*}
F^{ask, \gamma, T}_t(D)(\omega)&=B_T  \inf\{f^*\in \bR: \;\exists\; H \in \cH(t)\; \text{and} \;\alpha_t(\delta_T(\textbf{1}f^*)+H-\delta^{+}_t(D^*))(\omega) \geq \gamma\},\\
 F^{bid, \gamma, T}_t(D)(\omega)&=B_T \sup\{f^*\in \bR: \;\exists\; H \in \cH(t)\; \text{and}\; \alpha_t(-\delta_T(\textbf{1}f^*)+H+\delta^{+}_t(D^*))(\omega) \geq \gamma\}.
\end{align*}
Since $\alpha$ satisfies the translation  invariance property (Property (D6) in Appendix~\ref{sec:PricingDCAI}), we have that $\alpha_t(\delta_T(\textbf{1}f^*)+H-\delta^{+}_t(D^*))=\alpha_t(\delta_t(\textbf{1}f^*)+H-\delta^{+}_t(D^*))$.
Therefore,
\begin{align*}
F^{ask, \gamma, T}_t(D)(\omega)&=B_T  \inf\{f^*\in \bR: \;\exists\; H \in \cH(t)\; \text{and} \;\alpha_t(\delta_t(\textbf{1}f^*)+H-\delta^{+}_t(D^*))(\omega) \geq \gamma\},\\
 F^{bid, \gamma, T}_t(D)(\omega)&=B_T \sup\{f^*\in \bR: \;\exists\; H \in \cH(t)\; \text{and}\; \alpha_t(-\delta_t(\textbf{1}f^*)+H+\delta^{+}_t(D^*))(\omega) \geq \gamma\}.
\end{align*}
By Theorem~\ref{Thm:RepSpotAskBid} we conclude that the claim holds true.
\end{proof}

\begin{remark}
If $r$ is deterministic and the set of hedging cash flows $\cH(t)$ forms a market that is frictionless, complete, and arbitrage-free, then $\cR(\cH(t))$ is a singleton, say $\{\bQ^*\}$, and so by Theorem~\ref{Thm:ForwardHedging} we have that $F^{ask, \gamma, T}_t(D)=F^{bid, \gamma, T}_t(D)=B_T \bE^{\bQ^*}_t\big[\sum_{u=t+1}^T D^*_u \big]$.
This is compatible with the classic result that states that in a frictionless, complete, and arbitrage-free market the discounted forward price $f^T_t(D)$ of a derivative contract $D$, with delivery at time $T$, written at time $t \in \{1, \dots, T-1\}$, is given as
\begin{equation*}
  f^T_t(D)=B_TS_t(D),
\end{equation*}
where $S(D)$ is the discounted risk-neutral spot price given by $S_t(D)=\bE^{\bQ^*}_t\big[\sum_{u=t+1}^T D^*_u \big]$.
Also, from Theorem~\ref{Thm:ForwardHedging}, we see that the relationship between the good-deal ask and bid forward prices is classic, in the sense that
\begin{equation*}
 \frac{F^{ask, \gamma, T}_t(D)}{\Pi^{ask, \gamma}_t(D)}=\frac{F^{bid, \gamma, T}_t(D)}{\Pi^{bid, \gamma}_t(D)}=\frac{ f^T_t(D)}{S_t(D)}, \quad \gamma \in (0, \infty), \; D \in L^0.
\end{equation*}
\end{remark}

\section{Pricing with the dynamic Gain-Loss Ratio}\label{sec:PricingDGLR}

In this section, we first prove some auxiliary results that hold for general DCAIs.
Then, we particularize these results to the very important special case of DCAI, namely to the dynamic Gain-Loss Ratio (dGLR).
Finally, we apply the pricing and hedging results developed in earlier sections using dGLR to path-dependent options.
In this section we assume that $r=0$ without loss of generality.

\subsection{Characterization of DCAIs}
Recall that for every normalized and right-continuous DCAI $\alpha$ there exist a family ${\cQ}= (\big(\cQ^{\gamma}_t\big)_{t \in \cT})_{\gamma \in (0, \infty)}$  of dynamically consistent sequences of sets of probability measures that is increasing (in $\gamma$), such that \eqref{eq:RM-Representation} holds (see Appendix~\ref{sec:PricingDCAI}).
We say that a family $\cQ$ of dynamically consistent sequences of sets of probability measures that is increasing (in $\gamma$) \emph{corresponds} to a given normalized and right-continuous DCAI $\alpha$ if $\cQ$ satisfies \eqref{eq:RM-Representation}.

\begin{lemma}\label{lemma:CorrFamilyQ}
Suppose that $\alpha$ is a normalized and right-continuous DCAI.
A family $\cQ$ corresponds to $\alpha$ if and only if $\cQ\in \mathfrak{Q}^\alpha$, where$\, $\footnote{We will generically denote by ${\cU}=\left (\big(\cU^{\gamma}_t\big)_{t \in \cT}\right)_{\gamma \in (0, \infty)}$ a family of dynamically consistent sequences of sets of probability measures that is increasing in $\gamma$.}
    \begin{align*}
\mathfrak{Q}^{\alpha}:=&\Big\{ \cU \ : \  \alpha_t(D)(\omega)\geq\gamma \quad  \text{if and only if}\\
 & \qquad \inf_{\bQ\in\cU^{\gamma}_t}\bE^{\bQ}_t\Big[\sum_{s=t}^T D_s\Big](\omega)\geq 0, \; \omega \in \Omega, \; \gamma \in (0, \infty), \; t\in \cT, \; D\in L^0 \Big\}.
  \end{align*}
   \end{lemma}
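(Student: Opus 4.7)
The plan is to unpack what it means for $\cQ$ to correspond to $\alpha$, then establish the two implications of the iff. By \eqref{eq:RM-Representation}, a family $\cQ$ corresponds to $\alpha$ precisely when
\[
\alpha_t(D)(\omega) \;=\; \sup\Bigl\{\gamma\in(0,\infty) : \inf_{\bQ\in\cQ^\gamma_t}\bE^\bQ_t\Bigl[\sum_{s=t}^T D_s\Bigr](\omega)\ge 0\Bigr\}
\]
for every $t\in\cT$, $\omega\in\Omega$, and $D\in L^0$, where the infimum is understood to vary with $\gamma$ through the nested family $\cQ^\gamma_t$ (increasing in $\gamma$).

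For the ``if'' direction, I would assume $\cU\in\mathfrak{Q}^\alpha$. The defining iff property of $\mathfrak{Q}^\alpha$ immediately gives the set identity $\{\gamma>0 : \inf_{\bQ\in\cU^\gamma_t}\bE^\bQ_t[\sum_{s=t}^T D_s](\omega)\ge 0\} = \{\gamma>0 : \alpha_t(D)(\omega)\ge\gamma\}$. Taking suprema on both sides, the right-hand side equals $\alpha_t(D)(\omega)$, so $\cU$ satisfies the representation, i.e.\ corresponds to $\alpha$.

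For the ``only if'' direction, assume $\cQ$ corresponds to $\alpha$ and prove the iff pointwise in $(\omega,\gamma,t,D)$. The implication ``$\inf\ge 0 \Rightarrow \alpha_t(D)(\omega)\ge\gamma$'' is free from the sup formula: $\gamma$ belongs to the set whose supremum defines $\alpha_t(D)(\omega)$. For the converse, set $\gamma^*:=\alpha_t(D)(\omega)$ and suppose $\gamma^*\ge\gamma$. If $\gamma^*>\gamma$, the definition of the sup produces $\gamma'\in(\gamma,\gamma^*]$ with $\inf_{\bQ\in\cQ^{\gamma'}_t}\bE^\bQ_t[\sum D_s](\omega)\ge 0$; since $\cQ^\gamma_t\subseteq\cQ^{\gamma'}_t$, the inf over the smaller set is no smaller, and the claim follows. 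If $\gamma^*=\gamma$, pick $\gamma_n\uparrow\gamma$ with $\inf_{\bQ\in\cQ^{\gamma_n}_t}\bE^\bQ_t[\sum D_s](\omega)\ge 0$, equivalently $\rho^{\gamma_n}_t(D)(\omega)\le 0$, where $\rho^\gamma$ is the DCRM dual to $\cQ^\gamma_t$. Then the left-continuity in $\gamma$ of $\rho^\gamma$ recalled in Appendix~\ref{sec:PricingDCAI} (the counterpart of right-continuity of $\alpha$ assumed in the hypothesis) yields $\rho^\gamma_t(D)(\omega)=\lim_n\rho^{\gamma_n}_t(D)(\omega)\le 0$, i.e.\ $\inf_{\bQ\in\cQ^\gamma_t}\bE^\bQ_t[\sum D_s](\omega)\ge 0$.

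The main obstacle is the boundary case $\gamma^*=\gamma$, where the supremum need not be attained a priori; it is there that the right-continuity hypothesis on $\alpha$ (equivalently, left-continuity of $\gamma\mapsto\rho^\gamma$) must be invoked to pass the nonnegativity of the inf through the limit in $\gamma_n$. The remaining cases are bookkeeping based on the sup formula and the monotonicity of the family $\{\cQ^\gamma_t\}_\gamma$ in $\gamma$.
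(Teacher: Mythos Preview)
Your proof is correct and follows essentially the same approach as the paper: both directions hinge on the sup formula, monotonicity of $\gamma\mapsto\cQ^\gamma_t$, and left-continuity of $\gamma\mapsto\rho^\gamma$ for the boundary case $\alpha_t(D)(\omega)=\gamma$. The only cosmetic differences are that you phrase the ($\Longleftarrow$) direction as a set identity $\{\gamma:\inf\ge0\}=\{\gamma:\alpha\ge\gamma\}$ before taking suprema, and you handle the boundary case in ($\Longrightarrow$) by a direct limiting argument rather than the paper's contradiction; both routes are equivalent.
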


\begin{proof}

\noindent
($\Longleftarrow$)
Let $\cU \in \mathfrak{Q}^{\alpha}$.
We fix $t \in \cT$, $D \in L^0$, and $\omega \in \Omega$.
Define the set
\begin{equation*}
  \Gamma(\cU):=\Big\{\beta \in(0,\infty): \inf_{\bQ\in\cU^{\beta}_t}\bE^{\bQ}_t\Big[\sum_{s=t}^T D_s\Big](\omega)\geq 0\Big\}.
\end{equation*}
We may assume that $\Gamma(\cU)\neq \emptyset$ and $\alpha_t(D)(\omega)<\infty$.
Otherwise, it is clear that $\cU$ satisfies \eqref{eq:RM-Representation}.

Observe that if $\gamma \in \Gamma(\cU)$, then $\alpha_t(D)(\omega) \geq \gamma$.
So $\alpha_t(D)(\omega)$ is an upper bound of $\Gamma(\cU)$.
If we let $\beta':= \alpha_t(D)(\omega)$, then $\beta' \in \Gamma(\cU)$, and so \eqref{eq:RM-Representation} is satisfied.

\noindent
($\Longrightarrow$)
Now, suppose $\cU$ satisfies \eqref{eq:RM-Representation}, and let $\gamma \in (0, \infty)$.
If
\begin{equation}\label{eq:CorrFamilyQ}
  \inf_{\bQ\in\cU^{\gamma}_t}\bE^{\bQ}_t\Big[\sum_{s=t}^T D_s\Big](\omega)\geq 0,
\end{equation}
then $\gamma \in \Gamma(\cU)$.
By \eqref{eq:RM-Representation}, we have that $\alpha_t(D)(\omega) \geq \gamma$.

Assume $\alpha_t(D)(\omega) \geq\gamma$.
If $\alpha_t(D)(\omega) >\gamma$, then \eqref{eq:CorrFamilyQ} is satisfied because $\cU^{\gamma}$ is increasing in $\gamma$.

Next, suppose that $\alpha_t(D)(\omega)=\gamma$ and $\gamma \notin \Gamma(\cU)$.
By Theorem~\ref{theorem:Duality}, the mapping
\begin{equation*}
  \gamma  \longmapsto\inf_{\bQ\in\cU^{\gamma}_t}\bE^{\bQ}_t\Big[\sum_{s=t}^T D_s\Big](\omega)
\end{equation*}
 is left-continuous and monotone decreasing.
 Thus, by left-continuity of $\rho$, there exists $\epsilon>0$ so that $\gamma- \epsilon \notin \Gamma(\cU)$.
By monotonicity and because $\cU$ satisfies \eqref{eq:RM-Representation}, we deduce that $\alpha_t(D)(\omega) \leq \gamma-\epsilon$.
This implies that $\epsilon \leq0$, which is a contradiction.
 Hence, we have that \eqref{eq:CorrFamilyQ} holds, and thus
 $\cU \in \mathfrak{Q}^{\alpha}$.
\end{proof}

\subsection{Characterization of the dGLR}\label{subsec:ChardGLR}
A performance measures  that is very popular among practitioners is the Sharpe Ratio (SR), which was introduced by Sharpe~\cite{Sharpe1964a}.
However, SR is not monotone, and hence not an acceptability index.
Moreover, as pointed out by  Bernardo~and~Ledoit~\cite{Bernardo2000} SR does not respect arbitrage, in the sense that the SR is finite even for cash-flows that exhibit arbitrage opportunities.
For this reason,~\cite{Bernardo2000} proposed the \emph{static} Gain-Loss Ratio, which is a performance measure that is unbounded for arbitrage opportunities, and, as proved in Cherny~and~Madan~\cite{ChernyMadan2009},  is also a static coherent acceptability index.
Later, Bielecki et al.~\cite{BCZ2010} extended the notion of GLR to dynamic setup, and introduced the \emph{dynamic} Gain-Loss Ratio, defined\footnote{By convention, $\textrm{dGLR}(0)=\infty$.} as follows
\begin{equation}\label{def:DLR}
\mathrm{dGLR}_t(D)(\omega):=
\begin{dcases}\frac{}{}
\frac{\bE^\bP_t[\sum_{s=t}^T D_s](\omega)}
{\bE^\bP_t[(\sum_{s=t}^T D_s )^{-}](\omega) } \, , &
\quad \mathrm{if} \quad  \bE^\bP_t\Big[ \sum_{s=t}^T D_s\Big](\omega) >0 \, ,\\
0 \, , & \quad \mathrm{otherwise} \, .
\end{dcases}
\end{equation}
It is shown in~\cite{BCZ2010} that the dGLR satisfies the conditions (D$1$)--(D$7$), and therefore it is a dynamic coherent acceptability index (see Definition~\ref{def:DCAI}).

\begin{remark}
It is worth to note on the interpretation of the dGLR in the context of arbitrage, which was first noticed in Bernardo and Ledoit~\cite{Bernardo2000} for the static Gain-Loss Ratio.
Observe that
\begin{align*}
\sum_{s=t}^T H_s(\omega) &\geq 0 \;  \text{for all }\;  \omega \in \Omega,  \quad  \bE^\bP_t\Big[\sum_{s=t}^T H_s\Big](\omega)>0\;  \text{for some }\; \omega \in \Omega
\end{align*}
  is equivalent to
\begin{align*}
\bE^\bP_t\Big[\big(\sum_{s=t}^T H_s\big)^-\Big](\omega) &= 0 \;  \text{for all }\;  \omega \in \Omega,  \quad  \bE^\bP_t\Big[\sum_{s=t}^T H_s\Big](\omega)>0\;  \text{for some }\; \omega \in \Omega,
\end{align*}
which is ultimately equivalent to
\begin{align*}
\textrm{dGLR}_t(H)(\omega)=\infty \quad \text{for some }\;  \omega \in \Omega.
\end{align*}
Therefore, in view of Definition~\ref{Def:ArbOpp}, a cash flow $H \in \cH(t)$ is an arbitrage opportunity at time $t \in \cT$ if and only if $\textrm{dGLR}_t(H)(\omega)=\infty$ for some $\omega \in \Omega$.
Equivalently, the no-arbitrage condition holds at time $t \in \cT$ if and only if $\textrm{dGLR}_t(H)$ is bounded for all $H \in \cH(t)$.
\end{remark}

In order to apply the general theory developed above, we will find the sets of probability measures that correspond to dGLR.
We define a family  $\widehat{\cQ}$  as
  \begin{align}\label{eq:QdGLR}
     \widehat{\cQ}^{\gamma} &:=  \Big\{ \bQ : \mathrm{d}\bQ/\mathrm{d}\bP= c(1+\Lambda), \; c > 0, \; \Lambda \in \mathfrak{L}^\gamma , \,c\,\bE^{\bP}[1+\Lambda] = 1   \Big\},
    \end{align}
for all $\gamma \in (0, \infty)$, where
\begin{align*}
   \mathfrak{L}^\gamma&:=\set{\Lambda \, :\, \Lambda \;\text{is an}\; \cF_T\textrm{-measurable r.v.}, \; 0\leq \Lambda \leq \gamma}.
\end{align*}
\begin{remark}\mbox{} \label{remark:AssumptionsDGLR}
\begin{itemize}
  \item[(i)]
For each $\gamma \in (0, \infty)$, the set of densities $\widehat{\cE}^{\gamma}$ defined as
\begin{equation*}
  \widehat{\cE}^{\gamma}  :=  \Big\{ \frac{\d \bQ}{\d \bP}: \bQ \in \cQ^{\gamma}_t\Big\}
\end{equation*}
is closed and convex. Thus, dGLR satisfies Assumption~B.
\item[(ii)]
For each  $t \in \cT, D \in  L^0$,  the function of $\gamma\in (0, \infty)$ defined as
\begin{equation}\label{DCRM}
\rho^{\gamma}_t(D):=\inf_{\bQ \in \widehat{\cQ}^{\gamma}} \bE^\bQ_t \Big[ \sum_{s=t}^T D_s \Big] \, ,
\end{equation}
 is continuous, and hence dGLR satisfies Assumption~C.
Indeed,  for each $\omega \in \Omega$ we have that
  \begin{align*}
    \inf_{\bQ \in \widehat{\cQ}^{\gamma}} \bE^\bQ_t\Big[ \sum_{s=t}^T D_s\Big](\omega)&=\inf_{\eta \in \widehat\cE^{\gamma}} \frac{\bE^\bP_t\big[ \eta \sum_{s=t}^T D_s\big](\omega)}{\cE^\bP_t[\eta](\omega)}
    =\inf_{\Lambda \in \mathfrak{L}^\gamma} \frac{\bE^\bP_t\big[ (1+\Lambda) \sum_{s=t}^T D_s\big](\omega)}{\bE^\bP_t[1+\Lambda](\omega)}.
  \end{align*}
  \item[(iii)]
Note that the LHS of \eqref{DCRM} is the value of a DCRM associated with $\widehat{\cQ}$ (see~\ref{theorem:DCRMrep}).
   \end{itemize}
\end{remark}
\noindent

\begin{proposition}\label{Prop:dGLR}
 The family $\widehat{\cQ}$, defined in \eqref{eq:QdGLR} is an increasing family of dynamically consistent sets of probability measures that corresponds to  \emph{dGLR}.
 \end{proposition}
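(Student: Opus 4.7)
My plan is to verify the three claims in the proposition in sequence---monotonicity in $\gamma$, dynamic consistency, and the correspondence with dGLR---with the main technical content concentrated in the last, which I would establish via Lemma~\ref{lemma:CorrFamilyQ}.

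Monotonicity is immediate: for $0 < \gamma_1 \le \gamma_2$ one has $\mathfrak{L}^{\gamma_1} \subseteq \mathfrak{L}^{\gamma_2}$, and so $\widehat{\cQ}^{\gamma_1} \subseteq \widehat{\cQ}^{\gamma_2}$ directly from \eqref{eq:QdGLR}. For dynamic consistency of the (constant-in-$t$) sequence $\bigl(\widehat{\cQ}^\gamma\bigr)_{t\in\cT}$, I would verify the required pasting/stability property: given $\bQ^1,\bQ^2 \in \widehat{\cQ}^\gamma$ with densities $c_i(1+\Lambda_i)$ and any $A\in\cF_t$, I would show that the measure obtained by pasting $\bQ^1$ and $\bQ^2$ along $A$ again has a density of the form $c(1+\Lambda)$ with $0\le\Lambda\le\gamma$, by rewriting the pasted density and bounding the resulting $\Lambda$. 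Combined with the closedness and convexity of $\widehat{\cE}^\gamma$ noted in Remark~\ref{remark:AssumptionsDGLR}, this yields the dynamic-consistency notion used in the paper.

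The core of the proof is the correspondence with dGLR. By Lemma~\ref{lemma:CorrFamilyQ} it suffices to show that for every $t\in\cT$, $D\in L^0$, $\omega\in\Omega$ and $\gamma>0$,
\begin{equation*}
\mathrm{dGLR}_t(D)(\omega)\ge\gamma \quad\Longleftrightarrow\quad \inf_{\bQ\in\widehat{\cQ}^\gamma}\bE^{\bQ}_t\Big[\sum_{s=t}^{T}D_s\Big](\omega)\ge 0 .
\end{equation*}
Writing $X:=\sum_{s=t}^{T}D_s$, the abstract Bayes rule yields $\bE^{\bQ}_t[X]=\bE^{\bP}_t[(1+\Lambda)X]/\bE^{\bP}_t[1+\Lambda]$, with the normalizing constant $c$ cancelling; since the denominator is at least $1$, the sign of the ratio agrees with the sign of its numerator. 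Minimizing $\bE^{\bP}_t[(1+\Lambda)X](\omega)$ over $\Lambda\in\mathfrak{L}^\gamma$ decomposes pointwise over $\omega'$ in the $\cF_t$-atom of $\omega$, and is attained at $\Lambda=\gamma\,\mathbf{1}_{\{X<0\}}$:
\begin{equation*}
\inf_{\Lambda\in\mathfrak{L}^\gamma}\bE^{\bP}_t[(1+\Lambda)X](\omega) = \bE^{\bP}_t[X](\omega) - \gamma\,\bE^{\bP}_t[X^-](\omega) .
\end{equation*}
Thus the right-hand condition reduces to $\bE^{\bP}_t[X](\omega)\ge\gamma\,\bE^{\bP}_t[X^-](\omega)$, and a brief case split---on whether $\bE^{\bP}_t[X](\omega)$ is positive or nonpositive, and whether $\bE^{\bP}_t[X^-](\omega)$ is positive or zero, using the convention $\mathrm{dGLR}(0)=\infty$ in the degenerate case $X\equiv 0$ on the atom of $\omega$---shows that this inequality is exactly equivalent to $\mathrm{dGLR}_t(D)(\omega)\ge\gamma$ as given in \eqref{def:DLR}.

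The principal obstacle is the dynamic-consistency step: the sets $\widehat{\cQ}^\gamma$ carry no explicit $t$-dependence, so one must check that this single family, used at every $t$, satisfies the pasting/$m$-stability condition underlying the paper's notion of dynamic consistency. Once that is handled, monotonicity is trivial, and the correspondence with dGLR reduces to the one-line Bayes identity together with the explicit pointwise minimization displayed above.
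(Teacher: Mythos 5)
Your monotonicity argument and your treatment of the correspondence with dGLR are essentially the paper's: Lemma~\ref{lemma:CorrFamilyQ} reduces the correspondence to the equivalence $\mathrm{dGLR}_t(D)(\omega)\geq\gamma \Leftrightarrow \inf_{\bQ\in\widehat{\cQ}^{\gamma}}\bE^{\bQ}_t\big[\sum_{s=t}^T D_s\big](\omega)\geq 0$, and the paper, like you, passes to densities by the abstract Bayes rule and computes the infimum over $\Lambda\in\mathfrak{L}^\gamma$ exactly (minimizer $\gamma\1_{\{X\leq 0\}}$, same value as your $\gamma\1_{\{X<0\}}$), arriving at $\bE^{\bP}_t[X]\geq\gamma\,\bE^{\bP}_t[X^-]$.

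The genuine gap is the dynamic-consistency step. The pasting/$m$-stability property you propose to verify is false for $\widehat{\cQ}^{\gamma}$. Note that $\widehat{\cE}^{\gamma}$ is exactly the set of strictly positive normalized densities $Z$ with $\max_\omega Z/\min_\omega Z\leq 1+\gamma$. Pasting $\bQ^1,\bQ^2$ at time $t$ along $A\in\cF_t$ produces the density $Z=Z^1\1_{A^c}+\bE^{\bP}_t[Z^1]\,\frac{Z^2}{\bE^{\bP}_t[Z^2]}\,\1_A$, and the $\cF_t$-measurable factor $\bE^{\bP}_t[Z^1]/\bE^{\bP}_t[Z^2]$ can vary across the atoms involved, so the oscillation ratio of $Z$ can blow up beyond $1+\gamma$. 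Concretely, with four equally likely states, $\cF_1$-atoms $A=\{\omega_1,\omega_2\}$, $A^c=\{\omega_3,\omega_4\}$, $\gamma=1$, $Z^1\propto(2,2,1,1)$ and $Z^2\propto(2,1,2,1)$ (both in $\widehat{\cE}^{1}$), pasting along $A$ gives $Z=(16/9,\,8/9,\,2/3,\,2/3)$, whose max/min ratio is $8/3>2$, so $Z\notin\widehat{\cE}^{1}$. This is not an accident: $m$-stability characterizes strong (recursive) time-consistency, whereas the dGLR family is consistent only in the weaker sense of Definition~\ref{Def:ConsistentProbSet} (property (A7)), which the paper emphasizes is strictly weaker than recursiveness; your appeal to the closedness and convexity of $\widehat{\cE}^{\gamma}$ from Remark~\ref{remark:AssumptionsDGLR} does not bridge this. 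The paper instead verifies the defining inequality directly: by the criterion quoted from \cite{ZhangPhDThesis2011} it suffices to check the one-sided ``weak consistency'' bound $\1_{P^t_i}\inf_{\bQ}\bE^{\bQ}_t[X]\leq\1_{P^t_i}\max_{\omega\in P^t_i}\inf_{\bQ}\bE^{\bQ}_{t+1}[X](\omega)$, and this is obtained by the tower property applied to each density $c(1+\Lambda)$: if $\bE^{\bP}_{t+1}[(1+\Lambda)X]/\bE^{\bP}_{t+1}[1+\Lambda]\leq a$ on the atom $P^t_i$, then $\bE^{\bP}_{t}[(1+\Lambda)X]/\bE^{\bP}_{t}[1+\Lambda]\leq a$ there, and one then takes extrema. (The opposite inequality is straightforward here because the same set of measures is used at every date.) As written, the dynamic-consistency portion of your proof does not go through and needs to be replaced by an argument of this type.
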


\begin{proof} We start by observing that, for each $\gamma>0$, the set $\widehat{\cQ}^{\gamma}$ is nonempty since, in particular, we may take $\Lambda=0$ in the definition of $\widehat{\cQ}^\gamma$.
Clearly,  $\widehat{\cQ}^{\gamma}$ is increasing in $\gamma$.

For the rest of the proof we fix $\gamma >0$.  We denote by  $\Upsilon^t=\{P^t_1,P^t_2,\dots,P^t_{n_t}\}$ the unique partition of $\Omega$ at time $t$ that generates $\mathcal{F}_t$. In order to prove our result it suffices to show that $\widehat{\cQ}^{\gamma}$ is weakly consistent (see Corollary 4.1.1 in~\cite{ZhangPhDThesis2011}), which is
\begin{align}
\1_{P^t_i}  \inf_{\bQ\in\cQ^{\gamma}}\bE^{\bQ}_t[X]
&\leq \1_{P^t_i} \max_{\omega\in P^t_i}\bigg\{\inf_{\bQ\in\cQ^{\gamma}}\bE^{\bQ}_{t+1}[X](\omega)\bigg\}\,,
\end{align}
for every $t\in\{0,\dots,T-1\}$, $P^t_i \in \Upsilon^t$, and $X\in\cF_T$.
Next, take  $0 \leq \Lambda \leq \gamma$ and suppose that
\begin{equation*}
\max\limits_{\omega\in P^t_i}  \frac{\bE^\bP_{t+1}\big[(1+\Lambda)X\big](\omega)}{\bE^\bP_{t+1}[1+\Lambda](\omega)} \leq a,
\end{equation*}
for some $a \in \bR$.
Applying the tower property of conditional expectations, we deduce that the following implication holds:
\begin{equation*}
\max\limits_{\omega\in P^t_i}  \frac{\bE^\bP_{t+1}\big[(1+\Lambda)X\big](\omega)}{\bE^\bP_{t+1}[1+\Lambda](\omega)} \leq a \ \  \Rightarrow \ \ \max\limits_{\omega\in P^t_i}  \frac{\bE^\bP_{t}\big[(1+\Lambda)X\big](\omega)}{\bE^\bP_{t}[1+\Lambda](\omega)} \leq a.
\end{equation*}
Hence, since $a$ is arbitrary, we deduce that
\begin{align*}
\1_{P^t_i} \frac{\bE^\bP_{t}\big[(1+\Lambda)X\big](\omega)}{\bE^\bP_{t}[1+\Lambda](\omega)}\leq  \1_{P^t_i} \max_{\omega\in P^t_i} \frac{\bE^\bP_{t}\big[(1+\Lambda)X\big](\omega)}{\bE^\bP_{t}[1+\Lambda](\omega)} \leq \1_{P^t_i} \max_{\omega\in P^t_i} \frac{\bE^\bP_{t+1}\big[(1+\Lambda)X\big](\omega)}{\bE^\bP_{t+1}[1+\Lambda](\omega)}
\end{align*}
for all $\omega \in \Omega$.
Thus, for $\bQ = c(1+\Lambda)\bP$, we obtain
\begin{align*}
  \1_{P^t_i} \bE^\bQ_{t}[X](\omega) \leq \1_{P^t_i}\max_{\omega\in P^t_i} \bE^\bQ_{t+1}[X](\omega),
\end{align*}
for all $\omega \in \Omega$.
Therefore,
\begin{align*}
\1_{P^t_i} \inf_{\bQ\in\cQ^{\gamma}}\bE^{\bQ}_t[X]
&\leq  \1_{P^t_i}\inf_{\bQ \in\cQ^{\gamma}}\bigg\{ \max_{\omega\in P^t_i}\bE^{\bQ}_{t+1}[X](\omega)\bigg\} \leq \1_{P^t_i} \max_{\omega\in P^t_i}\bigg\{\inf_{\bQ \in\cQ^{\gamma}}\bE^{\bQ}_{t+1}[X](\omega)\bigg\},
\end{align*}
which proves the weak consistency of $\widehat{\cQ}^\gamma$.

We now show that the family $\widehat{\cQ}$ corresponds to the dGLR.
By Lemma~\ref{lemma:CorrFamilyQ}, this is equivalent to show that
\begin{equation}\label{eq:dGLR3}
  \mathrm{dGLR}_t(D)(\omega) \geq \gamma  \quad \Longleftrightarrow \quad  \inf_{ \mathbb{Q} \in \widehat{\cQ}^{\gamma} } \bE^{\mathbb{Q}}_t \big[ X_t^T \big](\omega)  \geq 0,
\end{equation}
for all $\omega \in \Omega$, $t \in \cT$ and $D \in L^0$, where for convenience we denoted  $X^T_t=\sum_{u=T}^{t}D_u$. In the rest of the proof we fix $\omega \in \Omega$, $t \in \cT$ and $D \in L^0$.

In order to show \eqref{eq:dGLR3}, we first observe that since any $\eta \in \cE^{\gamma}$ is strictly positive, we may apply the abstract Bayes formula to write
\begin{align*}
\inf_{ \mathbb{Q} \in \widehat{\cQ}^{\gamma} } \bE^{\mathbb{Q}}_t \big[ X_t^T \big](\omega)  \geq 0 \quad & \Longleftrightarrow \quad   \inf_{ \eta \in \cE^{\gamma} } \bE^{\bP}_t \big [ \eta X_t^T \big](\omega)  \geq 0. \label{eq:dGLR16}
\end{align*}
Using the definition of $\cE^{\gamma}$ we deduce that
\begin{equation*} \label{eq:dGLR8}
  \inf_{ \eta \in \cE^{\gamma} } \bE^{\bP}_t \big[\eta  X_t^T \big](\omega)=
  \inf_{\Lambda \in \mathfrak{L}^\gamma } \bE^{\bP}_t \big[ (1+\Lambda)X_t^T\big](\omega) =\bE^{\bP}_t\big[(1+\Lambda^*)X_t^T\big](\omega)
\end{equation*}
where $\Lambda^*:=\gamma \1_{\{X_t^T \leq 0\}}\in \mathfrak{L}^\gamma $.
As a result, it follows that
\begin{align*}
  \inf_{ \eta \in \cE^{\gamma} } \bE^{\bP}_t \big[\eta  X_t^T \big](\omega)&   =\bE^{\bP}_t\big[X_t^T\big](\omega)- \gamma \bE^{\bP}_t\big[\big(X_t^T\big)^-\big](\omega).
\end{align*}
Hence, we conclude that
\begin{equation*} \label{eq:dGLR15}
  \inf_{ \mathbb{Q} \in \widehat{\cQ}^{\gamma} } \bE^{\mathbb{Q}}_t \big[ X_t^T \big](\omega)  \geq 0 \quad
  \Longleftrightarrow \quad
\bE^{\bP}_t\big[X_t^T\big](\omega) \geq \gamma \bE^{\bP}_t\big[\big(X_t^T\big)^-\big](\omega).
\end{equation*}
By the definition of the dGLR, it is clear  that \eqref{eq:dGLR3} is fulfilled.
\end{proof}

\subsection{Applications}\label{subsec:Applications}

In this section, using a simple model for ask and bid prices of a stock, and choosing the dGLR as acceptability index, we compute the good-deal ask and bid prices of a European-style Asian option in a market with transaction costs.
We compare these good-deal prices with the no-arbitrage bounds.
Recall that $\widehat \cQ$, defined in \eqref{eq:QdGLR}, is a dynamically consistent family of sets of probability measures that corresponds to the dGLR.
We compute the ask and bid prices using the representation result in Theorem~\ref{Thm:RepSpotAskBid}.
No-arbitrage price bounds are calculated via using the lower and upper no-arbitrage bounds defined in Section~\ref{sec:arbANDgooddeals}.

We suppose that the bid price of the stock is given in Table~\ref{table:StockModelBid}.
\begin{table}[!h]
\caption{Bid price paths of the stock}
\smallskip
\centering
\renewcommand{\arraystretch}{1}
\begin{tabular}{cccccc}
     \hline \hline
      & $\omega_1$ & $\omega_2$ & $\omega_3$ & $\omega_4$ & $\omega_5$\\ \hline
  $t=0$ & $50$ & $50$ & $50$ & $50$ & $50$ \\
  $t=1$ & $80$ & $80$ & $80$ & $40$ & $40$ \\
  $t=2$ & $90$ & $70$ & $60$ & $60$ & $30$ \\\hline
\end{tabular}
\label{table:StockModelBid}
\end{table}
The ask price process is assumed to satisfy $P^{ask}:=P^{bid}(1+\lambda)$, where $\lambda \in \bR_+$ is the \emph{transaction costs coefficient}.
We also define the \emph{mid price process} as $P^{mid}:=(P^{ask}+P^{bid})/2$.

We recall that $\widehat \cQ$ is defined in terms of the reference measure $\bP$, which we will now assume to be
\begin{equation*}
\big(\bP(\omega_1), \bP(\omega_2), \bP(\omega_3), \bP(\omega_4), \bP(\omega_5)\big)=(1/10, 1/8, 1/4, 1/4, 11/40) \, .
\end{equation*}

\begin{example}[Asian Call Option]\label{subsubsec:ExampleA1dGLR}
We now compute the ask and bid price of a European-style Asian call option with a strike of 65.
According to our two-period model, the derivative contract is defined as
\begin{equation*}
D:= \left(0, 0,  \Big(( P^{mid}_0+P^{mid}_{1} +P^{mid}_2)/3- 65 \Big)^+\right).
\end{equation*}
Recall that $\Pi^{ask, \gamma}(D)$ and $\Pi^{bid, \gamma}(D)$ denote the good-deal prices computed using the dGLR, whereas $S^{ask}(D)$ and $S^{bid}(D)$ are the upper and lower no-arbitrage bounds, respectively.
Our results are presented in Table~\ref{table:ameanAsianprices} and Table~\ref{table:ameanAsianpricest1} for different transaction cost coefficients at $t=0,1$.
The prices displayed in Table~\ref{table:ameanAsianpricest1} correspond to the upper node of the tree, since the prices for the lower node are equal to zero.
In Figure~\ref{LiqSurfAA} we display the ``liquidity surface'', which is the plot of good-deal bid-ask spread as a function of the level $\gamma$ and transaction costs coefficient $\lambda$ at $t=0$.

\begin{table}[tH]
\caption{Ask and Bid Prices of an Arithmetic Asian Call Option, $t=0$}
\smallskip
\centering
\renewcommand{\arraystretch}{1.3}
\begin{tabular}{|c|c|c||c|c||c|c|}
\cline{2-7} \cline{2-7}
 \multicolumn{1}{c}{ }& \multicolumn{2}{||c||}{$\lambda = 0$}  & \multicolumn{2}{c||}{$\lambda = 0.005$} & \multicolumn{2}{c|}{$\lambda = 0.01$}  \\
\cline{2-7}
 \multicolumn{1}{c}{ }&  \multicolumn{1}{||c|}{$S^{ask}_0(D)$ }& $S^{bid}_0(D)$ & $S^{ask}_0(D)$ & $S^{bid}_0(D)$ & $S^{ask}_0(D)$ & $S^{bid}_0(D)$   \\
  \cline{2-7}
\multicolumn{1}{c}{ } & \multicolumn{1}{||c|}{1.38885 }  & 1.25003 & 1.48402 & 1.23020 & 1.55003 & 1.16726  \\ \hline
 \multicolumn{1}{|c}{\backslashbox{\vspace{-0.01in} \hspace{0.001cm} $\gamma$  }{}}
 &  \multicolumn{1}{||c|}{$\Pi^{ask, \gamma}_0(D)$} & $\Pi^{bid, \gamma}_0(D)$ & $\Pi^{ask, \gamma}_0(D)$ & $\Pi^{bid, \gamma}_0(D)$ & $\Pi^{ask, \gamma}_0(D)$ & $\Pi^{bid, \gamma}_0(D)$ \\ \hline
 0.0001 & \multicolumn{1}{||c|}{1.34177} & 1.34155 & 1.37681 & 1.37659 &  1.41186 & 1.41163  \\
 0.001 & \multicolumn{1}{||c|}{1.34274} & 1.34058 & 1.37781 & 1.37560 & 1.41288 & 1.41061   \\
 0.005 & \multicolumn{1}{||c|}{1.34706} & 1.33628 & 1.38224 & 1.37118 &  1.41742 & 1.40609 \\
 0.01 & \multicolumn{1}{||c|}{1.35244} & 1.33095 &  1.38776 & 1.36571 & 1.42309 & 1.40047  \\
 0.05 & \multicolumn{1}{||c|}{1.38885} & 1.28975 &  1.43158 & 1.32344 & 1.46802 & 1.35712   \\
 0.1 & \multicolumn{1}{||c|}{1.38885} & 1.25003 &  1.48402 & 1.27414 &  1.52322 & 1.30657   \\
 0.25 & \multicolumn{1}{||c|}{1.38885} & 1.25003 &  1.48402 & 1.23020 &  1.55003 & 1.17523  \\
 0.5  & \multicolumn{1}{||c|}{1.38885} & 1.25003 & 1.48402 & 1.23020 &  1.55003 & 1.16726   \\
 0.75  & \multicolumn{1}{||c|}{1.38885} & 1.25003 & 1.48402 & 1.23020 &  1.55003 & 1.16726  \\
 1   & \multicolumn{1}{||c|}{1.38885} & 1.25003 & 1.48402 & 1.23020 &   1.55003 & 1.16726   \\
 1.25 & \multicolumn{1}{||c|}{1.38885} & 1.25003 & 1.48402 & 1.23020 &  1.55003 & 1.16726   \\ \hline
\end{tabular}
\label{table:ameanAsianprices}
\end{table}

\begin{table}[tH]
\caption{Ask and Bid Prices of an Arithmetic Asian Call Option, $t=1$, $\omega=\omega_1$}
\smallskip
\centering
\renewcommand{\arraystretch}{1.3}
\begin{tabular}{|c|c|c||c|c||c|c|}
\cline{2-7} \cline{2-7}
 \multicolumn{1}{c}{ }& \multicolumn{2}{||c||}{$\lambda = 0$}  & \multicolumn{2}{c||}{$\lambda = 0.005$} & \multicolumn{2}{c|}{$\lambda = 0.01$}  \\
\cline{2-7}
 \multicolumn{1}{c}{ }&  \multicolumn{1}{||c|}{$S^{ask}_1(D)$ }& $S^{bid}_1(D)$ & $S^{ask}_1(D)$ & $S^{bid}_1(D)$ & $S^{ask}_1(D)$ & $S^{bid}_1(D)$   \\
  \cline{2-7}
\multicolumn{1}{c}{ } & \multicolumn{1}{||c|}{5.55541} & 5.00014 & 5.67765 & 5.17512 & 5.79988 & 5.35011  \\ \hline
 \multicolumn{1}{|c}{\backslashbox{\vspace{-0.01in} \hspace{0.001cm} $\gamma$  }{}}
 &  \multicolumn{1}{||c|}{$\Pi^{ask, \gamma}_1(D)$} & $\Pi^{bid, \gamma}_1(D)$ & $\Pi^{ask, \gamma}_1(D)$ & $\Pi^{bid, \gamma}_1(D)$ & $\Pi^{ask, \gamma}_1(D)$ & $\Pi^{bid, \gamma}_1(D)$ \\ \hline
 0.0001 & \multicolumn{1}{||c|}{5.36684} & 5.36648 & 5.50701 & 5.50665 & 5.64718 & 5.64681  \\
 0.001 &  \multicolumn{1}{||c|}{5.36847} & 5.36485 & 5.50866 & 5.50499 & 5.64886 & 5.64513   \\
 0.005 &  \multicolumn{1}{||c|}{5.37568} & 5.35763 & 5.51598 & 5.49767 & 5.65628 & 5.63770 \\
 0.01 &   \multicolumn{1}{||c|}{5.38465} & 5.34864 & 5.52508 & 5.48854 & 5.66551 & 5.62844  \\
 0.05 &   \multicolumn{1}{||c|}{5.45447} & 5.27791 & 5.59591 & 5.41678 & 5.73736 & 5.55566   \\
 0.1 &    \multicolumn{1}{||c|}{5.53722} & 5.19249 & 5.67764 & 5.33012 & 5.79988 & 5.46775   \\
 0.25 &   \multicolumn{1}{||c|}{5.55541} & 5.00014 & 5.67765 & 5.17512 & 5.79988 & 5.35012  \\
 0.5  &   \multicolumn{1}{||c|}{5.55541} & 5.00014 & 5.67765 & 5.17512 & 5.79988 & 5.35011  \\
 0.75  &  \multicolumn{1}{||c|}{5.55541} & 5.00014 & 5.67765 & 5.17512 & 5.79988 & 5.35011  \\
 1   &    \multicolumn{1}{||c|}{5.55541} & 5.00014 & 5.67765 & 5.17512 & 5.79988 & 5.35011  \\
 1.25 &   \multicolumn{1}{||c|}{5.55541} & 5.00014 & 5.67765 & 5.17512 & 5.79988 & 5.35011 \\ \hline
\end{tabular}
\label{table:ameanAsianpricest1}
\end{table}

\begin{figure}[!h] \vspace{0.01cm}
\centering
\begin{tabular}{c}
\epsfig{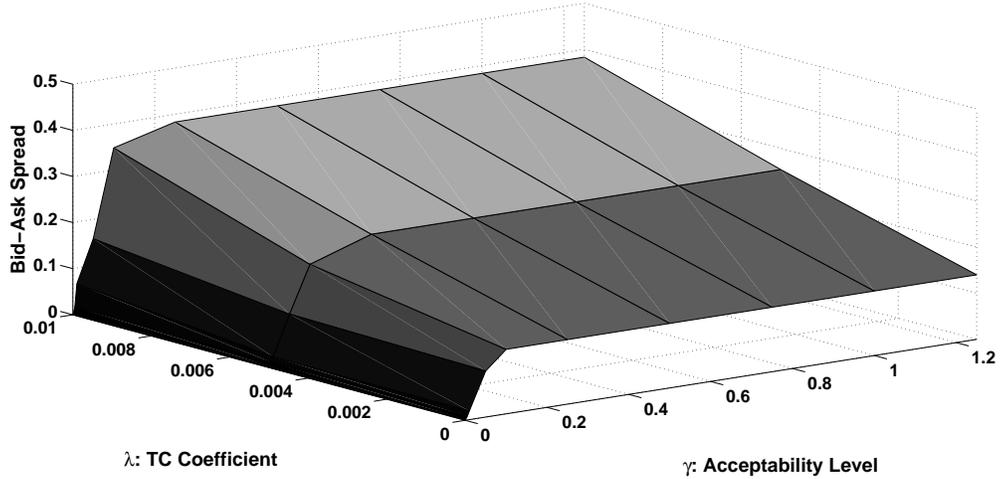}
\end{tabular}
\caption{Liquidity Surface of an Asian call Option at $t=0$}
\label{LiqSurfAA}
\end{figure}

In Figure~\ref{LiqSurfAA}, it is apparent that the good-deal bid-ask spread is increasing both in the acceptance level $\gamma$ and in the transaction cost coefficient $\lambda$.
The good-deal bid-ask spread naturally increases in  $\gamma$ because of the representations in Theorem~\ref{Thm:RepSpotAskBid}, and since $\cQ^{\gamma}$ is increasing in $\gamma$.
On the other hand, the good-deal bid-ask spread, as well as the difference between the upper and lower no-arbitrage bounds, increases in $\lambda$ since hedging the claim becomes more expensive as the $\lambda$ increases.

We also note from Table~\ref{table:ameanAsianprices} that both no-arbitrage bounds and the good-deal prices increase in $\lambda$, and that the good-deal ask and bid prices converge to the no-arbitrage bounds at higher $\gamma$ values.
This is also due to the fact that hedging is more expensive as $\lambda$ increases.
For example, in case $\lambda = 0$, $\Pi^{ask, \gamma}_0(D)$ and $\Pi^{bid, \gamma}_0(D)$ approximately converge to $S^{ask}_0(D)$ and $S^{bid}_0(D)$, respectively, at $\gamma = 0.1$, whereas if $\lambda = 0.005$ this happens at approximately $\gamma = 0.25$, and in the case $\lambda = 0.01$ it happens at approximately $\gamma = 0.5$.
\end{example}

\bigskip

\appendix

\section{Dynamic coherent acceptability indices}\label{sec:PricingDCAI}
{
\footnotesize
In this section, we provide some useful background information about acceptability indices and risk measures, as studied in \cite{ChernyMadan2009} and \cite{BCZ2010}.
Investors are usually concerned with finding satisfactory balance between \textit{reward and risk} associated with an investment process.
Various measures have been developed to quantify this balance.
Such measures are typically referred to as \textit{performance measures} or \textit{measures of performance}.
Cherny and Madan \cite{ChernyMadan2009} originated an effort to provide a mathematical framework to study these measures in a unified way for static models, and Bielecki et al. \cite{BCZ2010} followed up with an extension to a dynamic set-up.

A very popular measure of performance is the Sharpe Ratio introduced by \cite{Sharpe1964a}.
The Sharpe Ratio is expressed as the \textit{ratio} of expected excess return to standard deviation, and thus in financial applications it measures expected excess return of a portfolio in units of portfolio's standard deviation.
It has been used as a classical tool to rank portfolios according to their ``reward-to-risk" characteristics.

Using standard deviation to quantify risk is considered to be the major drawback of Sharpe Ratio because positive returns also contribute to this measure of risk.
To eliminate this unwanted feature other ratio-types performance measures that consider the downside risk were proposed, such as Sortino Ratio, \cite{SortinoPrice1994}, and Gain Loss Ratio (GLR), \cite{Bernardo2000}.
Another popular generalization of the Sharpe Ratio is provided by the Risk Adjusted Return on Capital, which is constructed as a  ratio of mean excess return to some selected measure of risk.

All the performance measures mentioned above share some common desirable features: they are unit-less, they are increasing functions of reward and decreasing functions of risk; moreover, according to these performance measures diversification of a portfolio improves its performance.
This observation prompts a natural desire to study performance measures in a unified mathematical framework. As already mentioned, such a study was recently originated by \cite{ChernyMadan2009}.
The study of \cite{ChernyMadan2009} was done in a static, one-time period setup, and the authors coined the term \emph{acceptability index} as a mathematical terminology for a performance measure.
In \cite{BCZ2010}, this static mathematical framework for studying acceptability indices was elevated to a dynamical, multi-period setup, where cash flows are considered as random processes and acceptability is assessed consistently in time.
In particular, they measure the performance of the total cumulative terminal value of the cashflow as seen from the initial time of the investment process, and also all remaining cumulative cashflows between each intermediate time and the terminal time of the investment process.

We proceed by recalling definitions and results from the theory of Dynamic Coherent Acceptability Indices, that were studied in Bielecki~et~al.~\cite{BCZ2010}.

We first recollect the definition of a dynamic coherent acceptability index.

\begin{definition}\label{def:DCAI}
A dynamic coherent acceptability index (DCAI) is a function
$\alpha: \cT\times L^0\times\Omega\to [0,\infty]$ that satisfies the following properties:
\begin{enumerate}
\item[\bf (D1)] {\bf Adaptiveness.}
For any $t\in\cT$ and $D\in L^0$, $\alpha_t(D)$ is $\mathcal{F}_t$-measurable;
\item[\bf (D2)] {\bf Independence of the past.}
For any $t\in\cT$ and $D, D'\in L^0$, if there exists $A\in\mathcal{F}_t$ such that $\1_A D_s=\1_A D'_s$ for
all $s\geq t$, then $\1_A\alpha_t(D)=\1_A\alpha_t(D')$;
\item[\bf (D3)] {\bf Monotonicity.}
For any $t\in\cT$ and $D, D'\in L^0$, if $D_s(\omega)\geq D'_s(\omega)$ for all $s\geq t$ and $\omega\in\Omega$, then $\alpha_t(D)\geq \alpha_t(D')$ for all $\omega\in\Omega$;
\item[\bf (D4)] {\bf Scale invariance.}
$\alpha_t(\lambda D) = \alpha_t(D)$ for
all  $\lambda >0, \ D\in L^0, \ t\in\cT,$ and $\omega\in\Omega$;
\item[\bf (D5)] {\bf Quasi-concavity.}
 If $\alpha_t(D) \geq x$ and $\alpha_t(D^\prime)\geq x$ for some $t\in\cT$, $\omega\in\Omega$,
 $D,D'\in L^0$, and $x\in (0,\infty]$, then
 $\alpha_t(\lambda D + (1-\lambda)D^\prime) \geq x$ for all $\lambda\in[0,1]$;
\item[\bf (D6)] {\bf Translation invariance.}
    $\alpha_t(D+m\1_{\{t\}})=\alpha_t(D+m\1_{\{s\}})$ for every
    $t\in\cT$, $D\in L^0$, $\omega\in\Omega$, $s\geq t$ and every $\mathcal{F}_t$-measurable random variable $m$;
\item[\bf (D7)] {\bf Dynamic consistency.} For any $t\in[0,\ldots, T-1]$ and $D,D'\in L^0$, if $D_t(\omega) \geq 0 \geq D'_t(\omega)$ for all $\omega\in\Omega$, and there exists a non-negative $\mathcal{F}_t$-measurable random variable $m$ such that $\alpha_{t+1}(D)\geq m(\omega)\geq \alpha_{t+1}(D')$ for all $\omega\in\Omega$, then $\alpha_t(D)\geq m(\omega)\geq \alpha_t(D')$ for all $\omega\in\Omega$.
\end{enumerate}
\end{definition}

\noindent Property (D1) is a natural property in a dynamic setup and it assumes that a DCAI is adapted to the same information flow $\{\mathcal{F}_t\}_{t\geq 0}$ as is any cash flow $D\in\mathcal{D}$. \smallskip

\noindent
 (D2) postulates that in the dynamic context the current measurement of performance of a cash flow $D$ only accounts for future payoffs. To decide, at any given point of time, whether one should hold on to a position generating the cash flow $D$, one may want to compare the measurement of the performance of the future payoffs (provided by DCAI at this  point of time) to already known past payoffs.
\smallskip

Properties (D3)-(D5) are naturally inherited from the static case.
\smallskip

\noindent
Translation invariance (D6) implies that  if a known dividend $m$ is added to $D$ at time $t$ (today), or at any future time $s\geq t$, then all such adjusted cashflows are accepted today at the same level.

\smallskip

\noindent
Dynamic consistency (D7) is the property in  the dynamic setup which relates the values of the index between two consecutive days in a consistent manner.
It can be interpreted from financial point of view as follows:
if a portfolio has a nonnegative cashflow today, then we accept this portfolio today at least at the same level as we would accept it tomorrow;
similarly, if the today's cashflow is nonpositive the acceptance level today can not be larger than the level of acceptance tomorrow.

For technical reasons, we assume that for every DCAI $\alpha$, and for every $t \in \cT$ and $\omega \in \Omega$, there exists two portfolios $D, D' \in \cD$ such that $\alpha_t(D)(\omega)=+\infty$ and $\alpha_t(D')(\omega)=0$.
In this case, we say that the DCAI $\alpha$ is \emph{normalized.
Assuming that $\alpha$ is normalized excludes degenerate examples of acceptability indices such as a constant index over all states, times, and portfolios.}

Let us proceed by stating with the definition of a dynamic coherent risk measure.
\begin{definition}\label{Def:DCRM}
Dynamic coherent risk measure (DCRM) is a function $\rho: \{0,\ldots,T\}\times L^0\times\Omega\to \mathbb{R}$ that satisfies the following properties:
\begin{enumerate}
\item[\bf (A1)] {\bf Adaptiveness.}
$\rho_t(D)$ is $\mathcal{F}_t$-measurable for all $t\in\cT$ and $D\in L^0$;
\item[\bf (A2)] {\bf Independence of the past.}
If  $\1_A D_s=\1_A D'_s$ for some $t\in\cT$, $D, D'\in L^0$, and $A\in\mathcal{F}_t$ and for all $s\geq t$, then $\1_A\rho_t(D)=\1_A\rho_t(D')$;
\item[\bf (A3)] {\bf Monotonicity.}
If $D_s(\omega)\geq D'_s(\omega)$ for some $t\in\cT$ and $D, D'\in L^0$,   and for all $s\geq t$ and $\omega\in\Omega$,
then $\rho_t(D)\leq \rho_t(D')$ for all $\omega\in\Omega$;
\item[\bf (A4)] {\bf Homogeneity.}
$\rho_t(\lambda D) = \lambda \rho_t(D)$ for
all  $\lambda > 0, \ D\in L^0, \ t\in\cT$, and $\omega\in\Omega$;
\item[\bf (A5)] {\bf Subadditivity.}
$\rho_t(D+D') \leq \rho_t(D) + \rho_t(D')$
for all $t\in\cT$, $D, D'\in L^0$, and $\omega\in\Omega$;
\item[\bf (A6)] {\bf Translation invariance.}
$\rho_t(D+m\1_{\{s\}})=\rho_t(D)-m$ for every
$t\in\cT$, $D\in L^0$, $\mathcal{F}_t$-measurable random variable $m$, and all $s\geq t$;
\item[\bf (A7)] {\bf Dynamic consistency.}
$$
\1_A (\min_{\omega\in A}\rho_{t+1}(D)-D_t) \leq \1_A \rho_t(D) \leq \1_A (\max_{\omega\in A}\rho_{t+1}(D)-D_t)\, ,
$$
for every $t\in\{0,1,\ldots, T-1\}$, $D\in L^0$ and $A\in\mathcal{F}_t$.
\end{enumerate}
\end{definition}

\noindent We want to mention that our definition of DCRM differs from the definition given in previous studies essentially only by the
dynamic consistency property.
For sake of completeness, we will present here how  property (A7) relates to other forms of dynamic
consistency of risk measures (for processes).
\begin{itemize}
\item[]{\bf (A7-I)} 
If $D_t=D'_t$, and $\rho_{t+1}(D)=\rho_{t+1}(D')$ for some
$t\in\{0,1,\ldots,T-1\}$, and $D,D'\in\mathcal{D}$,  then  $\rho_t(D)=\rho_t(D')$;
\item[]{\bf (A7-II)} 
$\rho_t(D)=\rho_t(-\rho_{t+1}(D)1_{\{t+1\}}) - D_t$ for
all times $t=0,1,\ldots,T-1$ and positions $D\in\mathcal{D}$.
\item[]{\bf(A7-III)} $\rho_t(D) \leq \rho_t(- \rho_{t+1}(D)1_{t+1}) - D_t$ for all $D\in\mathcal{D}, \ t\in\{0,1,\ldots,T-1\}$,
\item[]{\bf(A7-IV)} $\rho_t(D) \geq \rho_t(- \rho_{t+1}(D)1_{t+1}) - D_t$ for all $D\in\mathcal{D}, \ t\in\{0,1,\ldots,T-1\}$,
\item[]{\bf(A7-V)} if $D_t=0$, and $\rho_{t+1}(D) \leq 0$ for some $t\in\{0,1,\ldots,\}$ and $D\in\mathcal{D}$, then $\rho_t(D)\leq 0$.
\end{itemize}

Property (A7-I) is the dynamic consistency property for DCRM defined by  \cite{Riedel2004}.
Property (A7-II) is the version of the dynamic programming principle  (also called recursiveness),  introduced by \cite{CheriditoDelbaenKupper2006}.
Properties (A7-I) and (A7-II) are equivalent, and they are also sometimes called {\it strong dynamic consistency property}.
To the best of our knowledge, properties (A7-III) and (A7-IV) were first introduced in the context of random processes by \cite{AcciaioFollmerPenner2010}, and they were called {\it acceptance and rejection consistency}, respectively.
In the same paper, Acciaio, F\"{o}llmer and Penner introduced condition (A7-V) and they called it \textit{weakly acceptance consistent}.

It is straightforward to show that the dynamic consistency condition (A7) is stronger than (A7-V), and it is weaker than (A7-I) or (A7-II).
Also note that since conditions (A7-II) and (A7-III) taken together are equivalent to (A7-II), then, taken together they imply (A7).
However, the inverse implication is not necessarily true.

We now recall an important result that provides the representation of a DCAI in terms of a family of DCRMs, and the representation of DCRM in terms of a DCAI.
The proof the following theorem can be found in~\cite{BCZ2010}.

\begin{theorem} \mbox{} \label{theorem:Duality}
\begin{itemize}
\item[(i)]
If $\alpha$ is a normalized, right-continuous, dynamic coherent acceptability index,
then there exists a left-continuous and increasing family of dynamic coherent risk measures \\ $(\rho^{\gamma})_{{\gamma}\in(0,\infty)}$, such that
\begin{equation}
\alpha_t(D)( \omega)=\sup\{{\gamma}\in(0,\infty): \rho^{\gamma}_t(D)(\omega)\leq 0\},\, \qquad  \omega \in \Omega,\; t \in \cT, \; D\in L^0 .
\end{equation}
\item[(ii)] If $(\rho^{\gamma})_{{\gamma}\in(0,\infty)}$ is a left-continuous and increasing family of dynamic coherent risk measures,
 then there exists a right-continuous and normalized dynamic coherent acceptability index $\alpha$ such that,
\begin{align*}
\rho^{\gamma}_t(D)(\omega)=\inf\{c \in\mathbb{R}:\alpha_t(D+\delta_t(1c))(\omega)\geq {\gamma}\},\, \qquad \;\omega \in \Omega,\; t \in \cT, \; D\in  L^0 .
\end{align*}
\end{itemize}
We take $\inf\emptyset=\infty$ and $\sup\emptyset=0$.
\end{theorem}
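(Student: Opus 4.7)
The plan is to build the correspondence in both directions explicitly and verify the axiomatic requirements on each side. For part (i), given a normalized right-continuous DCAI $\alpha$, I would define $\rho^\gamma_t(D)(\omega):=\inf\{c\in\bR:\alpha_t(D+\delta_t(\mathbf{1}c))(\omega)\geq\gamma\}$ for each $\gamma\in(0,\infty)$, and show that $\rho^\gamma$ satisfies (A1)--(A7). Properties (A1), (A2), (A3) and (A6) transfer essentially verbatim from (D1), (D2), (D3) and (D6). For (A4), scale invariance (D4) gives $\alpha_t(\lambda D+\delta_t(\mathbf{1}\lambda c))=\alpha_t(D+\delta_t(\mathbf{1}c))$, and the change of variable $c'=\lambda c$ yields $\rho^\gamma_t(\lambda D)=\lambda\rho^\gamma_t(D)$ for $\lambda>0$. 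For (A5), quasi-concavity (D5) combined with (D4) shows that the superlevel sets $\{D:\alpha_t(D)\geq\gamma\}$ are convex cones, which together with (D6) forces subadditivity of $\rho^\gamma$ by the standard translation argument used for coherent risk measures. Finally, (A7) is extracted from (D7) by applying dynamic consistency of $\alpha$ to the translated cash flow $D+\delta_t(\mathbf{1}m)$, where $m$ is an $\cF_t$-measurable shift chosen so that the time-$t$ component has the sign required by (D7).

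Once the family $(\rho^\gamma)_{\gamma\in(0,\infty)}$ is in hand, monotonicity in $\gamma$ is immediate from the definition, and left-continuity follows because for each fixed $(t,D,\omega)$ the set $\{c:\alpha_t(D+\delta_t(\mathbf{1}c))(\omega)\geq\gamma\}$ is, by (D3), an upper half-line whose left endpoint depends monotonically on $\gamma$; finiteness of $\Omega$ together with right-continuity of $\alpha$ then give the required one-sided continuity. The representation identity $\alpha_t(D)=\sup\{\gamma:\rho^\gamma_t(D)\leq 0\}$ is then established by two inequalities: if $\rho^\gamma_t(D)\leq 0$, approximating from above by $c_n\downarrow\rho^\gamma_t(D)$ with $c_n$ arbitrarily small positive and using (D3) together with right-continuity of $\alpha$ gives $\alpha_t(D)\geq\gamma$; conversely, for any $\gamma_0<\alpha_t(D)$, right-continuity produces a small $c\leq 0$ with $\alpha_t(D+\delta_t(\mathbf{1}c))\geq\gamma_0$, whence $\rho^{\gamma_0}_t(D)\leq 0$ and the supremum is at least $\alpha_t(D)$.

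Part (ii) is proved by the symmetric construction: set $\alpha_t(D)(\omega):=\sup\{\gamma\in(0,\infty):\rho^\gamma_t(D)(\omega)\leq 0\}$ and verify (D1)--(D7). Axioms (D1)--(D3) and (D6) transfer from (A1)--(A3) and (A6). Scale invariance (D4) follows from homogeneity (A4) since $\rho^\gamma_t(\lambda D)\leq 0\iff\rho^\gamma_t(D)\leq 0$ for $\lambda>0$. Quasi-concavity (D5) is obtained from the convexity of $\rho^\gamma$ implied by (A4) and (A5) together with translation invariance, and (D7) is produced from (A7) by the same sign-splitting argument as above. Right-continuity and normalization of $\alpha$ are immediate from the construction. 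The main obstacle I anticipate is the careful transfer of dynamic consistency: the inequality shape of (A7), with its $\min_{\omega\in A}$ and $\max_{\omega\in A}$ over an atom $A\in\cF_t$, does not mirror (D7) in an obvious way, and converting between the scalar $\rho^\gamma_{t+1}(D)$ and the $\gamma$-interval on which $\alpha_{t+1}(D)$ sits requires a careful monotone argument exploiting the left-continuity/right-continuity matching between the two objects.
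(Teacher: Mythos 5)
The paper itself does not prove this theorem: it is imported verbatim from \cite{BCZ2010} (``The proof of the following theorem can be found in [BCZ11]''), and the two-way construction you propose --- $\rho^{\gamma}_t(D):=\inf\{c:\alpha_t(D+\delta_t(\mathbf{1}c))\geq\gamma\}$ in one direction and $\alpha_t(D):=\sup\{\gamma:\rho^{\gamma}_t(D)\leq 0\}$ in the other --- is exactly the canonical route used there, so your overall architecture is the expected one. The genuine gap is at the step you yourself flag and then do not carry out: the equivalence transfer between the dynamic consistency axioms (D7) and (A7). This is the only part of the statement that is truly dynamic rather than a re-run of the static Cherny--Madan duality, and it is not a routine translation. (D7) speaks of an $\cF_t$-measurable level $m$ sandwiched between $\alpha_{t+1}(D)$ and $\alpha_{t+1}(D')$ for \emph{two} cash flows subject to sign conditions on their time-$t$ payments, while (A7) is a pathwise sandwich $\1_A(\min_{\omega\in A}\rho_{t+1}(D)-D_t)\leq \1_A\rho_t(D)\leq \1_A(\max_{\omega\in A}\rho_{t+1}(D)-D_t)$ on every atom $A\in\cF_t$ for a \emph{single} cash flow. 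Passing between them requires constructing auxiliary comparison cash flows (shifts of $D$ at times $t$ and $t+1$ by constants chosen atom by atom), using translation invariance on both sides, and then a limiting argument that converts the level sets of $\alpha$ into the infimum defining $\rho^{\gamma}$ and back, which is precisely where the left-/right-continuity hypotheses enter. Writing ``requires a careful monotone argument'' names the difficulty without resolving it; as it stands neither direction of the theorem is established.

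Two secondary omissions would also have to be filled in a complete write-up. In part (ii) the theorem asserts not only that the constructed $\alpha$ is a normalized, right-continuous DCAI but also that the \emph{given} family is recovered, $\rho^{\gamma}_t(D)=\inf\{c:\alpha_t(D+\delta_t(\mathbf{1}c))\geq\gamma\}$; with your definition of $\alpha$ this reduces to showing that the infimum equals $\sup_{\beta<\gamma}\rho^{\beta}_t(D)$ and then invoking the assumed left-continuity in $\gamma$ --- easy, but part of the claim and absent from your sketch. In part (i) you should check that your $\rho^{\gamma}$ is real-valued (the infimum is neither $-\infty$ nor $+\infty$), which uses normalization of $\alpha$ and finiteness of $\Omega$; moreover your justification of left-continuity is misattributed: it follows from monotonicity of $\alpha$ alone (a cash amount $c$ acceptable at every level $\beta<\gamma$ is acceptable at level $\gamma$), whereas right-continuity of $\alpha$ is what you need --- and correctly use --- for the representation identity itself.
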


Next, we recall the definitions of a dynamically consistent sequence of sets of probability measures and an increasing family of sequences of sets of probability measures.

\begin{definition}\mbox{}\label{Def:ConsistentProbSet}
\begin{itemize}
\item[(i)]
A sequence of sets of probability measures $(\cQ_t)_{t=0}^T$ absolutely continuous with respect to $\bP$ is called dynamically consistent with respect to the filtration $(\cF_t)_{t=0}^T$ if the sequence is of full-support and the following inequality holds
\begin{align*}
  \1_E \min_{\omega \in E} \Big\{\inf_{ \bQ \in \cQ_{t+1}} \bE^{\bQ}_{t+1}[X](\omega)\Big\} &\leq \1_E \inf_{ \bQ \in \cQ_t} \bE^{\bQ}_{t}[X] \leq   \1_E \max_{\omega \in E} \Big\{\inf_{ \bQ \in \cQ_{t+1}} \bE^{\bQ}_{t+1}[X](\omega)\Big\}
\end{align*}
for all $t \in \{0, 1, \dots, T-1\}$, $E \in \cF_t$, and $\cF_T$-measurable random variables $X$.
\item[(ii)]
A family of sequences of sets of probability measures
$((\cQ^{\gamma}_t)_{t=0}^T)_{\gamma \in(0,\infty)}$ is called increasing if $\cQ_t^{\gamma}\supseteq \cQ_t^{\beta}$, for all $\gamma \geq \beta>0$ and $t\in\cT$.
\end{itemize}
\end{definition}

Now, we recall a representation theorem for dynamic coherent risk measures in terms of dynamically consistent set of probabilities.
These results, combined with the results from Theorem~\ref{theorem:Duality} about duality between DCAI and DCRM, gives a representation theorem for dynamic coherent acceptability indices.

\begin{theorem}[Robust Representation Theorem for DCRM]\label{theorem:DCRMrep}
For $\gamma > 0$, a function $\rho^{\gamma}: \{0,1,\ldots,T\}\times L^0\times\Omega\to \mathbb{R}$ is a dynamic coherent risk measure
if and only if there exists a dynamically consistent family of sets of probabilities $(\cQ^{\gamma}_t)^T_{t=0}$ such that,
\begin{equation}\label{eq:RM-Representation}
\rho^{\gamma}_t(D)=-\inf_{\mathbb{Q}\in\cQ^{\gamma}_t}\mathbb{E}^{\mathbb{Q}}_t\Big[\sum_{s=t}^T D_s\Big]\, , \quad
 t\in\cT, \ D\in L^0.
\end{equation}
\end{theorem}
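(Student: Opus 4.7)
For the $(\Longleftarrow)$ direction, I would define $\rho^\gamma_t$ by the right-hand side of \eqref{eq:RM-Representation} and verify properties (A1)--(A7) individually. Adaptiveness, independence of the past, monotonicity, positive homogeneity, subadditivity, and translation invariance each transfer immediately from the analogous property of the conditional expectation $\bE^\bQ_t$ together with the order-reversing behaviour of $\inf$. For dynamic consistency (A7), the key observation is that because $D_t$ is $\cF_t$-measurable, one may write $\rho^\gamma_t(D) + D_t = -\inf_{\bQ \in \cQ^\gamma_t}\bE^\bQ_t[\sum_{s=t+1}^T D_s]$ and $\rho^\gamma_{t+1}(D) = -\inf_{\bQ \in \cQ^\gamma_{t+1}}\bE^\bQ_{t+1}[\sum_{s=t+1}^T D_s]$. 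The two-sided sandwich inequality of Definition~\ref{Def:ConsistentProbSet}(i) applied to $X := \sum_{s=t+1}^T D_s$ then translates, after the sign flip (which swaps $\min$ and $\max$ with $\max$ and $\min$), into exactly the two inequalities comprising (A7).

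For the harder $(\Longrightarrow)$ direction, since $\Omega$ is finite, $\cF_t$ is generated by a finite partition $\{P^t_1, \ldots, P^t_{n_t}\}$. Fix $t$ and an atom $P^t_i$, and choose any $\omega_i \in P^t_i$. Using (A2) and (A6), I would view the map $X \longmapsto \rho^\gamma_t(\delta_t(\mathbf{1} X))(\omega_i)$, for $X$ an $\cF_T$-measurable random variable, as a \emph{static} coherent risk measure on the finite-dimensional space $L^0(\Omega, \cF_T, \bP)$. The classical Artzner--Delbaen--Eber--Heath representation---which in finite dimensions reduces to Hahn--Banach separation of the closed convex cone $\{X : \rho^\gamma_t(\delta_t(\mathbf{1} X))(\omega_i) \leq 0\}$ from an appropriate strictly negative cone---then yields a closed convex set $\cR^t_i$ of probability measures on $(\Omega, \cF_T)$ with $\rho^\gamma_t(\delta_t(\mathbf{1} X))(\omega_i) = -\inf_{\bQ \in \cR^t_i} \bE^\bQ[X]$. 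I would then assemble $\cQ^\gamma_t := \{\bQ \ll \bP : \bQ(\cdot \mid P^t_i) \in \cR^t_i \text{ for every } i\}$, and splitting off the $\cF_t$-measurable term $D_t$ via (A6) delivers the identity $\rho^\gamma_t(D) = -\inf_{\bQ \in \cQ^\gamma_t}\bE^\bQ_t[\sum_{s=t}^T D_s]$ atom by atom.

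The main obstacle, and the step that will need the most care, is showing that the family $(\cQ^\gamma_t)_{t=0}^T$ assembled in this way is dynamically consistent in the sense of Definition~\ref{Def:ConsistentProbSet}(i). Here I would run the argument from the $(\Longleftarrow)$ step in reverse: the sandwich bound (A7) on $\rho^\gamma$ restricted to an atom $E \in \cF_t$, expressed in terms of the atom-wise representation just established, translates directly into the required sandwich bound on $\inf_{\bQ \in \cQ^\gamma_t}\bE^\bQ_t[\cdot]$. The remaining technicalities---full support of the measures in $\cQ^\gamma_t$ and properness of each $\cR^t_i$---should follow from $\bP$ having full support on the finite $\Omega$ together with (A3) and (A4), which force the acceptance set to be a proper closed convex cone containing the positive orthant, so that the polar set in the Hahn--Banach argument is a genuine nonempty set of probability vectors.
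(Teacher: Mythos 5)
Two preliminary observations: this paper does not prove Theorem~\ref{theorem:DCRMrep} at all --- it is imported from \cite{BCZ2010} --- so there is no in-paper argument to compare against; and your $(\Longleftarrow)$ direction is essentially correct: (A1)--(A6) do transfer directly from the conditional expectations, and with $X:=\sum_{s=t+1}^T D_s$ the sandwich inequality in Definition~\ref{Def:ConsistentProbSet}(i) becomes exactly (A7) after the sign flip, because the $\cF_t$-measurable term $D_t$ pulls out of both $\bE^{\bQ}_t$ and the infimum.

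The $(\Longrightarrow)$ direction, however, contains a genuine gap. A small issue first: for $X$ that is $\cF_T$- but not $\cF_t$-measurable, $\delta_t(\mathbf{1}X)$ is not an adapted process, hence not in $L^0$, so $\rho^{\gamma}_t(\delta_t(\mathbf{1}X))$ is undefined; you surely mean $\delta_T(\mathbf{1}X)$, a single payment of $X$ at the terminal date, and for that map the atom-wise static coherent-risk-measure representation via Hahn--Banach does give $\rho^{\gamma}_t(\delta_T(\mathbf{1}X))(\omega_i)=-\inf_{\bQ\in\cR^t_i}\bE^{\bQ}[X]$, using (A2) to localize to $P^t_i$ and (A6) with constant shifts at time $T$. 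The serious problem is that this establishes \eqref{eq:RM-Representation} only for cash flows supported on $\{t,T\}$: a general $D\in L^0$ has intermediate payments $D_{t+1},\dots,D_{T-1}$, which are \emph{not} $\cF_t$-measurable, so (A6) cannot move them, and nothing in your argument shows that $\rho^{\gamma}_t(D)$ depends on $D$ only through the cumulative amount $\sum_{s=t}^T D_s$ --- which is precisely what the claimed identity asserts. This ``timing collapse'' is not a consequence of (A1)--(A6) alone: on a two-period model the functional $\rho_0(D)=-D_0+\max_{\omega}(-D_1)(\omega)+\max_{\omega}(-D_2)(\omega)$ satisfies (A1)--(A6) but is not a function of $D_0+D_1+D_2$ (take $-D_1=(1,0)$, $-D_2=(0,1)$ versus $-D_1'=(1,1)$, $-D_2'=0$); it is ruled out only by (A7). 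Hence the heart of the proof is to use the dynamic consistency (A7) --- e.g.\ via a backward induction in $t$ relating $\rho^{\gamma}_t$ to $\rho^{\gamma}_{t+1}$ --- to reduce an arbitrary $D$ to a terminal-payment flow, whereas your proposal invokes (A7) only afterwards, to verify consistency of the assembled sets $\cQ^{\gamma}_t$. Until that reduction step is supplied (together with the full-support bookkeeping needed so that the conditional measures $\bQ(\cdot\mid P^t_i)$ are well defined), the representation is proved only on a proper subclass of $L^0$.
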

The proof this theorem can be found in~\cite{BCZ2010}.

A direct consequence of Theorem~\ref{theorem:Duality} and Theorem~\ref{theorem:DCRMrep}, is the following result, which is proved in~\cite{BCZ2010}.

\begin{theorem} \mbox{} \label{theorem:DCAIrep}
\begin{itemize}
\item[(i)]
Assume that $(\cQ^{\gamma}_t)_{t=0}^T)_{\gamma \in(0,\infty)}$ is an increasing family of dynamically consistent sequences of sets of probability measures.
Then, the function $\alpha: \{0,1,\ldots,T\}\times L^0\times\Omega\to[0,\infty]$ defined as follows,
\begin{equation*}
\alpha_t(D)(\omega)=\sup\Big\{\gamma \in(0,\infty): \inf_{\bQ\in\cQ^{\gamma}_t}\bE^{\bQ}_t\Big[\sum_{s=t}^T D_s\Big](\omega)\geq 0\Big\} \,,
\quad \omega \in \Omega, \;t\in\cT, \ D\in L^0,
\end{equation*}
is a normalized and right-continuous dynamic coherent acceptability index.
\item[(ii)]
If $\alpha$ is a normalized and right-continuous dynamic coherent acceptability index,
then there exists a family of dynamically consistent sequences of sets of probability measures
$(\cQ^{\gamma}_t)_{t=0}^T)_{\gamma \in(0,\infty)}$ such that
\begin{equation*}
\alpha_t(D)(\omega)=\sup\Big\{\gamma \in(0,\infty): \inf_{\mathbb{Q}\in\cQ^{\gamma}_t}\bE^{\bQ}_t\big[\sum_{s=t}^T D_s\big](\omega)\geq 0\Big\} \,,
\quad \omega\in \Omega, \;t\in\cT, \ D\in L^0.
\end{equation*}
Here we adopt the usual convention that  $\inf\emptyset=\infty$ and $\sup\emptyset=0$.
\end{itemize}
\end{theorem}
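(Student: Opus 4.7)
The plan is to derive both parts by chaining together Theorem~\ref{theorem:Duality} and Theorem~\ref{theorem:DCRMrep}, since the ingredients of the statement are precisely the outputs of these two results composed in opposite orders. The whole proof is essentially a translation exercise, with two small technical points that must be handled in the two directions.

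For part (i), I would start by defining, for every $\gamma>0$,
\[
\rho^{\gamma}_t(D) := -\inf_{\bQ\in\cQ^\gamma_t}\bE^\bQ_t\Big[\sum_{s=t}^T D_s\Big],\qquad t\in\cT,\ D\in L^0.
\]
Dynamic consistency of $(\cQ^\gamma_t)_{t=0}^T$ together with the ``if'' direction of Theorem~\ref{theorem:DCRMrep} yields that each $\rho^\gamma$ is a DCRM. Monotonicity of the family in $\gamma$, i.e.\ $\cQ^\gamma_t \supseteq \cQ^\beta_t$ for $\gamma\ge\beta$, forces $\rho^\gamma_t(D)\ge \rho^\beta_t(D)$, so $(\rho^\gamma)_{\gamma>0}$ is increasing in $\gamma$. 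The one step that is not completely formal is the \emph{left-continuity} in $\gamma$, needed before we can invoke Theorem~\ref{theorem:Duality}(ii); here I would use that $\Omega$ is finite, that $\rho^\gamma_t(D)$ is monotone in $\gamma$, and argue via a routine monotone-limit argument on the nested sets $\bigcup_{\beta<\gamma}\cQ^\beta_t$ to get $\lim_{\beta\uparrow\gamma}\rho^\beta_t(D)=\rho^\gamma_t(D)$. Once this is in place, Theorem~\ref{theorem:Duality}(ii) produces a normalized, right-continuous DCAI $\alpha$ with $\alpha_t(D)(\omega)=\sup\{\gamma>0:\rho^\gamma_t(D)(\omega)\le 0\}$, and substituting the definition of $\rho^\gamma$ gives exactly the formula in (i).

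For part (ii), the starting point is Theorem~\ref{theorem:Duality}(i), which to a normalized, right-continuous DCAI $\alpha$ attaches a left-continuous, increasing family $(\rho^\gamma)_{\gamma>0}$ of DCRMs with $\alpha_t(D)(\omega)=\sup\{\gamma:\rho^\gamma_t(D)(\omega)\le 0\}$. For each $\gamma$, the ``only if'' direction of Theorem~\ref{theorem:DCRMrep} yields a dynamically consistent sequence $(\cQ^\gamma_t)_{t=0}^T$ representing $\rho^\gamma$. The delicate point is that we must also ensure the monotonicity $\cQ^\gamma_t\supseteq \cQ^\beta_t$ for $\gamma\ge\beta$. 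The clean way to get this is to pick, for each $\gamma$, the \emph{maximal} representing set
\[
\cQ^\gamma_t := \Big\{\bQ\sim\bP : \bE^\bQ_t\Big[\sum_{s=t}^T D_s\Big]\ge -\rho^\gamma_t(D)\ \text{for all } D\in L^0\Big\}.
\]
Monotonicity of $\rho^\gamma$ in $\gamma$ makes the defining family of inequalities weaker as $\gamma$ grows, so $\cQ^\gamma_t$ is increasing in $\gamma$ by construction. Substituting this representation back into $\sup\{\gamma:\rho^\gamma_t(D)(\omega)\le 0\}$ yields the displayed formula for $\alpha_t(D)(\omega)$.

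The main obstacle is not conceptual but bookkeeping: in part (i) one must check that the family $(\rho^\gamma)$ is genuinely left-continuous (so that Theorem~\ref{theorem:Duality}(ii) applies), and in part (ii) one must select the representing sets $(\cQ^\gamma_t)_t$ in a manner that preserves both dynamic consistency \emph{and} the monotonicity in $\gamma$. Both are handled by the standard ``take the maximal representing set'' trick available in the coherent setting, and no new inequalities beyond those already proven in Theorems~\ref{theorem:Duality} and~\ref{theorem:DCRMrep} are needed.
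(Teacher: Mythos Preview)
Your overall strategy---chaining Theorem~\ref{theorem:Duality} with Theorem~\ref{theorem:DCRMrep}---is exactly what the paper indicates: it states that the result is ``a direct consequence'' of these two theorems and defers the details to \cite{BCZ2010}. At the level of approach you are aligned with the paper, and your handling of part~(ii) via the maximal representing set is the standard and correct device.

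There is, however, a gap in part~(i). Your argument for left-continuity of $\gamma\mapsto\rho^\gamma_t(D)$ does not go through as written: the hypothesis only says that $(\cQ^\gamma_t)_\gamma$ is increasing, not that $\cQ^\gamma_t=\bigcup_{\beta<\gamma}\cQ^\beta_t$, and finiteness of $\Omega$ does not bridge this. For instance, take $\cQ^\gamma_t=\{\bP\}$ for $\gamma<1$ and $\cQ^\gamma_t=\{\bP,\bQ\}$ for $\gamma\ge 1$ with $\bQ\neq\bP$; this family is increasing and can be made dynamically consistent, yet $\rho^\gamma$ has a jump at $\gamma=1$. The clean repair is to note that the supremum defining $\alpha$ is insensitive to such jumps: replacing $\rho^\gamma$ by its left-continuous envelope $\tilde\rho^\gamma:=\sup_{\beta<\gamma}\rho^\beta$ leaves $\sup\{\gamma:\rho^\gamma_t(D)(\omega)\le 0\}$ unchanged (an interval $(0,a)$ and $(0,a]$ have the same supremum), and $\tilde\rho^\gamma$ is left-continuous and increasing with each $\tilde\rho^\gamma$ still a DCRM (properties (A1)--(A6) pass to monotone pointwise suprema). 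Alternatively---and this is closer to what is actually carried out in \cite{BCZ2010}---one bypasses Theorem~\ref{theorem:Duality}(ii) entirely and verifies (D1)--(D7), normalization, and right-continuity for $\alpha$ directly from the displayed formula, using only that each $\rho^\gamma$ is a DCRM and that the family is increasing in $\gamma$.
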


\section{Technical results}

The following lemma is an auxiliary result needed for Theorem~\ref{Thm:RepSpotAskBid}.

\begin{lemma}\label{lemma:ContinuityRhoSet}
  For any monotone increasing, continuous function $f:(0, \infty) \to \bR$, we have that
  \begin{equation*}
  f(\gamma) \leq 0 \quad \text{if and only if} \quad  \sup\{ \beta \in (0, \infty): f(\beta) \leq 0 \} \geq \gamma,
  \end{equation*}
 for any $\gamma>0$.
\end{lemma}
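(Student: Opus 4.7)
The statement is a standard real-analysis fact about the super-level (well, sub-level) set of a continuous, monotone function, so the proof should be very short. I will simply verify the two implications directly from the definition of $\sup$, using monotonicity for one case and continuity for a boundary case.

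\textbf{Forward direction ($\Longrightarrow$).} Assume $f(\gamma) \leq 0$. Then $\gamma$ itself belongs to the set $A := \{\beta \in (0,\infty) : f(\beta) \leq 0\}$, so $\sup A \geq \gamma$. This direction requires nothing beyond the definition of the supremum.

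\textbf{Backward direction ($\Longleftarrow$).} Set $S := \sup A$ and assume $S \geq \gamma$. I would split into two cases:

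\emph{Case 1: $S > \gamma$.} By the defining property of the supremum, for $\varepsilon := (S - \gamma)/2 > 0$ there exists $\beta \in A$ with $\beta > S - \varepsilon > \gamma$. Since $f$ is monotone increasing and $\gamma < \beta$, we get $f(\gamma) \leq f(\beta) \leq 0$.

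\emph{Case 2: $S = \gamma$.} Pick a sequence $(\beta_n) \subset A$ with $\beta_n \to S = \gamma$. Then $f(\beta_n) \leq 0$ for every $n$. By continuity of $f$, $f(\gamma) = \lim_{n\to\infty} f(\beta_n) \leq 0$.

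In either case $f(\gamma) \leq 0$, completing the proof.

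\textbf{Where the hypotheses are used.} Monotonicity handles Case 1; continuity is needed precisely in the boundary Case 2, where $\gamma$ may not itself lie in $A$ but is an accumulation point of it. There is no real obstacle here — this is a short lemma whose only subtlety is remembering that the supremum need not be attained and so the continuity hypothesis is genuinely necessary in the second case.
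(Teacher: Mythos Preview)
Your proof is correct and follows essentially the same approach as the paper's: both arguments establish the forward direction trivially and handle the backward direction via the defining property of the supremum together with monotonicity and continuity. The only cosmetic difference is that the paper first shows the supremum is actually attained (i.e., $\beta^*\in\Gamma$) and then applies monotonicity once, whereas you split into the strict case $S>\gamma$ (using monotonicity) and the boundary case $S=\gamma$ (using continuity via a sequence); the underlying ideas are identical.
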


\begin{proof}
Let us define the set $\Gamma:=\{ \beta \in (0, \infty): f(\beta) \leq 0 \}$.
Assume that $f(\gamma) \leq 0$ for some $\gamma>0$. Then, $\gamma \in \Gamma$, and therefore $\sup \Gamma \geq \gamma$.

Conversely.  Suppose that $\sup \Gamma \geq \gamma$ and define $\beta^*:=\sup \Gamma$.
\noindent
If $\sup \Gamma =\infty$, then $f(x)\leq 0,$ for all $x>0$, and in particular for $x=\gamma$.
Now assume that $\beta^* \in (0, \infty)$.
\noindent
We first argue by contradiction that $\beta^* \in \Gamma$.
If $\beta^* \notin \Gamma$, then $f(\beta^*)>0$.
Now, since $f$ is continuous, there exists $\epsilon'>0$ so that $0<f(\beta^*-\epsilon')$.
By the definition of the supremum of a set, we have that, for all $\epsilon>0$, there exists $\beta^{\epsilon} \in \Gamma$ so that $\beta^*-\epsilon<\beta^{\epsilon}$.
Therefore, because $f$ is monotonically increasing, $f(\beta^* -\epsilon)\leq f(\beta^{\epsilon})$.
Hence, $0<f(\beta^* -\epsilon')\leq f(\beta^{\epsilon})$, which contradicts $\beta^{\epsilon} \in \Gamma$.
\noindent
We proceed by showing that $f(\gamma) \leq 0$.
Since $\gamma \leq \beta^*$ and $f$ is monotonically increasing, we have that $f(\gamma) \leq f(\beta^*)$.
However, $\beta^* \in \Gamma$, so $f(\gamma) \leq f(\beta^*)\leq 0$.
\end{proof}

We now recall a well-known characterization of compact sets.
For a proof, see Lemma I.5.6 in Dunford and Schwartz~\cite{DunfordSchwartz58}.

\begin{lemma}\label{thm:FiniteIntersection}
A subset of a topological space is compact if and only if every family of closed sets with the finite intersection property has a nonempty intersection.
\end{lemma}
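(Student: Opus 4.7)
The plan is to prove this classical topological characterization by passing between open covers and closed families via complementation (De Morgan), so that the definition of compactness (every open cover admits a finite subcover) is dualized into a statement about closed sets.

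First, for the forward direction, I would assume that $X$ is compact and that $\{F_i\}_{i\in I}$ is a family of closed subsets of $X$ having the finite intersection property. Arguing by contradiction, suppose $\bigcap_{i\in I} F_i = \emptyset$. Taking complements, $\{X\setminus F_i\}_{i\in I}$ is a family of open sets whose union is $X\setminus \bigcap_{i\in I} F_i = X$, hence an open cover of $X$. By compactness, there exists a finite $I'\subset I$ with $\bigcup_{i\in I'}(X\setminus F_i) = X$, and applying De Morgan once more yields $\bigcap_{i\in I'} F_i = \emptyset$, contradicting the finite intersection property.

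For the reverse direction, I would assume that every family of closed sets with the finite intersection property has nonempty intersection, and let $\{U_i\}_{i\in I}$ be an arbitrary open cover of $X$. Suppose for contradiction that no finite subcover exists. Then the family of closed sets $\{X\setminus U_i\}_{i\in I}$ enjoys the finite intersection property: for any finite $I'\subset I$, the failure of $\{U_i\}_{i\in I'}$ to cover $X$ means $\bigcap_{i\in I'}(X\setminus U_i)=X\setminus\bigcup_{i\in I'}U_i\neq\emptyset$. By hypothesis, $\bigcap_{i\in I}(X\setminus U_i)\neq\emptyset$, which by De Morgan contradicts the fact that $\{U_i\}_{i\in I}$ covers $X$. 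Therefore every open cover admits a finite subcover, i.e., $X$ is compact.

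There is no real obstacle here: this is the textbook equivalence between the two standard formulations of compactness, and the whole argument is essentially bookkeeping with De Morgan's laws. The only subtlety worth flagging is the convention that the empty intersection (over an empty index set) equals the ambient space, but since we apply the finite intersection property only to nonempty finite subfamilies, this convention does not affect the argument.
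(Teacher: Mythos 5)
Your proof is correct and is the standard De Morgan duality argument; the paper does not prove this lemma itself but simply cites Lemma I.5.6 of Dunford and Schwartz, which uses exactly this complementation argument between open covers and closed families. The only thing to note is that since the statement concerns a subset of a topological space, one should formally work in the subspace topology (with relatively closed sets), which is what your argument amounts to when you take $X$ to be that subset.
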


The following theorem is an application of Hahn-Banach theorem, regarding the separation of hyperplanes.
\begin{theorem}\label{th:AppendHahnBanach}
If $\cZ$ and $\cC$ are disjoint closed convex subsets of $\bR^N$, and if $\cZ$ is compact, then there exists a constant $\epsilon$ with $\epsilon>0$, and a continuous linear functional $\varphi \in \bR^N$, so that
\begin{equation*}
  \varphi(c) \leq 0 < \epsilon < \varphi(z)
\end{equation*}
for all $z \in \cZ$ and $c \in \cC$.
\end{theorem}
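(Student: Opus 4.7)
The plan is to reduce the statement to the standard strict separation theorem in finite dimensions, and then exploit the structural properties of $\cC$ (in the way the theorem is used inside the proof of Theorem~\ref{theorem:NGDFTAP}, $\cC$ is in fact the closed convex cone $(-\infty,0]^{N\times M}$; in particular $0\in\cC$ and $\cC$ is stable under multiplication by positive scalars) in order to pin the separating level at $0$.

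First I would check that the Euclidean distance $d := \inf\{\|z-c\|:z\in\cZ,\,c\in\cC\}$ is strictly positive. This is routine: fix any $z_0\in\cZ$ and any $r > d(z_0,\cC)$; the set $\cC \cap \overline{B(z_0,r)}$ is compact, so the infimum defining $d$ is attained at some pair $(z^*,c^*)\in\cZ\times\cC$, and disjointness $\cZ\cap\cC=\emptyset$ forces $d=\|z^*-c^*\|>0$. I would then apply the classical perpendicular-bisector construction to obtain the separating functional: define $\psi(x):=\langle z^*-c^*,\,x\rangle$. The standard convexity argument (using that $c^*$ is the closest point in the convex set $\cC$ to $z^*$, and $z^*$ is the closest point in the convex set $\cZ$ to $c^*$) yields real constants $\alpha<\beta$ such that $\psi(c)\leq\alpha$ for all $c\in\cC$ and $\psi(z)\geq\beta$ for all $z\in\cZ$; equivalently, the hyperplane $\{\psi=\tfrac{\alpha+\beta}{2}\}$ strictly separates $\cC$ from $\cZ$.

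The final step is to upgrade $\alpha$ to $0$. Using the cone structure of $\cC$, for every $c\in\cC$ and every $\lambda>0$ we have $\lambda c\in\cC$, whence $\lambda\psi(c)=\psi(\lambda c)\leq\alpha$; letting $\lambda\to\infty$ rules out $\psi(c)>0$, so $\psi(c)\leq 0$ for every $c\in\cC$. Since $0\in\cC$ gives $\psi(0)=0\leq\alpha$, we may replace $\alpha$ by $0$ without altering any inequality. I then set $\varphi:=\psi$ and
\[
\epsilon := \tfrac{1}{2}\min_{z\in\cZ}\psi(z),
\]
which is strictly positive by compactness of $\cZ$, continuity of $\psi$, and $\psi(z)\geq\beta>0$. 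This delivers the required chain $\varphi(c)\leq 0<\epsilon<\varphi(z)$ for all $c\in\cC$ and $z\in\cZ$. The only substantive step is the Hahn--Banach separation at the very beginning, which is entirely classical in $\bR^N$; the cone reduction at the end is straightforward bookkeeping, so I do not anticipate any serious obstacle.
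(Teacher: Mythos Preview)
Your proposal is correct and follows essentially the same route as the paper: obtain a strict separating hyperplane (you construct it explicitly via the nearest-point pair, whereas the paper simply invokes Theorem~V.2.10 of Dunford--Schwartz), and then use the cone structure of $\cC$ together with $0\in\cC$ to force the separating level down to~$0$. Both arguments thus tacitly rely on the extra hypothesis that $\cC$ is a cone containing the origin, which you rightly flag as implicit in the stated theorem.
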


\begin{proof}
By Theorem V.2.10 in Dunford and Schwartz~\cite{DunfordSchwartz58}, there exists constants $a$ and $\epsilon'$ with $\epsilon'>0$, and a continuous linear functional $\varphi \in \bR^N$, so that
\begin{equation}\label{eq:AppendHB}
  \varphi(x) \leq a- \epsilon' < a \leq \varphi(z)
\end{equation}
for all $z \in \cZ$ and $x \in \cC$.
We now argue that $\varphi(x) \leq 0$ for all $x \in \cC$.
Suppose there exists $a_0>0$ and $x_0 \in \cC$ so that $\varphi(x_0) =
a_0$.
Since $\cC$ is a cone, we have that $\lambda x_0 \in \cC$ for all $\lambda>0$. Thus,
\begin{equation*}
  \sup_{x \in \cC}\varphi(x) \geq \sup_{\lambda > 0} \varphi(\lambda x_0) = \sup_{\lambda>0} \lambda a_0 = + \infty,
\end{equation*}
which contradicts \eqref{eq:AppendHB}, and hence $\varphi(x)\leq0, \ x\in \cC$.
From here, and since $\varphi$ is linear and  $0\in\cC$, it follows that $\sup_{x\in\cC} \varphi(x)=0$. Thus, $a-\epsilon' \geq 0$, and hence $a>0$. Taking $\epsilon=a$ concludes the proof.
\end{proof}
}

\bibliographystyle{alpha}

\begin{thebibliography}{DMW90}

\bibitem[AFP10]{AcciaioFollmerPenner2010}
Acciaio, B.,  H.~F\"{o}llmer and I.~Penner (2010):
\newblock Risk Assessment for Uncertain Cash Flows: Model Ambiguity,
  Discounting Ambiguity, and the Role of Bubbles.
\newblock {\em Finance Stoch.}, 16(4):669--709, 2012.


\bibitem[AF11]{AraiFukasawa2011}
T.~Arai and M.~Fukasawa.
\newblock Convex risk measures for good deal bounds.
\newblock {\em Preprint}, 2011.

\bibitem[BCR12]{BieleckiCialencoRodriguez2012}
T.R. Bielecki, I.~Cialenco, and R.~Rodriguez.
\newblock No-arbitrage pricing theory for dividend-paying securities in
  discrete-time markets with transaction costs.
\newblock {\em Preprint}, 2012.

\bibitem[BCZ11]{BCZ2010}
T.R. Bielecki, I.~Cialenco, and Z.~Zhang.
\newblock Dynamic coherent acceptability indices and their applications to
  finance.
\newblock {\em forthcoming in Math. Finance}, 2011.

\bibitem[BL00]{Bernardo2000}
A.~Bernardo and O.~Ledoit.
\newblock Gain, loss, and asset pricing.
\newblock {\em J. of Polit. Econ.}, 108:144--172, 2000.

\bibitem[BN09]{BionNadal2009a}
J.~Bion-Nadal.
\newblock Bid-ask dynamic pricing in financial markets with transaction costs
  and liquidity risk.
\newblock {\em Journal of Mathematical Economics}, 45(11):738--750, December
  2009.

\bibitem[BS06]{Bjork2006}
T.~Bjork and I.~Slinko.
\newblock Towards a general theory of good-deal bounds.
\newblock {\em Review of Finance}, 10(2):221--260, 2006.

\bibitem[Car09]{CarmonaIndifferencePricing2009}
R.~Carmona, editor.
\newblock {\em Indifference pricing}.
\newblock Princeton Series in Financial Engineering. Princeton University
  Press, Princeton, NJ, 2009.
\newblock Theory and applications.




\bibitem[CGM01]{CarGemanMadan2001}
P.~Carr, H.~Geman, and D.B. Madan.
\newblock Pricing and hedging in incomplete markets.
\newblock {\em J. Finan. Econ.}, 62(1):131--167, 2001.


\bibitem[CDK2006]{CheriditoDelbaenKupper2006}
Cheridito, P., F.~Delbaen, and M.~Kupper.
\newblock Dynamic Monetary Risk Measures for Bounded Discrete-time Processes.
\newblock \emph{Electron. J. Probab.}, 11(3): 57-106, 2006.

\bibitem[Che07a]{Cherny2007e}
A.~Cherny.
\newblock General arbitrage pricing model: I – probability approach.
\newblock In Catherine Donati-Martin, Michel Émery, Alain Rouault, and
  Christophe Stricker, editors, {\em Séminaire de Probabilités XL}, volume
  1899, pages 415--445. Springer Berlin / Heidelberg, 2007.

\bibitem[Che07b]{Cherny2007}
A.S. Cherny.
\newblock Pricing and hedging {E}uropean options with discrete-time coherent
  risk.
\newblock {\em Finance Stoch.}, 11(4):537--569, 2007.

\bibitem[Che07c]{Cherny2007c}
A.S. Cherny.
\newblock Pricing with coherent risk.
\newblock {\em Theory Probab. Appl.}, 52(3):506--540, 2007.

\bibitem[CM06]{Cherny2006c}
A.S. Cherny and D.B. Madan.
\newblock Pricing and hedging in incomplete markets with coherent risk, 2006.

\bibitem[CM09]{ChernyMadan2009}
A.S. Cherny and D.B. Madan.
\newblock New measures for performance evaluation.
\newblock {\em Rev. Finan. Stud.}, 22(7):2571--2606, 2009.

\bibitem[CM10]{MadanCherny2010}
A.~Cherny and D.B. Madan.
\newblock Markets as a counterparty: An introduction to conic finance.
\newblock {\em Int. J. Theor. Appl. Finance}, 13(08):1149--1177, 2010.

\bibitem[CSR00]{Cochrane2000}
J.~Cochrane and J.~Saa-Requejo.
\newblock Beyond arbitrage: Good deal asset price bounds in incomplete markets.
\newblock {\em Journal of Policital Economy}, 108:79--119, 2000.

\bibitem[DMW90]{DalangMortonWillinger1990}
R.C. Dalang, A.~Morton, and W.~Willinger.
\newblock Equivalent martingale measures and no-arbitrage in stochastic
  securities market models.
\newblock {\em Stochastics Stochastics Rep.}, 29(2):185--201, 1990.

\bibitem[DS58]{DunfordSchwartz58}
N.~Dunford and J.T. Schwartz.
\newblock {\em Linear Operators. {I}. General Theory}.
\newblock Interscience Publishers, New York, NY, USA, 1958.

\bibitem[DS94]{DelbaenSchachermayer1994}
F.~Delbaen and W.~Schachermayer.
\newblock A general version of the fundamental theorem of asset pricing.
\newblock {\em Math. Ann.}, 300:463--520, 1994.

\bibitem[DS96]{DelbaenSchachermayer1996}
F.~Delbaen and W.~Schachermayer.
\newblock The fundamental theorem of asset pricing for unbounded stochastic
  processes.
\newblock {\em Mathematische Annalen}, 312:215--250, 1996.

\bibitem[HP81]{HarrisonPliska1981}
J.M. Harrison and S.R. Pliska.
\newblock Martingales and stochastic integrals in the theory of continuous
  trading.
\newblock {\em Stochastic Process. Appl.}, 11(3):215--260, 1981.

\bibitem[JK95]{JouiniKallal1995}
E.~Jouini and H.~Kallal.
\newblock Martingales and arbitrage in securities markets with transaction
  costs.
\newblock {\em J. of Econ. Theory}, 66(1):178--197, 1995.

\bibitem[JK01]{JaschkeKuchler2001}
S.~Jaschke and U.~Kuchler.
\newblock Coherent risk measures and good-deal bounds.
\newblock {\em Finance Stoch.}, 5(2):181--200, 2001.

\bibitem[JS98]{JacodShiryaev1998}
J.~Jacod and A.N. Shiryaev.
\newblock Local martingales and the fundamental asset pricing theorems in the
  discrete-time case.
\newblock {\em Finance Stoch.}, 2(3):259--273, 1998.

\bibitem[Kij03]{Kijima2003}
M.~Kijima.
\newblock {\em Stochastic processes with applications to finance}.
\newblock Chapman \& Hall/CRC, Boca Raton, FL, 2003.

\bibitem[KK94]{KabanovKramkov1994}
Y.M. Kabanov and D.O. Kramkov.
\newblock No-arbitrage and equivalent martingale measures: an elementary proof
  of the {H}arrison--{P}liska theorem.
\newblock {\em Theory Probab. Appl.}, 39(3):523--527, 1994.

\bibitem[KRS02]{KabanovRasonyiStricker2002}
Y.M. Kabanov, M.~R{\'a}sonyi, and C.~Stricker.
\newblock No-arbitrage criteria for financial markets with efficient friction.
\newblock {\em Finance Stoch.}, 6(3):371--382, 2002.

\bibitem[KS01a]{KabanovStricker}
Y.M. Kabanov and C.~Stricker.
\newblock The {H}arrison--{P}liska arbitrage pricing theorem under transaction
  costs.
\newblock {\em J. Math.Econom.}, 35(2):185--196, 2001.

\bibitem[KS01b]{KabanovStricker2001}
Y.M. Kabanov and C.~Stricker.
\newblock A teachers' note on no-arbitrage criteria.
\newblock In {\em S\'eminaire de {P}robabilit\'es, {XXXV}}, volume 1755 of {\em
  Lecture Notes in Math.}, pages 149--152. Springer Berlin, 2001.

\bibitem[KS07]{KloppelSchweitzer2007}
S.~Kl\"oppel and M.~Schweitzer.
\newblock Dynamic utility-based good deal bounds.
\newblock {\em Statistics and Decisions}, 25:285–309, 2007.

\bibitem[MPS11]{MadanPisSch2011}
D.B. Madan, M.~Pistorius, and W.~Schoutens.
\newblock The valuation of structured products using {M}arkov chain models.
\newblock {\em Quantitative Finance}, 2011.

\bibitem[MS11a]{MadanCoconut2011}
D.B. Madan and W.~Schoutens.
\newblock Conic coconuts: the pricing of contingent capital notes using conic
  finance.
\newblock {\em Mathematics and Financial Economics}, 4(2):87--106, 2011.

\bibitem[MS11b]{MadanSchoutensEq2011}
D.B Madan and W.~Schoutens.
\newblock Structured products equilibria in conic two price markets.
\newblock {\em Preprint}, 2011.

\bibitem[Pli97]{Pliska}
S.R. Pliska.
\newblock {\em Introduction to Mathematical Finance: Discrete Time Models}.
\newblock Blackwell Publishers, first edition, 1997.

\bibitem[PSC10]{PinarSalihCamcl2010}
M.\c{C}. Pinar, A.~Salih, and A.~Camci.
\newblock Expected gain-loss pricing and hedging of contingent claims in
  incomplete markets by linear programming.
\newblock {\em European Journal of Operational Research}, 201(3):770--785,
  2010.

\bibitem[RGS12]{RosazzaGianinSgarra2012}
E.~Rosazza~Gianin and E.~Sgarra.
\newblock Acceptability indexes via $g$-expectations: an application to
  liquidity risk.
\newblock {\em Preprint}, 2012.

\bibitem[Rie04]{Riedel2004}
F. Riedel (2004):
\newblock Dynamic Coherent Risk Measures.
\newblock \emph{Stochastic Process. Appl.}, 112(2): 185--200, 2004.

\bibitem[Rog94]{Rogers1994}
L.C.G. Rogers.
\newblock Equivalent martingale measures and no-arbitrage.
\newblock {\em Stochastics Stochastics Rep.}, 51(1-2):41--49, 1994.

\bibitem[RSE05]{Roorda2005a}
B.~Roorda, J.M. Schumacher, and J.~Engwerda.
\newblock Coherent acceptability measures in multiperiod models.
\newblock {\em Math. Finance}, 15(4):589--612, 2005.

\bibitem[Sch92]{Schachermayer1992}
W.~Schachermayer.
\newblock A {H}ilbert space proof of the fundamental theorem of asset pricing
  in finite discrete time.
\newblock {\em Insurance Math. Econom.}, 11(4):249--257, 1992.

\bibitem[Sch04]{Schachermayer2004}
W.~Schachermayer.
\newblock The fundamental theorem of asset pricing under proportional
  transaction costs in finite discrete time.
\newblock {\em Math. Finance}, 14(1):19--48, 2004.

\bibitem[Sha64]{Sharpe1964a}
W.F. Sharpe.
\newblock Capital asset prices: A theory of market equilibrium under conditions
  of risk.
\newblock {\em Journal of Finance}, 19:425--442, 1964.

\bibitem[SP94]{SortinoPrice1994}
F. A.~Sortino and L. N. Price (1994):
\newblock Performance Measurement in a Downside Risk Framework.
\newblock \emph{The Journal of Investing}, 3(3), 59--64.

\bibitem[Sta04]{Staum2004}
J.~Staum.
\newblock Fundamental theorems of asset pricing for good deal bounds.
\newblock {\em Mathematical Finance}, 14(2):141--161, 2004.

\bibitem[Sta07]{Staum2007}
J.~Staum.
\newblock Incomplete markets.
\newblock In J.~R. Birge and V.~Linetsky, editors, {\em Handbooks in Operations
  Research and Management Science}, volume~15, pages 511--563. Elsevier, 2007.

\bibitem[Zha11]{ZhangPhDThesis2011}
Z.~Zhang.
\newblock {\em Dynamic Coherent Acceptability Indices and their Application in
  Finance}.
\newblock PhD thesis, Illinois Institute of Technology, 2011.

\end{thebibliography}

\end{document}